\theoremstyle{plain}
\newtheorem{theorem}{Theorem}
\newtheorem{observation}[theorem]{Observation}
\newtheorem{lemma}[theorem]{Lemma}
\newtheorem{corollary}[theorem]{Corollary}
\newtheorem*{thm*}{Theorem}
\newcommand{\naesat}{\textsc{Not-All-Equal Sat}\xspace}
\newcommand{\naeksat}[1]{\textsc{Not-All-Equal #1-Sat}\xspace}
\newcommand{\posnaesat}{\textsc{Positive Not-All-Equal Sat}\xspace}
\newcommand{\posnaeksat}[1]{\textsc{Positive Not-All-Equal #1-Sat}\xspace}
\newcommand{\planaeksat}[1]{\textsc{Strongly Planar Not-All-Equal #1-Sat}\xspace}
\newcommand{\planaesat}{\textsc{Strongly Planar Not-All-Equal Sat}\xspace}
\newcommand{\restrictedsat}{\textsc{Restricted Strongly Planar Positive Not-All-Equal 3-Sat}\xspace}
\def\cC{\mathcal{C}\xspace}
\def\cV{\mathcal{V}\xspace}
\def\N{\mathbb{N}\xspace}
\def\I{\mathcal{I}\xspace}
\def\O{\mathcal{O}\xspace}
\def\B{\textbf{B}\xspace}
\DeclareMathOperator{\gndi}{gndi}
\DeclareMathOperator{\gap}{gap}
\DeclareMathOperator{\surj}{surj}
\DeclareRobustCommand{\stirling}{\genfrac\{\}{0pt}{}}
\begin{document}

\title{Intersecting edge distinguishing colorings of hypergraphs}
\author{Karolina Okrasa \and Paweł Rzążewski}
\date{Faculty of Mathematics and Information Sciences,\\Warsaw University of Technology, Warsaw, Poland\\ \ \texttt{okrasak@student.mini.pw.edu.pl, p.rzazewski@mini.pw.edu.pl}}

\maketitle
\begin{abstract}
An edge labeling of a graph distinguishes neighbors by sets (multisets, resp.), if for any two adjacent vertices $u$ and $v$ the sets (multisets, resp.) of labels appearing on edges incident to $u$ and $v$ are different. In an analogous way we define total labelings distinguishing neighbors by sets or multisets: for each vertex, we consider labels on incident edges and the label of the vertex itself.

In this paper we show that these problems, and also other problems of similar flavor, admit an elegant and natural generalization as a hypergraph coloring problem. An ieds-coloring (iedm-coloring, resp.) of a hypergraph is a vertex coloring, in which the sets (multisets, resp.) of colors, that appear on every pair of intersecting edges are different.
We show upper bounds on the size of lists, which guarantee the existence of an ieds- or iedm-coloring, respecting these lists. The proof is essentially a randomized algorithm, whose expected time complexity is polynomial.
As corollaries, we derive new results concerning the list variants of graph labeling problems, distinguishing neighbors by sets or multisets. We also show that our method is robust and can be easily extended for different, related problems.

We also investigate a close connection between edge labelings of bipartite graphs, distinguishing neighbors by sets, and the so-called property \textbf{B} of hypergraphs. We discuss computational aspects of the problem  and present some classes of bipartite graphs, which admit such a labeling using two labels.
\end{abstract}


\section{Introduction} \label{sec:intro}
Among the variants of graph coloring, there is a prominent family of problems, where the coloring of vertices of $G$ is not given explicitly, but derived from some other function. Usually, this function is some labeling of edges or vertices of $G$, and the color of the vertices is based on the labels assigned to incident edges or adjacent vertices (or both in case of total labeling).

Perhaps the most famous problem of this kind was proposed by Karoński, Łuczak, and Thomason~\cite{KARONSKI2004151}. They were labeling the edges of a graph $G$ with integers $\{1,2,\ldots,k\}$, so that for every two adjacent vertices $u,v$, the {\em sums} of labels assigned to edges incident to each of $u$ and $v$ are different.  We call such a labeling {\em neighbor sum distinguishing}. Observe a neighbor sum distinguishing labeling exists if and only if $G$ has no isolated edge, such graphs will be called {\em nice}. Karoński, Łuczak, and Thomason showed that each nice graph has a neighbor sum distinguishing labeling with 183 labels, if we allow real numbers as labels. They also showed that if the minimum degree of $G$ is large, 30 (real) labels suffice, and conjectured that every nice graph has a neighbor sum distinguishing  labeling with labels $\{1,2,3\}$. This problem, known as the {\em 1-2-3 conjecture}, raised significant interest in the graph theory community. 
Addario-Berry, Dalal, McDiarmid, Reed, and Thomason~\cite{Addario-Berry2007} showed that integer labels $\{1,2,\ldots,30\}$ are sufficient to find a neighbor sum distinguishing labeling of any nice graph. The upper bound on the largest label was subsequently improved: to 16 by Addario-Berry, Dalal, and Reed~\cite{ADDARIOBERRY20081168}, then to 13 by Wang and Yu~\cite{Wang2008}, and then to 6 by Kalkowski, Karoński, and Pfender \cite{KALKOWSKI6}. Currently best bound for general graphs is 5 and was shown by Kalkowski, Karoński, and Pfender~\cite{KALKOWSKI2010347}.
Bartnicki, Grytczuk, and Niwczyk \cite{DBLP:journals/jgt/BartnickiGN09} proposed a stronger conjecture, that for any assignment of 3-element lists to edges of $G$, one can find a neighbor sum distinguishing labeling, such that every edge gets a label from its list. A constant bound on the size of lists that guarantee the existence of such a labeling is known for some special classes of graphs, e.g. complete graphs, complete bipartite graphs, or nice trees \cite{DBLP:journals/jgt/BartnickiGN09}.

Dudek and Wajc \cite{DudekDMTCS} considered the computational problem of deciding whether a given nice graph has a neighbor sum distinguishing edge labeling with labels $\{a,b\}$. They proved that the problem is NP-complete for $\{a,b\}=\{0,1\}$ and for $\{a,b\}=\{1,2\}$. This was later extended by Dehghan, Sadeghi, and Ahadi \cite{DEHGHAN201325} to all pairs $\{a,b\}$, even if the input graph is cubic.

Let us have a closer look at the already mentioned result by Karoński, Łuczak, and Thomason~\cite{KARONSKI2004151}, that a constant number of real labels is sufficient to distinguish adjacent vertices by sums of labels on incident edges.
We will say that an edge labeling {\em distinguishes neighbors by multisets}, if for every pair $u,v$ of adjacent vertices, the 
{\em multisets} of labels appearing on the edges incident to $u$ and $v$ are different, see \autoref{fig:edge-lab} a). 
Notice that different sums always imply different multisets, but it is possible to have different multisets that give the same sum.
The authors of \cite{KARONSKI2004151} proved that every graph has an edge labeling, which distinguishes neighbors by multisets, and uses a constant number of labels and then, by choosing labels that satisfy certain independence properties, one can obtain a labeling that distinguishes sums.
Edge labelings distinguishing neighbors by multisets were further studied by Addario-Berry, Aldred, Dalal, and Reed~\cite{ADDARIOBERRY2005237}, who showed every nice graph has such a labeling using four labels and,  if the minimum degree of $G$ is large enough, three labels suffice. 
Observe that if $G$ is regular, then an edge labeling using two labels is neighbor sum distinguishing if and only if it distinguishes neighbors by multisets. Thus, by the already mentioned result by Dehghan, Sadeghi, and Ahadi \cite{DEHGHAN201325}, it is NP-complete to decide whether a given graph has an edge labeling distinguishing neighbors by multisets.

Instead of considering sums or multisets of labels appearing on edges incident to adjacent vertices, one can also consider {\em sets}. We say that an edge labeling {\em distinguishes neighbors by sets} if for any two adjacent vertices $v$ and $w$, the sets of labels on edges incident to $u$ and $v$ are different, see \autoref{fig:edge-lab} b).
Observe that different sets of colors always imply different multisets, but not the other way around.
On the other hand, sets are sums are incomparable: one may have different multisets that give the same sum and different sets, or the same set and different sums.
By the {\em generalized neighbor-distinguishing index} of a graph $G$, denoted by $\gndi(G)$, we mean the minimum number of labels in an edge labeling distinguishing neighbors by sets. This parameter was introduced by Gy\H{o}ri, Hor\v{n}\'ak, Palmer, and Woźniak \cite{GYORI2008827}, who proved that for every nice graph $G$ we have $\gndi(G) \leq 2\lceil \log_2 \chi(G) \rceil +1$. This bound was later refined by Hor\v{n}ak and Sotak \cite{DBLP:journals/dm/HornakS10} and by Gy\H{o}ri and Palmer~\cite{DBLP:journals/dm/GyoriP09}, who proved that if $\chi(G) \geq 3$, then $\gndi(G) = \lceil \log_2 \chi(G) \rceil +1$. Hor\v{n}\'ak and Woźniak considered a list variant of the problem, where each edge is equipped with a list of possible labels and we ask for the existence of a labeling distinguishing neighbors by multisets and respecting these lists. They proved tight bounds on the size of lists, that guarantee the existence of such a labeling in paths and cycles, and showed that for trees lists of size three are sufficient. They also showed that lists of size three may be necessary, even for trees $T$ with $\gndi(T)=2$.

\begin{figure}[h]
\begin{center}
\begin{subfigure}[b]{0.5\textwidth}
\begin{tikzpicture}
\node at (-1,2.5) {(a)};
\draw[very thick, draw=blue!50] (1,0)--(2,0);
\draw[very thick, draw=red!50] (2,0)--(3,0);
\draw[very thick, draw=red!50] (1,2)--(2,2);
\draw[very thick, draw=red!50] (2,2)--(3,2);
\draw[very thick, draw=red!50] (1,0)--(1,2);
\draw[very thick, draw=red!50] (2,0)--(2,2);
\draw[very thick, draw=red!50] (3,0)--(3,2);
\draw[very thick, draw=blue!50] (1,0)--(0,1);
\draw[very thick, draw=blue!50] (0,1)--(1,2);
\draw[very thick, draw=red!50] (4,1)--(3,0);
\draw[very thick, draw=blue!50] (4,1)--(3,2);
\draw[very thick, draw=blue!50] (0,1)--(4,1);
\filldraw[fill=white] (0,1) circle(4pt);
\filldraw[fill=white] (1,0) circle(4pt);
\filldraw[fill=white] (3,0) circle(4pt);
\filldraw[fill=white] (2,0) circle(4pt);
\filldraw[fill=white] (2,2) circle(4pt);
\filldraw[fill=white] (1,2) circle(4pt);
\filldraw[fill=white] (3,2) circle(4pt);
\filldraw[fill=white] (4,1) circle(4pt);
\draw [fill=blue!50] (-0.4,0.9) rectangle (-0.2,1.1);
\draw [fill=blue!50] (-0.4,1.1) rectangle (-0.2,1.3);
\draw [fill=blue!50] (-0.4,0.7) rectangle (-0.2,0.9);

\draw [fill=red!50] (0.9,-0.4) rectangle (1.1,-0.2);
\draw [fill=blue!50] (0.9,-0.6) rectangle (1.1,-0.4);
\draw [fill=blue!50] (0.9,-0.8) rectangle (1.1,-0.6);

\draw [fill=red!50] (1.9,-0.4) rectangle (2.1,-0.2);
\draw [fill=red!50] (1.9,-0.6) rectangle (2.1,-0.4);
\draw [fill=blue!50] (1.9,-0.8) rectangle (2.1,-0.6);

\draw [fill=red!50] (2.9,-0.4) rectangle (3.1,-0.2);
\draw [fill=red!50] (2.9,-0.6) rectangle (3.1,-0.4);
\draw [fill=red!50] (2.9,-0.8) rectangle (3.1,-0.6);

\draw [fill=blue!50] (0.9,2.4) rectangle (1.1,2.2);
\draw [fill=red!50] (0.9,2.6) rectangle (1.1,2.4);
\draw [fill=red!50] (0.9,2.8) rectangle (1.1,2.6);

\draw [fill=red!50] (1.9,2.4) rectangle (2.1,2.2);
\draw [fill=red!50] (1.9,2.6) rectangle (2.1,2.4);
\draw [fill=red!50] (1.9,2.8) rectangle (2.1,2.6);

\draw [fill=blue!50] (2.9,2.4) rectangle (3.1,2.2);
\draw [fill=red!50] (2.9,2.6) rectangle (3.1,2.4);
\draw [fill=red!50] (2.9,2.8) rectangle (3.1,2.6);

\draw [fill=blue!50] (4.4,0.9) rectangle (4.2,1.1);
\draw [fill=red!50] (4.4,1.1) rectangle (4.2,1.3);
\draw [fill=blue!50] (4.4,0.7) rectangle (4.2,0.9);
\end{tikzpicture}
\end{subfigure}%
\begin{subfigure}[b]{0.5\textwidth}
\begin{tikzpicture}
\node at (-1,2.5) {(b)};
\draw[very thick, draw=blue!50] (1,0)--(2,0);
\draw[very thick, draw=red!50] (2,0)--(3,0);
\draw[very thick, draw=red!50] (1,2)--(2,2);
\draw[very thick, draw=red!50] (2,2)--(3,2);
\draw[very thick, draw=green!50] (1,0)--(1,2);
\draw[very thick, draw=red!50] (2,0)--(2,2);
\draw[very thick, draw=green!50] (3,0)--(3,2);
\draw[very thick, draw=blue!50] (1,0)--(0,1);
\draw[very thick, draw=blue!50] (0,1)--(1,2);
\draw[very thick, draw=red!50] (4,1)--(3,0);
\draw[very thick, draw=blue!50] (4,1)--(3,2);
\draw[very thick, draw=blue!50] (0,1)--(4,1);

\filldraw[fill=white] (0,1) circle(4pt);
\filldraw[fill=white] (1,0) circle(4pt);
\filldraw[fill=white] (3,0) circle(4pt);
\filldraw[fill=white] (2,0) circle(4pt);
\filldraw[fill=white] (2,2) circle(4pt);
\filldraw[fill=white] (1,2) circle(4pt);
\filldraw[fill=white] (3,2) circle(4pt);
\filldraw[fill=white] (4,1) circle(4pt);

\draw [fill=blue!50] (-0.4,0.9) rectangle (-0.2,1.1);

\draw [fill=green!50] (0.9,-0.4) rectangle (1.1,-0.2);
\draw [fill=blue!50] (0.9,-0.6) rectangle (1.1,-0.4);

\draw [fill=red!50] (1.9,-0.4) rectangle (2.1,-0.2);
\draw [fill=blue!50] (1.9,-0.6) rectangle (2.1,-0.4);

\draw [fill=red!50] (2.9,-0.4) rectangle (3.1,-0.2);
\draw [fill=green!50] (2.9,-0.6) rectangle (3.1,-0.4);

\draw [fill=blue!50] (0.9,2.4) rectangle (1.1,2.2);
\draw [fill=green!50] (0.9,2.6) rectangle (1.1,2.4);
\draw [fill=red!50] (0.9,2.8) rectangle (1.1,2.6);

\draw [fill=red!50] (1.9,2.4) rectangle (2.1,2.2);

\draw [fill=blue!50] (2.9,2.4) rectangle (3.1,2.2);
\draw [fill=green!50] (2.9,2.6) rectangle (3.1,2.4);
\draw [fill=red!50] (2.9,2.8) rectangle (3.1,2.6);

\draw [fill=blue!50] (4.4,0.9) rectangle (4.2,1.1);
\draw [fill=red!50] (4.4,1.1) rectangle (4.2,1.3);
\end{tikzpicture}
\end{subfigure}
\caption{Edge labeling distinguishing neighbors (a) by multisets, (b) by sets. The boxes next to each vertex denote (a) the multiset and (b) the set of labels appearing on edges incident to that vertex.}
\label{fig:edge-lab}
\end{center}
\end{figure}
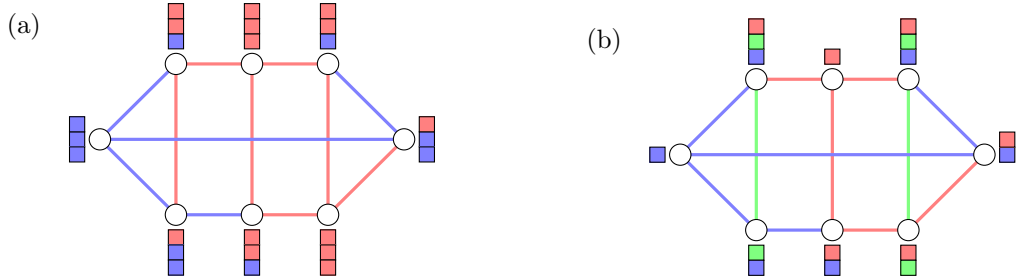

Inspired by the 1-2-3 conjecture, Woźniak and Przybyło~\cite{DBLP:journals/dmtcs/PrzybyloW10} suggested a closely related problem considering total labelings of $G$, i.e., labelings of edges and vertices. They considered the problem of finding a total labeling using minimum number of labels, which distinguishes adjacent vertices by {\em sums} of labels appearing on incident edges and the vertex itself. Such a labeling is called {\em neighbor sum distinguishing total labeling}.
Woźniak and Przybyło conjectured that every graph has a neighbor sum distinguishing total labeling, using labels $\{1,2\}$ only, this problem is known as the {\em 1-2 conjecture}.
Kalkowski~\cite{Kalkowski12} showed that each graph has a neighbor sum distinguishing total labeling with labels $\{1,2,3\}$, in which the label 3 does not appear on any vertex. 
Wong and Zhu~\cite{DBLP:journals/jgt/WongZ11}, and Przybyło and Woźniak~\cite{DBLP:journals/combinatorics/PrzybyloW11}  conjectured that the 1-2 conjecture holds even in the list variant. Recently, Wong and Zhu~\cite{Wong2016} showed a list version of the theorem by Kalkowski: a list neighbor sum distinguishing total labeling exists if each vertex  has a list of size 2 and every edge has a list of size 3.
Observe that in an analogous way one may define total labelings {\em distinguishing neighbors by multisets}  and {\em distinguishing neighbors by sets}, see \autoref{fig:total}.

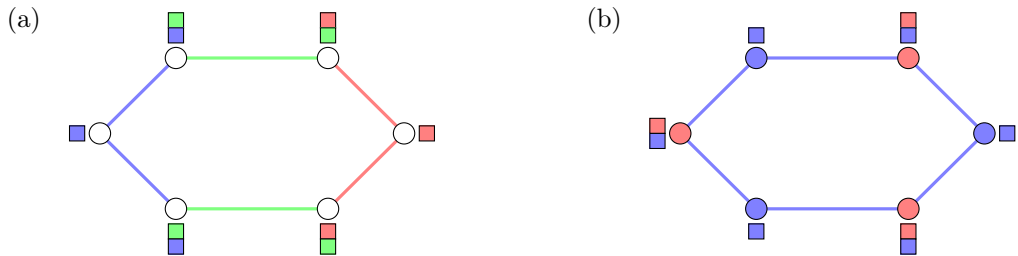
\begin{figure}[h]
\begin{center}
\begin{subfigure}[b]{0.5\textwidth}
\begin{tikzpicture}
\node at (-1,2.5) {(a)};
\draw[very thick, draw=blue!50] (1,0)--(0,1);
\draw[very thick, draw=blue!50] (0,1)--(1,2);
\draw[very thick, draw=green!50] (1,2)--(3,2);
\draw[very thick, draw=red!50] (4,1)--(3,2);
\draw[very thick, draw=red!50] (4,1)--(3,0);
\draw[very thick, draw=green!50] (1,0)--(3,0);
\filldraw[fill=white] (1,0) circle(4pt);
\filldraw[fill=white] (0,1) circle(4pt);
\filldraw[fill=white] (3,0) circle(4pt);
\filldraw[fill=white] (1,2) circle(4pt);
\filldraw[fill=white] (3,2) circle(4pt);
\filldraw[fill=white] (4,1) circle(4pt);

\draw [fill=blue!50] (-0.4,0.9) rectangle (-0.2,1.1);

\draw [fill=red!50] (4.2,0.9) rectangle (4.4,1.1);

\draw [fill=green!50] (0.9,-0.4) rectangle (1.1,-0.2);
\draw [fill=blue!50] (0.9,-0.6) rectangle (1.1,-0.4);

\draw [fill=green!50] (0.9,2.4) rectangle (1.1,2.6);
\draw [fill=blue!50] (0.9,2.2) rectangle (1.1,2.4);

\draw [fill=red!50] (2.9,-0.4) rectangle (3.1,-0.2);
\draw [fill=green!50] (2.9,-0.6) rectangle (3.1,-0.4);

\draw [fill=red!50] (2.9,2.4) rectangle (3.1,2.6);
\draw [fill=green!50] (2.9,2.2) rectangle (3.1,2.4);
\end{tikzpicture}
\end{subfigure}%
\begin{subfigure}[b]{0.5\textwidth}
\begin{tikzpicture}
\node at (-1,2.5) {(b)};
\draw[very thick, draw=blue!50] (1,0)--(0,1);
\draw[very thick, draw=blue!50] (0,1)--(1,2);
\draw[very thick, draw=blue!50] (1,2)--(3,2);
\draw[very thick, draw=blue!50] (4,1)--(3,2);
\draw[very thick, draw=blue!50] (4,1)--(3,0);
\draw[very thick, draw=blue!50] (1,0)--(3,0);

\filldraw[fill=blue!50] (1,0) circle(4pt);
\filldraw[fill=red!50] (0,1) circle(4pt);
\filldraw[fill=red!50] (3,0) circle(4pt);
\filldraw[fill=blue!50] (1,2) circle(4pt);
\filldraw[fill=red!50] (3,2) circle(4pt);
\filldraw[fill=blue!50] (4,1) circle(4pt);

\draw [fill=red!50] (-0.4,1) rectangle (-0.2,1.2);
\draw [fill=blue!50] (-0.4,1) rectangle (-0.2,0.8);

\draw [fill=blue!50] (4.2,0.9) rectangle (4.4,1.1);

\draw [fill=blue!50] (0.9,-0.4) rectangle (1.1,-0.2);

\draw [fill=blue!50] (0.9,2.2) rectangle (1.1,2.4);

\draw [fill=red!50] (2.9,-0.4) rectangle (3.1,-0.2);
\draw [fill=blue!50] (2.9,-0.6) rectangle (3.1,-0.4);

\draw [fill=red!50] (2.9,2.4) rectangle (3.1,2.6);
\draw [fill=blue!50] (2.9,2.2) rectangle (3.1,2.4);
\end{tikzpicture}
\end{subfigure}
\caption{(a) Edge and (b) total labeling distinguishing neighbors by sets.}
\label{fig:total}
\end{center}
\end{figure}

Finally, let us mention a similar problem, considered by Seamone and Stevens~\cite{DBLP:journals/dmtcs/SeamoneS13}.
Let $G$ be a graph and suppose that its edges are linearly ordered. A labeling of edges {\em distinguishes neighbors by sequences}, if the {\em sequences} of colors (implied by the global ordering of edges), appearing on edges incident to adjacent vertices, are different.

Seamone and Stevens showed that if the ordering of edges can be chosen, then for any nice graph lists of size 2 are sufficient to find a list edge labeling, distinguishing neighbors by sequences. If the ordering of edges is fixed, the lists of size 3 suffice, provided that the minimum degree is large enough, compared to the maximum degree. In particular, lists of size 3 are sufficient for a $k$-regular graph with $k \geq 6$.

Another variants of the mentioned problems have also been studied. 
For example, one can ask for an edge-labeling, which distinguishes neighbors by sums/multisetes/sums, and is also required to be proper: for distinguishing neighbors by sums of labels, see e.g. Przybyło~\cite{DBLP:journals/rsa/Przybylo15,DBLP:journals/dam/Przybylo16}, Bonamy and Przybyło~\cite{DBLP:journals/jgt/BonamyP17}, Hocquard and Przybyło~\cite{DBLP:journals/gc/HocquardP17};
for distinguishing neighbors by sets of labels, see  Zhang, Liu, and Wang~\cite{zhang2015color}, Balister, Gy\H{o}ri, Lehel, and Schelp~\cite{DBLP:journals/siamdm/BalisterGLS07},  Edwards, Hor\v{n}\'{a}k, and Woźniak~\cite{Edwards2006}, Bonamy, Bousquet, and Hocquard~\cite{10.1007/978-88-7642-475-5_50} or Hatami~\cite{hatami2005delta};
for list edge labelings distinguishing neighbors by multisets see a recent exciting result by Kwaśny and Przybyło~\cite{KwasnyPrzybylo}.

There is also some work on edge labelings, that distinguish neighbors by {\em products} of labels (see Skowronek-Kaziów~\cite{DBLP:journals/ipl/Skowronek-Kaziow08,DBLP:journals/ipl/Skowronek-Kaziow12,DBLP:journals/jco/Skowronek-Kaziow17}).
For a more detailed overview on the related problems, we refer the reader to the recent book by Zhang~\cite{zhang2015color}, and a survey by Seamone~\cite{DBLP:journals/corr/abs-1211-5122}.

\subsection{Our contribution}

Let us introduce the main character of this paper.
For a hypergraph $H$, we say that a vertex coloring is \emph{intersecting edge distinguishing by multisets} (or, in short, is an \emph{iedm-coloring}), if the \emph{multisets} of colors appearing in intersecting edges are different. 
Similarly, a vertex coloring is \emph{intersecting edge distinguishing by sets} (or, in short, is an \emph{ieds-coloring}), if the \emph{sets} of colors appearing in intersecting edges are different. 

It is perhaps interesting to mention the special case of graphs, i.e., 2-uniform hypergraphs. It is straightforward to verify that in this case iedm- and ieds-colorings are equivalent. Moreover, a vertex coloring distinguishes intersecting edges by sets (or, equivalently, multisets) if and only if no two vertices with a common neighbor receive the same color. Such a concept is already known in graph theory and usually referred to as an \emph{$L(0,1)$-labeling}. The motivation to study this kind of a coloring came from the hidden terminal problem in telecommunication~\cite{Bertossi:1995:CAH:225996.226008,DBLP:journals/tcom/Makansi87}. Optimal (i.e., using the minimum number of colors) $L(0,1)$-labelings are known for simple classes of graphs, like paths, cycles, grids (see Makansi~\cite{DBLP:journals/tcom/Makansi87} and Jin, Yeh~\cite{NAV:NAV20041}), hypercubes (see Wan~\cite{Wan1997}), and complete binary trees (see Bertossi and Bonuccelli~\cite{Bertossi:1995:CAH:225996.226008}). Bodlaender, Kloks, Tan, van Leeuwen~\cite{DBLP:journals/cj/BodlaenderKTL04} showed some bounds on the number of colors required to find an $L(0,1)$-labeling of $G$ for some special classes of graphs: bounded-treewidth graphs, permutation graphs, outerplanar graphs, split graphs, and bipartite graphs. On the complexity side, it is known that the decision problem whether an input graph has an $L(0,1)$-labeling with 3 colors is NP-complete for planar graphs~\cite{Bertossi:1995:CAH:225996.226008} and split graphs ~\cite{DBLP:journals/cj/BodlaenderKTL04}. The parameterized complexity of this problem was considered by Fiala, Golovach, and Kratochv\'il~\cite{FIALA20112513}, who showed that the problem is $W[1]$-hard, when parameterized by treewidth, but FPT, when parameterized by the vertex cover number (we refer the reader to the book by Cygan {\em et al.}~\cite{DBLP:books/sp/CyganFKLMPPS15} for more information about parameterized complexity).

In \autoref{sec:hyper} we argue that colorings of hypergraphs, that distinguishing intersecting edges by sets or multisets, are natural common generalization of edge and total labelings distinguishing neighbors by sets and multisets.
We are interested in the list variants of both problems. As the main result, in \autoref{thm:main} we show upper bounds on the size of lists that guarantee the existence of a list ieds-coloring or a list iedm-coloring of a given hypergraph.

The main part of the paper, i.e., \autoref{sec:entropy}, is devoted to the proof of \autoref{thm:main}. The proof uses the so-called entropy compression method, which is a variant of the Lov\'{a}sz Local Lemma~\cite{erdos1975problems}. The Local Lemma is essentially non-constructive, but several algorithmic versions have also been developed (see Alon~\cite{alon1991parallel}, Molloy, Reed~\cite{molloy1998further}, and Moser, Tardos~\cite{moser2010constructive}). Entropy compression originates in the algorithmic version of the Local Lemma by Moser and Tardos, and was first used by Grytczuk, Kozik, and Micek~\cite{DBLP:journals/rsa/GrytczukKM13} to study the list version of the problem of Thue. Then the method  was successfully applied in many other contexts (see e.g. Esperet, Parreau~\cite{esperet2013acyclic} or Dujmovi{\'c}, Joret, Kozik, and Wood~\cite{dujmovic2015nonrepetitive}). The main idea of our proof is similar to the one of Bosek, Czerwiński, Grytczuk, and Rzążewski~\cite{BosekAADM}, but there are two significant differences. First, the authors of~\cite{BosekAADM} considered the so-called {\em harmonious colorings}, where {\em all} pairs of edges need to be distinguished.  Second, they were considering colorings in which no edge contained two vertices in the same color. If we drop this restriction, distinguishing edges by sets is significantly more difficult and makes the argument more complicated.

The proof of \autoref{thm:main} is essentially a randomized algorithm which finds a list ieds- or iedm-coloring of a given hypergraph. In \autoref{sec:numiters} we consider its computational complexity and show that the expected number of steps in the execution of the algorithm is polynomial.

In \autoref{sec:special} we discuss the applicability of \autoref{thm:main} and the method used in the proof.
As corollaries from Theorem \ref{thm:main}, in \autoref{sec:corollaries} we obtain several bounds on the size of lists, which guarantee the existence of a list edge/total labeling, distinguishing neighbors of a regular graph by sets or multisets.
In particular, in Corollaries \ref{cor:edge-set} and \ref{cor:total-set} we show that if $k$ is sufficiently large, then lists of size $0.54k$ are sufficient to find an edge or total labeling distinguishing neighbors by sets.
In case of distinguishing by multisets, in Corollaries \ref{cor:edge-multiset} and \ref{cor:total-multiset} we prove that lists of size $0.37k$ suffice, if $k$ is large enough. We also consider the case of the so-called configurations.
In \autoref{sec:sequences}, we show that our method can be easily extended and adapted to other problem of similar kind. In particular, we show how to construct an algorithm, which finds a list coloring of vertices of a hypergraph, distinguishing intersecting edges by sequences. We omit most of the details of the proof, as it is essentially the same as the proof of \autoref{thm:main}. We focus on highlighting the crucial issues that have to be considered, when adapting our approach to a new  problem. As a side result, we improve the result of Seamone and Stevens~\cite{DBLP:journals/dmtcs/SeamoneS13} for regular graphs. In particular, we show that given a $k$-regular graph with $k \geq 10$, with a fixed ordering of edges, then lists of size 2 are sufficient to choose a list edge labeling, in which every two adjacent vertices have distinct sequences of colors appearing on incident edges.

In \autoref{sec:gndi}, we investigate an interesting relation between the generalized neighbor-distinguishing index of bipartite graphs and two well-known problems, i.e., a variant of the satisfiability problem called \naesat, and the so-called property \B of hypergraphs.
The paper is concluded in \autoref{sec:conclusion} with several open problems and suggestions for future work.

\section{Preliminaries} \label{sec:prelim}
For an integer $n$, by $[n]$ we denote the set $\{1,2,\ldots,n\}$.
By $f_n$ we denote the number of total preorders of $[n]$. In other words, $f_n$ is the number of orderings of $[n]$ with possible ties.
Observe that 
\begin{equation}\label{fubini}
f_n = \sum_{i=0}^n i! \;\stirling{n}{i},
\end{equation} where $\stirling{n}{i}$ denotes the Stirling number of the second kind, i.e., the number of partitions of $[n]$ into $i$ non-empty subsets. The value $f_n$ is sometimes called an \emph{ordered Bell number} or a \emph{Fubini number} (see the corresponding OEIS entry \cite{oeis}).
For any function $\varphi \colon X \to Y$, and any subset $X' \subseteq X$, by $\varphi\langle X' \rangle$ we denote the \emph{multiset} of images of elements of $X'$. This is in contrast with the usual notation $\varphi(X')$, which is the image of $X'$, i.e., the \emph{set} of images of elements of $X'$.
For two disjoint sets $A,B$ and a function $f \colon A \cup B \to \N$, such that $f(A) \subseteq f(B)$, we say that a function $\gamma \colon A \to B$ is \emph{color-preserving} if for every $a \in A$ it holds that $f(a) = f(\gamma(a))$.

Consider a hypergraph $H=(V,E)$. For every vertex $v \in V$, let $E(v)$ be the set of edges containing $v$. The \emph{degree} of a vertex $v$ is defined as $\deg v := |E(v)|$. By $\Delta(H)$ we denote the maximum degree, i.e., $\max_{v\in V} \deg v$. A hypergraph is \emph{$k$-uniform}, for $k \in \mathbb{N}$, if $|P|=k$ for every $P \in E$. Clearly, graphs are 2-uniform hypergraphs.
For a $k$-uniform hypergraph $H$, define 
\[
I(H):=\{|P \setminus Q|: P,Q \in E\} \cap [k-1].
\]

\subsection{Graph labeling problems} \label{sec:graphlabeling}

Let $G=(V, E)$ be a simple, undirected graph, i.e., a 2-uniform hypergraph and consider an (unrestricted) edge labeling $\mu$ of $G$. We say that $\mu$ \emph{distinguishes neighbors by sets} (respectively, \emph{by multisets}) if $\mu(E(v)) \neq \mu (E(u))$ (respectively, $\mu\langle E(v) \rangle \neq \mu \langle E(u) \rangle$) for every pair of adjacent vertices $v$ and $u$.
Clearly, in both cases, such a labeling exists if and only if and only if the graph $G$ is nice, i.e., it does not have an isolated edge. 
Recall that by $\gndi(G)$ we denote the minimum number of labels used in an edge labeling of $G$, distinguishing neighbors by sets.

Similarly, let $\eta$ be a total labeling of $G$, i.e., a labeling of its edges and vertices. We say it \emph{distinguishes neighbors by sets} (respectively, \emph{by multisets}) if $\eta(E(v)\cup \{v\}) \neq \eta (E(u)\cup\{u\})$ (respectively, $\eta\langle E(v)\cup \{v\} \rangle \neq \eta \langle E(u) \cup \{u\} \rangle$) for every pair of adjacent vertices $v$ and $u$.
Note that such a labeling exists for every graph $G$, as it is enough to use a different label for every vertex, and one extra label for all edges.

In list variants of all four problems, edges (or edges and vertices in total labelings) are equipped with lists of possible labels, and we ask for a labeling, where the label of every edge (or edge and vertex in total labelings) belongs to the appropriate list.

It is interesting to note a very close connection between total and edge labelings distinguishing neighbors by sets. Consider a nice graph $G$ and let $m$ denote the minimum number of labels in a total labeling distinguishing neighbors by sets. First, observe that $m \leq \gndi(G)$, because we can extend any edge labeling, which is neighbor distinguishing by sets, to a total labeling, by assigning to each vertex $v$ a label that appears on some edge incident to $v$.
On the other hand, consider a total labeling $c$ of $G$, which distinguishes neighbors by sets and uses $m$ labels. 
By the argument analogous to the one used by Hor\v{n}{\'{a}}k and Sot{\'{a}}k~\cite{DBLP:journals/dm/HornakS10}, we observe that
\[\{c^{-1}(M) \cup c^{-1}([m] \setminus M) \; \colon \; M \subseteq [m]\},\]
is a proper vertex coloring of $G$, which implies that $\chi(G) \leq 2^{m-1}$ and thus $m\geq\lceil\log_2\chi(G)\rceil +1$. Recall that Gy\H{o}ri and Palmer~\cite{DBLP:journals/dm/GyoriP09} showed that if $\chi(G) \geq 3$, then $\gndi(G)=\lceil\log_2\chi(G)\rceil+1$, which implies that $m = \gndi(G)$.

Finally, if $G$ is a bipartite graph then $m=2$, because we can extend its proper vertex coloring using colors $\{1,2\}$ to a total labeling, by assigning label 1 to every edge of $G$.
The inverse of this statement is also true: if $G=(V,E)$ has a total labeling $c \colon V \cup E \to \{1,2\}$, distinguishing neighbors by sets, then $G$ is bipartite. This is because the sets  $c^{-1}(\{ 1,2 \})$ and $c^{-1}(\{1 \}) \cup c^{-1}(\{2\})$ form a bipartition of $G$.

Observe that this reasoning does not show equivalence of the list variants of the problems, so it still makes sense to consider them separately.

Finally, note that if we are interested in edge/total labelings of a disconnected graph $G$, then we can label each connected component of $G$ independently, as the distinguishing constraints are local. Thus we will focus on connected graphs.

\subsection{Distinguishing neighbors via colorings of hypergraphs} \label{sec:hyper}

Let $H=(V,E)$ be a hypergraph, we do not allow multiple edges.
We say that a coloring $\varphi \colon V \rightarrow \mathbb{N}$ \emph{distinguishes intersecting edges by sets} (or, in short, is an \emph{ieds-coloring}) if for every pair of distinct intersecting edges $P, Q \in E$ it holds that
\begin{equation}
\label{def:leds-col}
\varphi(P) \neq \varphi(Q).
\end{equation}
Analogously, $\varphi$ \emph{distinguishes intersecting edges by multisets} (or is an \emph{iedm-coloring}), if for every pair of distinct intersecting edges $P, Q \in E$ it holds that
\begin{equation}
\label{def:ledm-col}
\varphi\langle P \rangle \neq \varphi\langle Q \rangle.
\end{equation}
See \autoref{fig:iedm-ieds} for an example. Note that \eqref{def:leds-col} implies \eqref{def:ledm-col}, so every ieds-coloring is also an iedm-coloring. Moreover, if $|P|\neq|Q|$, then \eqref{def:ledm-col} is satisfied for any coloring $\varphi$.  

By a \emph{list ieds-coloring} (respectively, a \emph{list iedm-coloring}) we mean an ieds-coloring (respectively, an iedm-coloring), in which the color of each vertex $v$ is chosen from a list $L_v$ which is assigned to the vertex $v$. The lists come with a graph and are assumed to be a part of the instance.

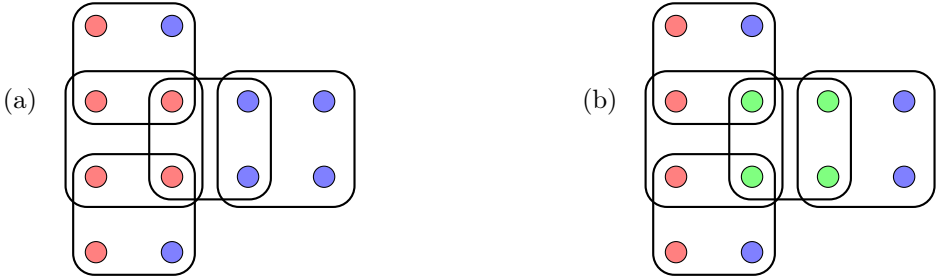
\begin{figure}[h]
\begin{center}
\begin{subfigure}[b]{0.5\textwidth}
\begin{tikzpicture}
\node at (0,1) {(a)};
\draw[thick,rounded corners=8pt] (1.5,-0.4)--(2.4,-0.4)--(2.4,1.4)--(0.6,1.4)--(0.6,-0.4)--(1.5,-0.4);
\draw[thick,rounded corners=8pt] (1.5,-1.3)--(2.3,-1.3)--(2.3,0.3)--(0.7,0.3)--(0.7,-1.3)--(1.5,-1.3);
\draw[thick,rounded corners=8pt] (1.5,0.7)--(2.3,0.7)--(2.3,2.3)--(0.7,2.3)--(0.7,0.7)--(1.5,0.7);
\draw[thick,rounded corners=8pt] (3.5,-0.4)--(4.4,-0.4)--(4.4,1.4)--(2.6,1.4)--(2.6,-0.4)--(3.5,-0.4);
\draw[thick,rounded corners=8pt] (2.5,-0.3)--(3.3,-0.3)--(3.3,1.3)--(1.7,1.3)--(1.7,-0.3)--(2.5,-0.3);
\fill[fill=white, fill opacity=0.8] (1.35,0) circle(6pt);
\filldraw[fill=red!50] (1,0) circle(4pt);
\filldraw[fill=red!50] (1,1) circle(4pt);
\filldraw[fill=red!50] (2,0) circle(4pt);
\filldraw[fill=red!50] (2,1) circle(4pt);
\filldraw[fill=blue!50] (3,0) circle(4pt);
\filldraw[fill=blue!50] (3,1) circle(4pt);
\filldraw[fill=blue!50] (4,0) circle(4pt);
\filldraw[fill=blue!50] (4,1) circle(4pt);
\filldraw[fill=red!50] (1,-1) circle(4pt);
\filldraw[fill=red!50] (1,2) circle(4pt);
\filldraw[fill=blue!50] (2,-1) circle(4pt);
\filldraw[fill=blue!50] (2,2) circle(4pt);
\end{tikzpicture}
\end{subfigure}%
\begin{subfigure}[b]{0.5\textwidth}
\begin{tikzpicture}
\node at (0,1) {(b)};
\draw[thick,rounded corners=8pt] (1.5,-0.4)--(2.4,-0.4)--(2.4,1.4)--(0.6,1.4)--(0.6,-0.4)--(1.5,-0.4);
\draw[thick,rounded corners=8pt] (1.5,-1.3)--(2.3,-1.3)--(2.3,0.3)--(0.7,0.3)--(0.7,-1.3)--(1.5,-1.3);
\draw[thick,rounded corners=8pt] (1.5,0.7)--(2.3,0.7)--(2.3,2.3)--(0.7,2.3)--(0.7,0.7)--(1.5,0.7);
\draw[thick,rounded corners=8pt] (3.5,-0.4)--(4.4,-0.4)--(4.4,1.4)--(2.6,1.4)--(2.6,-0.4)--(3.5,-0.4);
\draw[thick,rounded corners=8pt] (2.5,-0.3)--(3.3,-0.3)--(3.3,1.3)--(1.7,1.3)--(1.7,-0.3)--(2.5,-0.3);
\fill[fill=white, fill opacity=0.8] (1.35,0) circle(6pt);
\filldraw[fill=red!50] (1,0) circle(4pt);
\filldraw[fill=red!50] (1,1) circle(4pt);
\filldraw[fill=green!50] (2,0) circle(4pt);
\filldraw[fill=green!50] (2,1) circle(4pt);
\filldraw[fill=green!50] (3,0) circle(4pt);
\filldraw[fill=green!50] (3,1) circle(4pt);
\filldraw[fill=blue!50] (4,0) circle(4pt);
\filldraw[fill=blue!50] (4,1) circle(4pt);
\filldraw[fill=red!50] (1,-1) circle(4pt);
\filldraw[fill=red!50] (1,2) circle(4pt);
\filldraw[fill=blue!50] (2,-1) circle(4pt);
\filldraw[fill=blue!50] (2,2) circle(4pt);
\end{tikzpicture}
\end{subfigure}
\caption{(a) An iedm-coloring and (b) an ieds-coloring of a 4-uniform hypergraph.}
\label{fig:iedm-ieds}
\end{center}
\end{figure}

Now, for the graph $G=(V,E)$, let $H=(E,Q)$ be its \emph{dual hypergraph}, i.e., the hypergraph whose vertex set is the set of edges of $G$, and edges of $H$ correspond to vertices of $G$ in the following way: $Q=\{E(v) \; \colon \;v \in V\}$. Observe that in such a hypergraph each vertex belongs to exactly two edges and also, if $G$ is $k$-regular, then $H$ is $k$-uniform and $I(H)=\{k-1\}$. 
Let $\mu$ be an edge labeling of a graph $G$. We observe that $\mu$ distinguishes neighbors by sets (respectively, by multisets) if and only if it is an ieds-coloring (respectively, iedm-coloring) of the dual hypergraph $H$ of $G$, see \autoref{fig:dual-hyper}.

Analogously to the previous case, we define the \emph{total hypergraph} of $G$, i.e., the hypergraph $H=(V \cup E,Q)$, whose vertices are both vertices and edges of $G$. Every edge of $H$ is the set of all edges incident to a vertex in $G$ and the vertex itself, i.e., $Q=\{E(v)\cup \{v\} \; \colon \; v \in V\}$. Note that $\Delta(H)\leq 2$ and, if $G$ is $k$-regular, then $H$ is $(k+1)$-uniform and $I(H)=\{k\}$. 
Now, if $\eta$ is a total labeling of $G$, it distinguishes neighbors by sets (respectively, by multisets) if and only if it is an ieds-coloring (respectively, an iedm-coloring) of $H$.

\begin{figure}[h]
\begin{center}
\begin{subfigure}[b]{0.5\textwidth}
\centering\begin{tikzpicture}
\node at (-0.5,3) {(a)};
\draw[very thick, draw=blue!50] (0,0)--(0,2);
\draw[very thick, draw=blue!50] (0,0)--(4,0);
\draw[very thick, draw=green!50] (4,0)--(4,2);
\draw[very thick, draw=green!50] (0,2)--(4,2);
\draw[very thick, draw=green!50] (0,0)--(1,1);
\draw[very thick, draw=red!50] (0,2)--(1,1);
\draw[very thick, draw=red!50] (4,0)--(3,1);
\draw[very thick, draw=blue!50] (4,2)--(3,1);
\draw[very thick, draw=red!50] (1,1)--(3,1);
\filldraw[fill=white] (1,1) circle(4pt) node[anchor=east] {$a \ $};
\filldraw[fill=white] (3,1) circle(4pt) node[anchor=west] {$\ d$};
\filldraw[fill=white] (0,2) circle(4pt) node[anchor=east] {$b \ $};
\filldraw[fill=white] (0,0) circle(4pt) node[anchor=east] {$c \ $};
\filldraw[fill=white] (4,0) circle(4pt) node[anchor=west] {$\ e$};
\filldraw[fill=white] (4,2) circle(4pt) node[anchor=west] {$\ f$};

\draw [fill=blue!50] (-0.1,2.2) rectangle (0.1,2.4);
\draw [fill=green!50] (-0.1,2.4) rectangle (0.1,2.6);
\draw [fill=red!50] (-0.1,2.8) rectangle (0.1,2.6);

\draw [fill=green!50] (-0.1,-0.2) rectangle (0.1,-0.4);
\draw [fill=blue!50] (-0.1,-0.4) rectangle (0.1,-0.6);

\draw [fill=blue!50] (3.9,2.2) rectangle (4.1,2.4);
\draw [fill=green!50] (3.9,2.4) rectangle (4.1,2.6);

\draw [fill=red!50] (3.9,-0.2) rectangle (4.1,-0.4);
\draw [fill=green!50] (3.9,-0.4) rectangle (4.1,-0.6);
\draw [fill=blue!50] (3.9,-0.6) rectangle (4.1,-0.8);

\draw [fill=green!50] (0.9,1.2) rectangle (1.1,1.4);
\draw [fill=red!50] (0.9,1.4) rectangle (1.1,1.6);

\draw [fill=blue!50] (2.9,1.2) rectangle (3.1,1.4);
\draw [fill=red!50] (2.9,1.4) rectangle (3.1,1.6);
\end{tikzpicture}
\end{subfigure}
\begin{subfigure}[b]{0.45\textwidth}
\centering\begin{tikzpicture}
\node at (-0.5,2) {(b)};
\draw[thick,rounded corners=10pt] (0.4,0)--(0.4,-1.5)--(1.4,-1.5)--(1.4,3.5)--(0.4,3.5)--(0.4,0);
\draw[thick,rounded corners=10pt] (4.4,0)--(4.4,-1.5)--(5.4,-1.5)--(5.4,3.5)--(4.4,3.5)--(4.4,0);
\draw[thick,rounded corners=10pt] (2,-1.5)--(0.3,-1.5)--(0.3,-0.5)--(5.5,-0.5)--(5.5,-1.5)--(2,-1.5);
\draw[thick,rounded corners=10pt] (2,2.5)--(0.3,2.5)--(0.3,3.5)--(5.5,3.5)--(5.5,2.5)--(2,2.5);
\draw[thick,rounded corners=15pt] (3.4,2)--(3.4,3.7)--(0.3,0.7)--(3.4,0.7)--(3.4,2);
\draw[thick,rounded corners=15pt] (2.6,0)--(2.6,-1.7)--(5.7,1.3)--(2.6,1.3)--(2.6,0);
\fill[fill=white, fill opacity=0.8] (2.6,3) circle(7pt);
\fill[fill=white, fill opacity=0.8] (2.6,1) circle(7pt);
\fill[fill=white, fill opacity=0.8] (2.6,-1) circle(7pt);
\fill[fill=white, fill opacity=0.8] (0.6,3) circle(7pt);
\fill[fill=white, fill opacity=0.8] (0.6,1) circle(7pt);
\fill[fill=white, fill opacity=0.8] (0.6,-1) circle(7pt);
\fill[fill=white, fill opacity=0.8] (4.4,3) circle(7pt);
\fill[fill=white, fill opacity=0.8] (4.4,1) circle(7pt);
\fill[fill=white, fill opacity=0.8] (4.4,-1) circle(7pt);
\filldraw[fill=blue!50] (1,-1) circle(4pt) node[anchor=east] {$ce \ $};
\filldraw[fill=green!50] (1,1) circle(4pt) node[anchor=east] {$ac \ $};
\filldraw[fill=blue!50] (1,3) circle(4pt) node[anchor=east] {$bc \ $};
\filldraw[fill=red!50] (3,-1) circle(4pt) node[anchor=east] {$de \ $};
\filldraw[fill=red!50] (3,1) circle(4pt) node[anchor=east] {$ad \ $};
\filldraw[fill=red!50] (3,3) circle(4pt) node[anchor=east] {$ab \ $};
\filldraw[fill=green!50] (5,-1) circle(4pt) node[anchor=east] {$ef \ $};
\filldraw[fill=blue!50] (5,1) circle(4pt) node[anchor=east] {$df \ $};
\filldraw[fill=green!50] (5,3) circle(4pt) node[anchor=east] {$bf \ $};
\end{tikzpicture}
\end{subfigure}
\caption{An edge labeling of a graph, distinguishing neighbors by sets (a), and the corresponding ieds-coloring of its dual hypergraph (b).}
\label{fig:dual-hyper}
\end{center}
\end{figure}
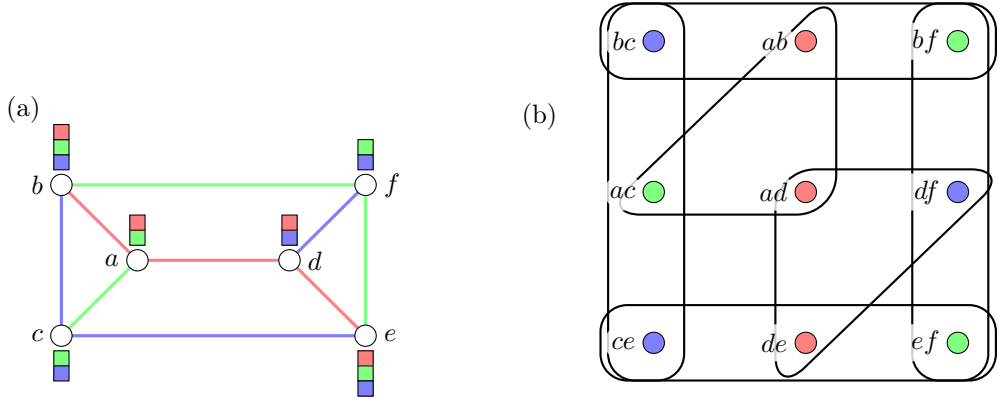

\section{Upper bound} \label{sec:entropy}
In this section we prove the main results of the paper, i.e., upper bounds on the size of lists, which guarantee the existence of an ieds-coloring or an iedm-coloring of a given $k$-uniform hypergraph.

\begin{theorem}
\label{thm:main}
Let $k \geq 2$ and let $H$ be a $k$-uniform hypergraph $H$ with $\Delta(H) \leq \Delta$ and $I(H)=I$, whose every vertex is equipped with a list $L_v$ of at least $R$ colors.
Define $q_1:=1$ and $q_i:= \frac{i}{i-1}\sqrt[i]{i-1}$ for every $i > 1$.
The following hold:
\begin{enumerate}[(a)]
\item $\begin{aligned}[t] \text{ if } 
R \geq \left\lceil 2 + \sum_{i \in I} q_i \sqrt[i]{\frac{\Delta(\Delta-1)(k-1)}{k-i} 2^{k-i+1} f_i}\right\rceil, &
 \text{ then there exists a list ieds-coloring $\varphi$ of $H$,} \end{aligned}$ \label{thmain-ieds}
\item $\begin{aligned}[t] \text{ if } 
R \geq \left\lceil 2 + \sum_{i \in I} q_i \sqrt[i]{\frac{\Delta(\Delta-1)(k-1)}{k-i} i!}\right\rceil, &
 \text{ then there exists a list iedm-coloring $\varphi$ of $H$.} \end{aligned}$ \label{thmain-iedm}
\end{enumerate}
\end{theorem}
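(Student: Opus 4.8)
The plan is to prove both parts at once by the entropy compression method: I design a randomized coloring algorithm and argue that, under the stated lower bound on $R$, it terminates (almost surely, and in expectation quickly), so that a valid coloring must exist. First I fix an arbitrary ordering $v_1,\dots,v_n$ of the vertices. The algorithm keeps a partial coloring and processes the vertices greedily; to colour the current vertex it draws a colour uniformly at random from its list $L_v$. Right after a vertex is coloured, I test whether this completed a \emph{conflict}, that is, a pair of distinct intersecting edges $P,Q$ that are now both fully coloured and violate \eqref{def:leds-col} in part (a) (respectively \eqref{def:ledm-col} in part (b)). If such a conflict is found, I \emph{repair} it: writing $i:=|P\setminus Q|\in I$, so that $|P\cap Q|=k-i$, I uncolour the $i$ private vertices of one of the two edges (chosen canonically) and append a record describing the conflict to a log. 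The total number of random draws is the running time, and the whole argument is to bound it.

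The core is that a short log record, together with the coloured vertices that survive a repair, suffices to recover the $i$ colours just erased; this is exactly what fuels the entropy count. Locating the conflicting pair costs little: fixing a shared vertex $w\in P\cap Q$ together with the two edges through it contributes the factor $\Delta(\Delta-1)$, and the remaining freedom in identifying the pair and the erased vertices contributes the $\tfrac{k-1}{k-i}$ factor. It then remains to encode the $i$ erased colours. In part (b) this is easy: since $P$ and $Q$ carry equal \emph{multisets} and the vertices of $(P\cap Q)\cup(Q\setminus P)$ stay coloured, the multiset $\varphi\langle P\setminus Q\rangle=\varphi\langle Q\setminus P\rangle$ is already pinned down, so it is enough to record which color-preserving bijection between the two private parts was realized, one of at most $i!$; this is the factor appearing in \ref{thmain-iedm}.

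For part (a) the same scheme applies, but the bookkeeping is subtler, since equality of \emph{sets} does not fix the colour multiplicities inside $P$: knowing $\varphi(P)=\varphi(Q)$ only forces the colours of $\varphi(Q)\setminus\varphi(P\cap Q)$ to appear somewhere on $P\setminus Q$, while the colours of $\varphi(P\cap Q)$ may or may not reappear there. Counting the admissible colourings of the $i$ erased vertices is therefore a surjection-type count: selecting which optional colours actually occur costs a factor $2^{k-i+1}$, and distributing the $i$ vertices among the resulting ordered colour classes is controlled by the ordered Bell number $f_i$ of \eqref{fubini}, which yields the $2^{k-i+1}f_i$ of \ref{thmain-ieds}. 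With the per-conflict reconstruction multiplicity $N_i$ in hand, the final step is the usual entropy-compression balancing: an execution of $t$ draws is encoded by its log plus the final partial coloring, and the admissible logs assembled from type-$i$ records are governed by a Fuss--Catalan generating function satisfying $B_i=1+x\,B_i^{\,i}$, whose radius of convergence $\tfrac{(i-1)^{i-1}}{i^{i}}$ has reciprocal $i$-th root precisely $q_i=\tfrac{i}{i-1}\sqrt[i]{i-1}$ (and $q_1=1$). Summing the per-type thresholds $q_i\sqrt[i]{N_i}$ over $i\in I$ and adding the constant $2$ for the convergence margin gives the stated bound on $R$; as soon as $R$ meets it, the count shows that the probability of the algorithm surviving $t$ draws tends to $0$ as $t\to\infty$, so it terminates and the required coloring exists.

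I expect the set-reconstruction count of part (a) to be the main obstacle. In the harmonious-coloring argument of Bosek, Czerwiński, Grytczuk, and Rzążewski that inspires this proof, every edge was rainbow, so its set of colours determined its colouring and there was essentially nothing to reconstruct; here colours may repeat within an edge, so recovering the erased colours from a mere set-equality is genuinely lossy, and the whole theorem hinges on bounding that loss by $2^{k-i+1}f_i$ tightly enough that, after taking the $i$-th root, it still beats the factorial growth of the number of logs. A secondary technical point is to check that each repair leaves at least two admissible colours at every erased vertex, so that recolouring can proceed and the entropy accounting stays valid.
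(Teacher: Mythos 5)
Your overall architecture matches the paper's: the same greedy randomized algorithm, the same per-conflict reconstruction counts ($i!$ color-preserving bijections for multisets, a surjection-type count bounded via the Fubini numbers $f_i$ for sets), and the same constants $q_i$ — your Fuss--Catalan generating-function route to $q_i=\tfrac{i}{i-1}\sqrt[i]{i-1}$ is an equivalent substitute for the multinomial estimate (\autoref{lem:estimation}) that the paper imports from \cite{BosekAADM}, and your probabilistic phrasing of the final count is interchangeable with the paper's bijective comparison of $\I$ and $\O$. However, there is a genuine gap in your conflict handling. You declare a conflict only when a pair $P,Q$ is \emph{fully} coloured and violates \eqref{def:leds-col} or \eqref{def:ledm-col}, and you repair by erasing the $i$ private vertices of one edge. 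The completing vertex $v$ may lie in $P\cap Q$, and then your repair leaves $v$ coloured; since you only test pairs \emph{completed} by the newly coloured vertex, any \emph{other} pair $P',Q'$ that became fully coloured and conflicting at the same step (all such pairs contain $v$) survives the repair and is never re-examined, so the algorithm can terminate with an invalid ``complete'' coloring. The paper avoids exactly this: it works with the partial conditions \eqref{def:partial leds-col}--\eqref{def:partial ledm-col} (only the symmetric difference need be coloured), and it guarantees that \emph{every} repair erases $v$ itself, which kills all conflicts created in that iteration at once. In the set case this forces two conflict types: when $v\in P\cap Q$ the paper erases $P\setminus(Q\cup\{w\})\cup\{v\}$, retaining the smallest private vertex $w$ of $P$ so that exactly $i$ colours are erased and they remain reconstructible via a colour-preserving map into $Q\setminus\{v\}$ — using the observation that, because $\varphi_{j-1}$ was conflict-free, $\varphi(v)$ lies in exactly one of $\varphi(P\setminus Q)$ and $\varphi(Q\setminus P)$. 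This case is entirely absent from your writeup.

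Relatedly, your accounting of the factor $2^{k-i+1}$ is misattributed: the subset-selection factor (which colours of $\varphi(P\cap Q)$ reappear on $P\setminus Q$) contributes only $2^{k-i}$, and the extra factor $2$ in the theorem comes from summing over the \emph{two} conflict types $(1,x_P,x_Q,\gamma)$ and $(2,x_P,x_Q,\gamma)$ — so your count happens to land on the right number while skipping the type-2 analysis it is meant to pay for. Two smaller remarks: for part \eqref{thmain-iedm}, under the paper's partial condition a vertex of $P\cap Q$ can never trigger a multiset conflict (the multisets $\varphi\langle P\setminus Q\rangle$, $\varphi\langle Q\setminus P\rangle$ do not involve it), which is precisely why that variant is single-typed and simpler — your full-edge notion obscures this; and your closing worry about each erased vertex retaining ``at least two admissible colours'' is a non-issue, since erased vertices are simply recoloured from their full lists by fresh draws, with no admissibility constraint in the scheme.
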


For the rest of the section, let $H=(V,E)$ be a fixed $k$-uniform hypergraph  on $n$ vertices, such that $I(H)=I$ and $\Delta(H) \leq \Delta$ for some $\Delta \geq 2$. Note that if $\Delta(H) \leq 1$, then there are no intersecting edges and thus the problem is trivial.
We assume that sets $V$ and $E$ are linearly ordered, clearly these orderings induce a linear ordering on any subset $Y$ of $V$ or $E$.
For every $y \in Y$, denote by $n_Y(y)$ a position of element $y$ in the set $Y$, determined by this ordering.
Suppose that every vertex $v$ of $H$ is assigned with a list $L_v$ of at least $R$ colors.

For a subset $U$ of $V$, a function $\varphi \colon U \to \mathbb{N}$ is called a \emph{partial ieds-coloring} if for every pair of distinct, intersecting edges $P, Q \in E$ such that $(P \cup Q) \setminus (P \cap Q) \subseteq U$, it holds that
\begin{equation}
\label{def:partial leds-col}
\varphi( P \cap U ) \neq \varphi( Q \cap U )
\end{equation}
and $\varphi(v) \in L_v$ for every vertex $v \in U$.
Analogously, $\varphi$ is called a \emph{partial iedm-coloring} if for every such pair of edges $P, Q$ we have that
\begin{equation}
\label{def:partial ledm-col}
\varphi\langle P \setminus Q \rangle \neq \varphi\langle Q \setminus P \rangle
\end{equation}
and $\varphi(v) \in L_v$ for every $v \in U$.
A partial ieds-coloring (iedm-coloring, respectively) is \emph{complete} if $U=V$, note that for $U=V$ the condition \eqref{def:partial leds-col} is exactly \eqref{def:leds-col} and the condition \eqref{def:partial ledm-col} is equivalent to \eqref{def:ledm-col}.
Moreover, notice that conditions \eqref{def:partial leds-col} or \eqref{def:partial ledm-col} are necessary, if we want to extend a partial ieds-coloring or a partial iedm-coloring to a complete one, see \autoref{fig:partial-col}.

\begin{figure}[h]
\begin{center}
\begin{subfigure}[b]{0.5\textwidth}
\centering
\begin{tikzpicture}
\node at (-0.5,2.5) {(a)};
\draw[thick,rounded corners=10pt] (1.5,0)--(0,1.5)--(1.5,3)--(3,1.5)--cycle;
\draw[thick,rounded corners=10pt] (2.5,0)--(1,1.5)--(2.5,3)--(4,1.5)--cycle;
\fill[fill=white, fill opacity=0.9] (3.1,1.73) circle(7pt); 
\filldraw[fill=green!50] (2.5,1.5) circle(4pt);
\draw (1.5,1.5) circle(4pt);
\node at (0.8,1.5) {$v$};
\filldraw[fill=blue!50] (1.5,0.5) circle(4pt);
\filldraw[fill=red!50] (3.5,1.5) circle(4pt);
\filldraw[fill=blue!50] (0.5,1.5) circle(4pt);
\filldraw[fill=red!50] (1.5,2.5) circle(4pt);
\filldraw[fill=blue!50] (2.5,2.5) circle(4pt);
\filldraw[fill=blue!50] (2.5,0.5) circle(4pt);
\end{tikzpicture}
\end{subfigure}%
\begin{subfigure}[b]{0.5\textwidth}
\centering
\begin{tikzpicture}
\node at (-0.5,2.5) {(b)};
\draw[thick,rounded corners=10pt] (1.5,0)--(0,1.5)--(1.5,3)--(3,1.5)--cycle;
\draw[thick,rounded corners=10pt] (2.5,0)--(1,1.5)--(2.5,3)--(4,1.5)--cycle;
\fill[fill=white, fill opacity=0.9] (3.1,1.73) circle(7pt); 
\filldraw[fill=red!50] (2.5,1.5) circle(4pt);
\draw (1.5,1.5) circle(4pt);
\node at (2.2,1.5) {$v$};
\filldraw[fill=blue!50] (1.5,0.5) circle(4pt);
\filldraw[fill=green!50] (3.5,1.5) circle(4pt);
\filldraw[fill=green!50] (0.5,1.5) circle(4pt);
\filldraw[fill=red!50] (1.5,2.5) circle(4pt);
\filldraw[fill=blue!50] (2.5,2.5) circle(4pt);
\filldraw[fill=blue!50] (2.5,0.5) circle(4pt);
\end{tikzpicture}
\end{subfigure}
\end{center}
\caption{Partial colorings which do not satisfy condition \eqref{def:partial leds-col}. After coloring vertex $v$, there is no way to extend these colorings to obtain a complete ieds-coloring. Moreover, (a) does not satisfy condition \eqref{def:partial ledm-col} and cannot be even extended to iedm-coloring.}
\label{fig:partial-col}
\end{figure}
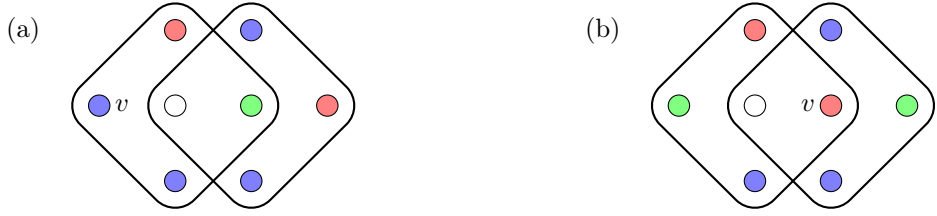

Both statements in \autoref{thm:main} can be shown in a very similar way.
We will discuss the proof in a detail and point out the differences between the cases of sets and multisets.
By a ({\em partial}) \emph{ied-coloring} we will mean a (partial) ieds-coloring or a (partial) iedm-coloring, depending whether we want to show the statement \eqref{thmain-ieds} or \eqref{thmain-iedm}.

We will construct an appropriate coloring of $H$ iteratively, ensuring that after each step the current partial coloring is a partial ied-coloring. In general, the algorithm works as follows.
We fix a large $N \in \mathbb{N}$ and a sequence $C$ of $N$ numbers from $[R]$.
In each step of the algorithm we color an uncolored vertex, using the next number from $C$, and check if the obtained partial coloring is a partial ied-coloring. If so, we proceed to the next iteration.
If not, it means that the condition \eqref{def:partial leds-col} or \eqref{def:partial ledm-col} (depending on the case of sets or multisets) is violated for some pair of edges $P,Q$. In this situation we say that a \emph{conflict} appears on these two edges. We erase colors of some already colored vertices, to ensure that the current coloring is a partial ied-coloring.
Moreover, we use an additional table $T$ to register all information about occurring conflicts. The algorithm terminates when all vertices of $H$ are colored or all numbers from the sequence $C$ are used (i.e., after $N$ iterations).
In the first case the algorithm returns a complete ied-coloring. In the second one, it returns a pair $(T, \varphi_N)$, where $T$ is the table of conflicts and $\varphi_N$ is the partial ied-coloring obtained after the last iteration.

For the contradiction, assume that $H$ does not have any ied-coloring, respecting the lists $L_v$. This means that for every possible sequence $C$ the algorithm does not return a complete list ied-coloring of $H$, but some pair $(T,\varphi_N)$. We will show that there is a bijection between all possible sequences $C$ and all possible pairs $(T,\varphi_N)$. Then, we will   show that if $N$ is sufficiently large, then the number of all pairs $(T,\varphi_N)$ is strictly smaller than $R^N$, which is a number of possible sequences $C$. This leads to a contradiction, so there is at least one sequence $C$, for which the algorithm successfully returns a complete list ied-coloring of $H$.

\subsection{The algorithm} \label{sec:algorithm}
Let $N$ be a large integer and let $C = (c_1, c_2, \dots, c_N)$ be a sequence of integers from $[R]$.
At the beginning of the procedure all vertices of $H$ are uncolored, call this (empty) partial coloring $\varphi_0$. Also, if a list $L_v$ for some $v \in V$ has more than $R$ elements, we truncate it, so that all lists are of size exactly $R$.
For $j \in [N]$, by $\varphi_j$ we denote the partial ied-coloring after the $j$-th iteration.

Recall that the vertices of $H$ are linearly ordered. For each $j \in [N]$, the $j$-th iteration of the algorithm consists of the following steps. 

\begin{description}
\item[\textbf{Step 1.}] Find the smallest uncolored vertex in $V$, call it $v$. 
\item[\textbf{Step 2.}] Assign the $c_j$-th color from the list $L_v$ to $v$ and denote by $\varphi$ the obtained partial coloring.
\item[\textbf{Step 3.}] If $\varphi$ is a partial ied-coloring, write $+$ in $T(j)$ and set $\varphi_j:=\varphi$.
If $\varphi$ is a complete ied-coloring, return it and terminate, otherwise,  proceed to the next iteration.
\item[\textbf{Step 4.}] If $\varphi$ is not a partial ied-coloring, then a conflict appeared on a pair of intersecting edges $P,Q$, and $v$ belongs to at least one of them, say $P$. This is because $\varphi_{j-1}$ was a partial ied-coloring and a conflict was caused by coloring $v$. If there is more than one conflict, we can choose any of them. 
Let $i:=|P \setminus Q|=|Q \setminus P|$, so $i \in I$, and define the set $X := \{K \in E \; \colon \; |P \setminus K|=i \text{ and } v \notin K\}.$

\medskip
The final step differs in case of distinguishing by sets and distinguishing by multisets.
\medskip

\item[\textbf{Step 5 (variant \eqref{thmain-ieds}: sets).}] There are two types of possible conflicts that may occur, we consider them separately.
\begin{enumerate}[a)]
\item First, consider the case that $v \notin Q$. This means that all vertices from $(P \cup Q) \setminus (P \cap Q) \setminus \{v\}$ were colored in previous iterations, and $\varphi(P \setminus Q) \subseteq \varphi(Q)$. 
Let $x_P := n_{E(v)}(P)$ and $x_Q := n_{X}(Q)$. In $T(j)$ write the quadruple $(1,x_P,x_Q,\gamma)$, where $\gamma$ is a color-preserving function from $P \setminus Q$ to $Q$ (later we will specify how to choose $\gamma$, for now it is enough to know that it is color-preserving). After that, uncolor all vertices from $P \setminus Q$, denoting the obtained coloring by $\varphi_j$, and proceed to the next iteration.  
\item Now, consider the case that $v \in P \cap Q$, clearly all vertices of $(P \cup Q) \setminus (P \cap Q)$ are already colored.
Observe that $\varphi(v)$ belongs to exactly one of the sets $\varphi(P \setminus Q)$ and $\varphi(Q \setminus P$), because $\varphi_{j-1}$ was a partial ieds-coloring. Without loss of generality assume that $\varphi(v) \in \varphi(Q \setminus P)$, otherwise switch the names of $P$ and $Q$.

Let $w$ be the smallest vertex of $P \setminus Q$, it exists, because $H$ does not have multiple edges. Let $x_P := n_{E(v)}(P)$ and $x_Q := n_{E(v) \setminus \{P\}}(Q)$.
In $T(j)$ write the quadruple $(2,x_P,x_Q,\gamma)$, where $\gamma$ is a color-preserving function from $P \setminus (Q \cup \{w\}) \cup \{v\}$ to $Q \setminus \{v\}$ . After that, uncolor all vertices from $P \setminus (Q \cup \{w\}) \cup \{v\}$, denoting the obtained coloring by $\varphi_j$, and proceed to the next iteration.  
\end{enumerate}
\item[\textbf{Step 5 (variant \eqref{thmain-iedm}: multisets).}] Note that now $v \in P\setminus Q$, because $\varphi_{j-1}$ was a partial iedm-coloring. Recall that $\varphi \langle P \setminus Q \rangle = \varphi \langle Q \setminus P \rangle$. Let $x_P := n_{E(v)}(P)$ and $x_Q := n_{X}(Q)$. In $T(j)$ write the triple $(x_P,x_Q,\gamma)$, where $\gamma$ is a color-preserving function from $P \setminus Q$ to $Q \setminus P$. After that, uncolor all vertices from $P \setminus Q$, denoting the obtained coloring by $\varphi_j$, and proceed to the next iteration.  
\end{description}

Observe that after uncoloring $v$ (and possibly some additional vertices) there are no more conflicts, so $\varphi_{j}$ is a partial ied-coloring for every $j \in [N]$. Moreover, note that the execution of the algorithm does not depend on the content of $T$. The information stored there will only be used in the proof. 

It is clear that the algorithm either returns an ied-coloring, or runs for $N$ iterations and terminates without success.
In the latter situation, we say that the pair $(T,\varphi_N)$ is \emph{produced} from the sequence $C$, note that this pair is uniquely determined by $C$. Let $\I:=[R]^N$ (for \emph{input}) denote the set of all possible sequences $C$, and $\O$ (for \emph{output}) denote the set of all pairs $(T,\varphi_N)$ that can be possibly produced from a sequence $C \in \I$.
Assume that the algorithm never returns a complete ied-coloring, which means that for every $C \in \I$ it produces some pair $(T,\varphi_N) \in \O$. This implies that $|\mathcal{O}|\leq|\mathcal{I}|$.

\subsection{Equicardinality of sets $\mathcal{I}$ and $\mathcal{O}$}

We will show that if a pair $(T, \varphi_N)$ belongs to $\O$, then there is exactly one sequence $C \in \I$ that produces $(T, \varphi_N)$.
By $W_j$ we denote the set of uncolored vertices after the $j$-th iteration of the algorithm and by $U_j$ we denote the set of vertices that are colored after the $j$-th iteration of the algorithm. Moreover, we define $W_0:=V$ and $U_0:=\phi$.
Notice that for every  $j \in \{0, 1, \dots, N\}$ the sets $W_j$ and $U_j$ form a partition of $V$, i.e., $W_j \cap U_j = \phi$ and $W_j \cup U_j = V$.

First, let us prove that using the entries of $T$ only, we can reconstruct sets $W_j$ for all $j \in [N]$.
\begin{lemma}
For every $j \in [N]$ the set $W_j$ is uniquely determined by $W_{j-1}$ and $T(j)$.
\label{lem:w-reconstr}
\end{lemma}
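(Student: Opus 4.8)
The plan is to follow a single iteration $j$ of the algorithm and observe that every change it makes to the set of uncolored vertices is dictated either by $W_{j-1}$ alone or by the entry $T(j)$. By Step~1, $v$ is the smallest uncolored vertex, so $v=\min W_{j-1}$ is immediately recoverable from $W_{j-1}$. The iteration then colors $v$ and, if a conflict arises, uncolors a set $S$ of vertices that depends only on the conflict pair $(P,Q)$; nothing else touches the uncolored vertices. Thus $W_j$ is obtained from $W_{j-1}$ by deleting $v$ and (in the conflict case) reinserting $S$, and the whole task reduces to recovering $S$---equivalently the pair $(P,Q)$---from $T(j)$ together with the already-known $v$.

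First I would dispose of the conflict-free case: if $T(j)=+$ then Step~3 fired, no vertex was uncolored, and $W_j=W_{j-1}\setminus\{v\}$. Otherwise $T(j)$ records a conflict, and I recover $P$ first. Since $x_P=n_{E(v)}(P)$ and the ordered set $E(v)$ of edges of $H$ through $v$ depends only on $H$ and on the known vertex $v$---not on the current coloring---the edge $P$ is exactly the $x_P$-th element of $E(v)$ and is uniquely determined.

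Next I would recover $Q$, splitting on the type stored in $T(j)$. When $v\in P\cap Q$ (the quadruple begins with $2$) we have $Q\in E(v)\setminus\{P\}$, so $x_Q=n_{E(v)\setminus\{P\}}(Q)$ locates $Q$ inside the fixed, coloring-independent ordered set $E(v)\setminus\{P\}$. In the other cases ($v\notin Q$ for sets, and the multiset variant) $x_Q=n_X(Q)$ with $X=\{K\in E : |P\setminus K|=i,\ v\notin K\}$. Here lies the main obstacle: to list $X$ one needs the number $i$, which is a priori invisible during reconstruction. I would resolve this by using that $\gamma$ is stored as a color-preserving function whose domain is $P\setminus Q$, a set of size $i$; hence $i$ is read off $T(j)$ as the size of the domain of $\gamma$. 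Once $i$ is known, $X$ is a function of $H,P,v,i$ only, and $Q$ is its $x_Q$-th element.

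Finally, having recovered $(P,Q)$, I would write $W_j$ explicitly. In every case where Step~5 uncolors $P\setminus Q$ (the sets case $v\notin Q$ and the multiset variant, where $v\in P\setminus Q$) this gives $W_j=W_{j-1}\cup(P\setminus Q)$; in the remaining sets case $v\in P\cap Q$, where Step~5 uncolors $(P\setminus(Q\cup\{w\}))\cup\{v\}$ with $w=\min(P\setminus Q)$, it gives $W_j=W_{j-1}\cup\bigl((P\setminus Q)\setminus\{w\}\bigr)$. In each formula the added set is a function of the already-recovered $P$ and $Q$, so $W_j$ is uniquely determined. The point I would stress throughout is that every coordinate system the algorithm uses---$E(v)$, $E(v)\setminus\{P\}$, and $X$---is built from $H$ and from data recovered strictly earlier, never from the partial coloring itself; this coloring-independence is exactly what makes $T(j)$ invertible and pins down $W_j$.
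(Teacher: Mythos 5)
Your proposal is correct and matches the paper's own argument essentially step for step: identify $v=\min W_{j-1}$, split on the form of $T(j)$, recover $P$ from $x_P$ within $E(v)$, extract $i$ as the size of the domain of $\gamma$ so as to rebuild $X$ and locate $Q$ (or use $E(v)\setminus\{P\}$ in the $v\in P\cap Q$ case), and write down the uncoloring formula in each case. Note also that your formula $W_j=W_{j-1}\cup\bigl((P\setminus Q)\setminus\{w\}\bigr)$ is identical to the paper's $W_j=W_{j-1}\cup\bigl(P\setminus(Q\cup\{w\})\bigr)$ since $w\in P\setminus Q$, and your observation that $v\in W_{j-1}$ absorbs the re-uncolored vertex $v$ is exactly why the paper can omit $\{v\}$ from the union.
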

\begin{proof}
Let $v$ be the smallest vertex in the set $W_{j-1}$. In the $j$-th iteration we assigned a color to $v$. Depending on $T(j)$ and, obviously, the variant of coloring we consider, we have the following possibilities:
\begin{description}
\item[\textbf{Case 1: $T(j)=+$ (both variants \eqref{thmain-ieds} and \eqref{thmain-iedm}: sets and multisets).}] In this case, no conflicts arose after coloring $v$ in the $j$-th iteration and no vertex was uncolored. Therefore $W_j=W_{j-1}\setminus\{v\}.$ 

\item[\textbf{Case 2: $T(j)=(1,x_P,x_Q,\gamma)$ (variant \eqref{thmain-ieds}: sets).}] This means that there was a conflict on some intersecting edges $P$ and $Q$, such that $v \in P \setminus Q$. We know that $P$ is the $x_P$-th element of the set $E(v)$. From the size of the domain of $\gamma$, we can determine the value of $i=|P \setminus Q| = |Q \setminus P|$. Knowing $P$, $v$, and $i$, we can uniquely determine the set $X = \{K \in E \; \colon \; |P \setminus K|=i \text{ and } v \notin K\}$, and thus also the edge $Q$, which is the $x_Q$-th element of $X$.  After the conflict occurred, we uncolored all the vertices from $P \setminus Q$, so $W_j=W_{j-1} \cup (P \setminus Q).$

\item[\textbf{Case 3: $T(j)=(2,x_P,x_Q,\gamma)$ (variant \eqref{thmain-ieds}: sets).}] This means that there was a conflict on some edges $P$ and $Q$, such that $v \in P \cap Q$. We know that $P$ is the $x_P$-th element of $E(v)$, and $Q$ is the $x_Q$-th element of $E(v) \setminus \{P\}$. The vertex $w$ is the smallest vertex in $P \setminus Q$. After the conflict occurred, we uncolored all vertices from $P \setminus (Q \cup \{w\}) \cup \{v\}$, so $W_j=W_{j-1} \cup (P \setminus (Q \cup \{w\})).$ 

\item[\textbf{Case 4: $T(j)=(x_P,x_Q,\gamma)$ (variant \eqref{thmain-iedm}: multisets).}] This case is very similar to the second one. Again we have a conflict on edges $P$ and $Q$, such that $v \in P \setminus Q$. Using $x_P$, we can determine $P$, and $i$ is given by the size of the domain of $\gamma$. Having $v$, $P$, and $i$, we can compute $X$, and then, using $x_Q$, we can find $Q$. After the conflict occurred, we uncolored all the vertices from $P \setminus Q$, so $W_j=W_{j-1} \cup (P \setminus Q).$ \qedhere
\end{description}
\end{proof}

Using \autoref{lem:w-reconstr}, we can reconstruct sets $W_1, \ldots, W_N$. Now we show that using the pair $(T,\varphi_N)$ that was returned by the algorithm and sets $W_j$ for $j \in \{0, \dots, N\}$, we can reconstruct all elements of $C$.

\begin{lemma}
For every $j \in [N]$, the partial ied-coloring $\varphi_{j-1}$ and the number $c_j$ are uniquely determined by $\varphi_j$, $T(j)$, and the sets $W_0, W_1, \dots, W_N$.
\label{lem:c-reconstr}
\end{lemma}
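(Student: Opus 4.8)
The plan is to reverse a single iteration of the algorithm. Since \autoref{lem:w-reconstr} already reconstructs every set $W_0,\dots,W_N$ from the table $T$, I may treat these sets as known. The vertex colored in the $j$-th iteration is the smallest uncolored vertex at that moment, i.e. the smallest element $v$ of $W_{j-1}$, so $v$ is immediately recoverable. Since $\varphi_{j-1}$ and $\varphi_j$ can differ only on the vertices touched during iteration $j$ (namely $v$ and the vertices the algorithm uncolored), it suffices to recover, on the one hand, the colors $\varphi_{j-1}$ assigned to those uncolored vertices, and, on the other hand, the number $c_j$. I would carry this out by a case analysis on $T(j)$ that mirrors the four cases of \autoref{lem:w-reconstr}.

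In the conflict-free case $T(j)=+$, we have that $\varphi_j$ is $\varphi_{j-1}$ extended by the color of $v$; hence $\varphi_{j-1}$ is obtained from $\varphi_j$ by uncoloring $v$, and $c_j$ is the position of $\varphi_j(v)$ in $L_v$, which is well defined since the lists are sets. In each conflict case, the entries $x_P,x_Q$ — together with $v$, the value $i$ read off from the size of the domain of $\gamma$, and, in Case 3, the vertex $w$ (the smallest element of $P\setminus Q$) — determine the edges $P$ and $Q$ exactly as argued in the proof of \autoref{lem:w-reconstr}. The key observation is that the stored color-preserving map $\gamma$ has its image inside a set of vertices — $Q$ in Case 2, $Q\setminus\{v\}$ in Case 3, and $Q\setminus P$ in Case 4 — that is disjoint from the set of vertices uncolored during iteration $j$. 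Consequently every vertex in the image of $\gamma$ is still colored in $\varphi_j$, and its color there equals its color in $\varphi_{j-1}$.

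Granting this, the reconstruction is mechanical. By the color-preserving property, for each uncolored vertex $a$ in the domain of $\gamma$ the color $a$ carried in $\varphi_{j-1}$ equals $\varphi_j(\gamma(a))$; so recoloring every such $a$ other than $v$ with $\varphi_j(\gamma(a))$, and leaving $v$ uncolored, recovers $\varphi_{j-1}$ on the uncolored set, while $\varphi_{j-1}$ agrees with $\varphi_j$ elsewhere. Since $v$ itself always lies in the domain of $\gamma$ in the conflict cases, the color assigned to $v$ during the iteration equals $\varphi_j(\gamma(v))$, and $c_j$ is again read off as the position of this color in $L_v$. This determines $\varphi_{j-1}$ on all of $V$ and the number $c_j$.

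The main obstacle — and the one place where the design of Steps 5a--5c really matters — is verifying that the image of $\gamma$ never meets the uncolored vertices, so that all colors needed for the reconstruction survive into $\varphi_j$. This is precisely why the algorithm uncolors only vertices of $P\setminus Q$ and, in Case 3, is careful to keep $w$ colored while uncoloring $v$. I expect the argument to require the most care here: one must check, case by case, that the domain and image of $\gamma$ are disjoint and that the image avoids the set of uncolored vertices, after which the color-preserving property does the rest.
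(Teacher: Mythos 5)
Your proof is correct and follows essentially the same route as the paper: identify $v$ as the smallest vertex of $W_{j-1}$, recover $P$, $Q$ (and $w$) as in \autoref{lem:w-reconstr}, and use the color-preserving map $\gamma$ to restore the erased colors and read off $c_j$ from $L_v$. In fact you make explicit a point the paper leaves implicit — that the image of $\gamma$ (contained in $Q$, $Q\setminus\{v\}$, or $Q\setminus P$, respectively) is disjoint from the set of vertices uncolored in iteration $j$, so those colors survive into $\varphi_j$ — which is a welcome sharpening rather than a deviation.
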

\begin{proof}
For every $j \in \{0, 1, \dots, N\}$ we have $U_j=V\setminus W_j$. Let $v$ be the smallest vertex in $W_{j-1}$, this is the vertex that was colored in the $j$-th iteration. We know that $\varphi_j$ agrees with $\varphi_{j-1}$ on the set $U_j \cap U_{j-1}$. Again we consider the cases.

\begin{description}
\item[\textbf{Case 1: $T(j)=+$ (both variants \eqref{thmain-ieds} and \eqref{thmain-iedm}: sets and multisets).}] This means that coloring $v$ in the $j$-th iteration did not cause any conflict. Thus for every $u \in U_{j-1}$ we have $\varphi_{j-1}(u)=\varphi_j(u)$. From $\varphi_j$ we get $\varphi_j(v)$, and $c_j$ is the position of the color $\varphi_j(v)$ in the list $L_v$.
\item[\textbf{Case 2: $T(j)=(1,x_P,x_Q,\gamma)$ (variant \eqref{thmain-ieds}: sets).}] Recall that this means that there was a conflict on some edges $P$ and $Q$, where $v \in P \setminus Q$. We determine $P$ and $Q$ as in \autoref{lem:w-reconstr}. To get $\varphi_{j-1}$, we need to recover colors of the vertices from $P \setminus (Q \cup \{v\})$, and also the number $c_j$.
Observe that all colors in $\varphi_{j-1}(P \setminus (Q \cup \{v\}))$  appear on vertices of $Q$, and since $\gamma$ is color-preserving, we can easily reconstruct the coloring $\varphi_{j-1}$ on $P \setminus (Q \cup \{v\})$. In the same way we reconstruct the color that was given to $v$ in the $j$-th iteration, $c_j$ is its position in $L_v$.
\item[\textbf{Case 3: $T(j)=(2,x_P,x_Q,\gamma)$ (variant \eqref{thmain-ieds}: sets).}]
This means there was a conflict on some edges $P$ and $Q$, where $v \in P \cap Q$. We find $P$, $Q$, and $w$ as in \autoref{lem:w-reconstr}. As in the case above, we use $\gamma$ to reconstruct colors of $P \setminus (Q \cup \{w\})$ to get $\varphi_{j-1}$, and also the color that was assigned to $v$ to get $c_j$.
\item[\textbf{Case 4: $T(j)=(x_P,x_Q,\gamma)$ (variant \eqref{thmain-iedm}: multisets).}] Again, we find $P$ and $Q$ as in \autoref{lem:w-reconstr}, and then use $\gamma$ to reconstruct $\varphi_{j-1}$ and $c_j$.\qedhere
\end{description}
\end{proof}

Finally, we can use \autoref{lem:w-reconstr} and \autoref{lem:c-reconstr} to get the following corollary.
\begin{corollary}
The sets $\mathcal{I}$ and $\mathcal{O}$ have the same cardinality. 
\label{cor:oi-equal}
\end{corollary}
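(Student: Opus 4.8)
The plan is to show that the map $\Phi\colon \I \to \O$ sending a sequence $C$ to the pair $(T,\varphi_N)$ it produces is a bijection. By the remark at the end of \autoref{sec:algorithm}, this map is well defined (each $C$ determines a unique output pair), and it is surjective onto $\O$ by the very definition of $\O$ as the set of pairs that can be produced from some $C \in \I$. In particular this already gives $|\O| \leq |\I|$, so it will suffice to prove that $\Phi$ is injective, i.e., that the sequence $C$ can be recovered uniquely from the pair $(T,\varphi_N)$.

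I would carry out the reconstruction in two passes. In the first (forward) pass I recover the sets $W_0, W_1, \dots, W_N$ using the table $T$ alone. We start from $W_0 = V$, which is fixed a priori, and then apply \autoref{lem:w-reconstr} successively: for $j = 1, 2, \dots, N$, the set $W_j$ is uniquely determined by the previously reconstructed $W_{j-1}$ together with the entry $T(j)$. Iterating this for increasing $j$ yields the entire chain $W_0, \dots, W_N$.

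In the second (backward) pass I recover the numbers $c_1, \dots, c_N$. Here the starting point is $\varphi_N$, which is given as part of the output. For $j = N, N-1, \dots, 1$, \autoref{lem:c-reconstr} reconstructs both the previous partial coloring $\varphi_{j-1}$ and the value $c_j$ from $\varphi_j$, the entry $T(j)$, and the already recovered sets $W_0, \dots, W_N$. Running this recursion downward recovers $\varphi_{N-1}, \varphi_{N-2}, \dots, \varphi_0$ and, along the way, all of $c_1, \dots, c_N$, hence the full sequence $C$. Thus $C$ is uniquely determined by $(T,\varphi_N)$, so $\Phi$ is injective. Combined with surjectivity, $\Phi$ is a bijection and $|\I| = |\O|$.

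The step I expect to be the main (if modest) obstacle is getting the dependency structure of the two passes right rather than any single computation: \autoref{lem:c-reconstr} requires the \emph{complete} list of sets $W_0, \dots, W_N$, so these must all be produced first by the forward application of \autoref{lem:w-reconstr} before the backward recovery of the colors can even begin. The backward recursion itself is clean and non-circular, since it only ever reads $\varphi_j$ (already known, starting from $\varphi_N$) to produce $\varphi_{j-1}$. Once this ordering is respected, the corollary is just the composition of the two lemmas.
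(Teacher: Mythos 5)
Your proof is correct and takes essentially the same route as the paper's: surjectivity of the production map (under the standing assumption that no run returns a complete ied-coloring) gives $|\O|\leq|\I|$, and the forward reconstruction of $W_0,\dots,W_N$ from $T$ via \autoref{lem:w-reconstr}, followed by the backward recovery of $\varphi_{j-1}$ and $c_j$ from $\varphi_j$, $T(j)$, and the $W_j$'s via \autoref{lem:c-reconstr}, gives injectivity. Your explicit note that the full forward pass must precede the backward pass merely makes precise a dependency the paper leaves implicit; the argument is otherwise identical.
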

\begin{proof}
From \autoref{lem:w-reconstr} it follows that knowing $T$,  we can determine the sets $W_j$ for every $j \in [N]$. Using them, the table $T$, and the partial ied-coloring $\varphi_N$,  we can reconstruct all partial ied-colorings $\varphi_j$ and the sequence $C$, as shown in \autoref{lem:c-reconstr}. This implies that every possible pair $(T, \varphi_N) \in \O$ is produced by a unique sequence $C \in \I$, which means that $|\mathcal{I}|\leq |\mathcal{O}|$. Since we know that  $|\mathcal{O}|\leq|\mathcal{I}|$ also holds, the proof is complete.
\end{proof}

\subsection{Cardinality of $\mathcal{O}$}

Now we want to estimate the number of possible pairs $(T,\varphi_N)$ that may be produced by the algorithm.
Let us start with estimating the number of possible entries in the table $T$, other than just a $+$ sign.
We use the notation from the previous section. Consider an iteration $j$, where a conflict appeared on two intersecting edges $P$ and $Q$, such that $v \in P$, and $|P \setminus Q|=|Q \setminus P|=i$, and let $\varphi$ be the coloring obtained by assigning the color $c_j$ to $v$.

We will consider the cases of sets and multisets separately. For each $i \in [k-1]$, by $S_i$ ($M_i$, respectively), we denote the set of all possible entries other than +, that may appear in $T(j)$ in case of ieds-coloring (iedm-coloring, respectively).
Clearly if $i \notin I$, then there are no edges $P$ and $Q$ for which $|P \setminus Q|=i$ holds, so there are no possible conflicts. Therefore in such a case we have $S_i = M_i = \emptyset$. Now consider $i \in I$.

\paragraph*{\textbf{Variant \eqref{thmain-ieds}: sets.}}~Then each entry is a quadruple $(1,x_P,x_Q,\gamma)$ or $(2,x_P,x_Q,\gamma)$.
Since $x_P = n_{E(v)}(P)$, clearly we have $x_P \in [\Delta]$.

\paragraph*{Case 1:  $T(j)=(1,x_P,x_Q,\gamma)$.}~Note that in this case $v \in P \setminus Q$, recall $x_Q = n_X(Q)$, where $X = \{K \in E \; \colon \; |P \setminus K|=i \text{ and } v \notin K\}$. Equivalently, $X$ contains all edges, which do not contain $v$, and have exactly $k-i$ common vertices with $P \setminus \{v\}$. There are at most $(\Delta-1)(k-1)$ edges intersecting $P \setminus \{v\}$, which implies that there are at most $\left\lfloor \frac{(\Delta-1)(k-1)}{k-i}\right\rfloor$ elements of $X$, so \[x_Q \in \left[ \left\lfloor \frac{(\Delta-1)(k-1)}{k-i}\right\rfloor \right].\] 

Finally, we need to estimate the number of possible color-preserving functions $\gamma$, recall $\varphi(P \setminus Q) \subseteq \varphi(Q)$ and we want to store the information about $\varphi(u)$ for every $u \in P \setminus Q$. 
Let $\Gamma_1:=\varphi(P \cap Q)$ and $g_1 :=|\Gamma_1|$, clearly $0 \leq g_1 \leq k-i$.
Also, define $\Gamma_2 := \varphi(Q \setminus P) \setminus \varphi(P \cap Q)$ and $g_2 := |\Gamma_2|$, so we have $0 \leq g_2 \leq i$.
Observe that $\varphi(Q)=\Gamma_1 \cup \Gamma_2$ and $\Gamma_1 \cap \Gamma_2 = \emptyset$. Note that $\Gamma_1$ and $\Gamma_2$ may be considered fixed, as they only depend on $P$, $Q$, and $\varphi$.
Fix a linear ordering of elements of $\varphi(Q)$, which is implied by the linear ordering of vertices of $Q$: each color is represented by the smallest vertex in this color. 

We observe that if the condition \eqref{def:partial leds-col} does not hold, every color from $\Gamma_2$ must appear on some vertex of $P \setminus Q$. On the other hand, colors from $\Gamma_1$ may appear on vertices in $P \setminus Q$, but do not have to. Let $\Gamma_1':= \varphi(P \setminus Q) \cap \Gamma_1$ and $g_1':=|\Gamma'_1|$.
Now $\gamma$ is a surjective function from $P \setminus Q$ to $\Gamma_1' \cup \Gamma_2$ and can be chosen in at most
\[
\sum_{g_1'=0}^{g_1} \binom{g_1}{g_1'} \surj(i, g_1'+g_2) \leq \sum_{g_1'=0}^{k-i} \binom{k-i}{g_1'} \surj(i, g_1'+g_2)
\]
ways, where $\surj(i,g_1'+g_2)$ denotes the number of surjective functions from an $i$-element set to a $(g_1'+g_2)$-element set.
We observe that
\[
\surj(i,g_1'+g_2) = (g_1'+g_2)! \cdot \stirling{i}{g_1'+g_2} \leq f_i,
\]
where the last inequality follows from \eqref{fubini}.
Summing up, we obtain that the number of possible functions $\gamma$ is at most 
\begin{equation}\label{eq: gamma}
\sum_{g_1'=0}^{k-i} \binom{k-i}{g_1'} \surj(i, g_1'+g_2) \leq \sum_{g_1'=0}^{k-i} \binom{k-i}{g_1'} f_i = 2^{k-i} f_i.
\end{equation}

\paragraph*{Case 2:  $T(j)=(2,x_P,x_Q,\gamma)$.}~If $T(j)=(2,x_P,x_Q,\gamma)$, then $v \in P \cap Q$, so $x_Q = n_{E(v) \setminus \{P\}} (Q)$ and thus clearly $x_Q \in [\Delta-1]$. The bound on the number of functions $\gamma$ is obtained in a way analogous to the previous case. Again, we are interested in bounding the number of surjective functions from $[i]$ to $[g_1'+g_2]$, where $g_1'$ is the number of colors from $\varphi(P \cap Q \setminus \{v\})$ that appear in $\varphi(P \setminus (Q \cup \{w\}))$, and $g_2 = |\varphi(Q \setminus P) \setminus \varphi(P \cap Q \setminus \{v\})|$. We observe that this number is bounded by \eqref{eq: gamma}.

Summing up, we conclude that 
\begin{align*}
|S_i| \leq &  \underbrace{\Bigg ( \Delta \cdot \left\lfloor \frac{(\Delta-1)(k-1)}{k-i}\right\rfloor \cdot 2^{k-i} f_i \Bigg)}_{\text{Case 1.}} + \underbrace{\Bigg ( \Delta \cdot (\Delta-1) \cdot 2^{k-i} f_i  \Bigg )}_{\text{Case 2.}} \\
\leq & 2 \Delta \cdot \frac{(\Delta-1)(k-1)}{k-i} \cdot 2^{k-i} f_i = \frac{\Delta(\Delta-1)(k-1)}{k-i} \cdot 2^{k-i+1} f_i.
\end{align*}

\paragraph*{\textbf{Variant \eqref{thmain-iedm}: multisets.}}~This variant is significantly simpler. The only possible entry in $T(j)$ is a triple $(x_P,x_Q,\gamma)$. Just as in the Case 1 in the previous variant, we have 
\[
x_P \in [\Delta]
\qquad\text{ and }\qquad
x_Q \in \left[ \left\lfloor \frac{(\Delta-1)(k-1)}{k-i}\right\rfloor \right].
\]
By condition \eqref{def:partial ledm-col}, $\gamma$ can be assumed to be a bijection, so it can be chosen in $i!$ ways.
So, summing up, we obtain the following bound:
\[
|M_i| \leq \frac{\Delta(\Delta-1)(k-1)}{k-i} \cdot i!.
\]

The rest of the proof is exactly the same in both variants: distingushing by sets and multisets. For every $i \in [k-1]$, by $A_i$ let us denote $S_i$, if we are interested in finding an ieds-coloring, or $M_i$, if we are interested in finding an iedm-coloring. Define $a_i := |A_i|$.
 
Now, let us bound the number of all possible tables $T$ that could be produced by the algorithm, denote it by $\#T$. By $p$ denote a number of $+$ symbols in $T$. For every $i \in [k-1]$ let $t_i$ be the number of appearances of the elements of $A_i$ in $T$; if $i \notin I$, then clearly $t_i=0$.
Notice that $p + t_1+ t_2 +\dots + t_{k-1}= N$. Denote by
\[
{N \choose p,t_1,t_2, \ldots, t_{k-1}}=\frac{N!}{p!t_1! t_2!\ldots t_{k-1}!}
\]
the number of partitions of an $N$-element set into the subsets of cardinalities $p, t_1, t_2, \ldots, t_{k-1}$.

\begin{lemma}
The number $\#T$ of all possible tables $T$ is bounded from above by
\begin{equation}
\sum_{s=N-n+1}^N \; \; \sum_{t_1 + \dots + (k-1)t_{k-1}=s} \ {N \choose p,t_1, \dots, t_{k-1}}a_1^{t_1}\cdot \ldots \cdot a_{k-1}^{t_{k-1}}.
\label{eq: numberoftables}
\end{equation}
\label{lem: numberoftables}
\end{lemma}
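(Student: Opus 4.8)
The plan is to count the tables $T$ by grouping them according to how many entries of each type they contain, and then to observe that a conflict of type $i$ always uncolors exactly $i$ vertices; this last fact pins down the admissible range of the weighted sum $s$ and thereby the outer summation.

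First I would record that every entry $T(j)$ is either the symbol $+$ or an element of some $A_i$ with $i \in [k-1]$, and that the multiplicities satisfy $p + t_1 + \dots + t_{k-1} = N$, where $p$ counts the $+$'s and $t_i$ counts the $A_i$-entries. For a fixed profile $(p, t_1, \dots, t_{k-1})$, the assignment of the $N$ slots to the label classes (one class of $+$'s and one class per $A_i$) can be made in $\binom{N}{p, t_1, \dots, t_{k-1}}$ ways, and each slot carrying label $A_i$ can then be filled in at most $a_i = |A_i|$ ways. Hence the number of tables with this profile is at most $\binom{N}{p, t_1, \dots, t_{k-1}} a_1^{t_1} \cdots a_{k-1}^{t_{k-1}}$, and summing over profiles bounds $\#T$. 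Reindexing by $s := t_1 + 2 t_2 + \dots + (k-1) t_{k-1}$ (with $p = N - \sum_i t_i$ then determined, and nonnegative since $\sum_i t_i \le \sum_i i\,t_i = s \le N$) turns this into the claimed double sum, provided I can show that only $s \in \{N-n+1, \dots, N\}$ contributes.

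The key step is a bookkeeping argument on $|U_j|$, the number of colored vertices after iteration $j$. Exactly one vertex $v$ is colored in each iteration; a $+$-entry uncolors nothing, giving $|U_j| = |U_{j-1}| + 1$, while a conflict of type $i$ uncolors the set $P \setminus Q$ in Case 1 and in the multiset variant, and $P \setminus (Q \cup \{w\}) \cup \{v\}$ in Case 2. Inspecting each branch of Step 5 shows that this uncolored set has size exactly $i$ and always contains the freshly colored $v$, so $|U_j| = |U_{j-1}| + 1 - i$. Summing the increments over all $N$ iterations yields $|U_N| = p \cdot 1 + \sum_i t_i(1-i) = N - s$. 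Since we are assuming the algorithm never returns a complete coloring, $\varphi_N$ must leave at least one vertex uncolored (otherwise $\varphi_N$ would be a complete ied-coloring and would have been returned in Step 3), so $|U_N| \le n-1$ and hence $s = N - |U_N| \ge N - n + 1$; combined with the trivial $s \le N$, this confines $s$ to the claimed range and completes the proof.

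I expect the uniform verification that every conflict uncolors exactly $i$ vertices, including $v$, across both variants and all sub-cases of Step 5, to be the delicate part: it is where the precise definitions of the uncolored sets are used, and it is exactly what converts the weighted count $s$ into the transparent quantity $N - |U_N|$. Once this identity and the termination bound $|U_N|\le n-1$ are established, assembling the multinomial estimate and reorganizing the sum by the value of $s$ is purely mechanical.
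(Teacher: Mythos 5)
Your proof is correct and takes essentially the same approach as the paper's: the same multinomial count per profile $(p,t_1,\dots,t_{k-1})$ with each $A_i$-slot filled in at most $a_i$ ways, and the same key observation that a type-$i$ conflict erases exactly $i$ vertices including the freshly colored $v$, which forces $s=\sum_i i\,t_i$ into the range $[N-n+1,N]$. The only difference is cosmetic: you derive the identity $s=N-|U_N|$ by telescoping the increments of $|U_j|$, whereas the paper reaches the same conclusion by classifying the $s$ iterations as those where a conflict occurred or where a vertex was colored that was later uncolored, assigning $i-1$ earlier $+$-iterations to each type-$i$ conflict.
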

\begin{proof}
For every $i \in [k-1]$, the total number of appearances of elements of $A_i$ in $T$ is $t_i$. It implies that for fixed $p,t_1, \ldots, t_{k-1}$, the number of ways to fill the table $T$ is at most
\[
{N \choose p,t_1, \dots, t_{k-1}}a_1^{t_1}\cdot \ldots \cdot a_{k-1}^{t_{k-1}}.
\]
Observe that if an element of $A_i$ appears in $T(j)$, it means that in the $j$-th iteration we uncolored $i-1$ vertices that were colored in previous iterations. Thus to each occurrence of an element of $A_i$ in $T$ we can assign $i-1$ iterations when 
we wrote $+$ into $T$, and each $+$ symbol is assigned at most once.
Let $s$ be the number of iterations when a conflict occurred, or when we colored a vertex that was uncolored later.
There are $t_1+ t_2 + \ldots + t_{k-1}$ iterations when a conflict occured, and $t_2 + 2 t_3 + \ldots + (k-2)t_{k-1}$ iterations when we colored a vertex which was uncolored later, so 
\[
s=(t_1+ t_2 +  \ldots + t_{k-1})+(t_2 + 2 t_3 + \ldots + (k-2)t_{k-1})  =t_1 + 2t_2+ \ldots + (k-1)t_{k-1}.
\]
Clearly $s$ is at most $N$, which is the total number of iterations. On the other hand, after $N$ steps the algorithm returns a partial ied-coloring, where at most $n-1$ vertices are colored. Therefore $N-n+1 \leq s \leq N$.
\end{proof}

We will need the following technical lemma shown in \cite[Lemma 2.7]{BosekAADM}.
\begin{lemma}[Bosek, Czerwiński, Grytczuk, Rz. \cite{BosekAADM}]
\label{lem:estimation}
If $p, t_1, t_2, \dots, t_{k-1}$ are non-negative integers such that $p+t_1+\dots+t_{k-1}=N$ and $t_1+2t_2+\dots+(k-1)t_{k-1}=s \leq N$, then 
\[
{N \choose p,t_1,t_2, \ldots, t_{k-1}} \leq {N \choose s}{s \choose t_1,2t_2,\dots,(k-1)t_{k-1}}q_1^{t_1} \cdot q_2^{2t_2} \cdot \ldots \cdot q_{k-1}^{(k-1)t_{k-1}},
\]
where $q_1=1$ and \[q_i=\frac{i}{i-1}\sqrt[i]{i-1},\]
for $i \geq 1.$
\end{lemma}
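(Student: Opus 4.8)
The plan is to reduce the stated inequality to a clean termwise estimate after two elementary manipulations, and to finish with a single standard convexity bound. First I would write both sides explicitly as ratios of factorials. Setting $T := t_1 + \dots + t_{k-1}$ (so that $p = N - T$), the left-hand side equals $N!/(p!\,t_1!\cdots t_{k-1}!)$, while expanding the two multinomials on the right gives $N!/\big((N-s)!\,t_1!\,(2t_2)!\cdots((k-1)t_{k-1})!\big)\cdot\prod_i q_i^{it_i}$. Cancelling the common factors $N!$ and $t_1!$, and using $q_1=1$ to discard the $i=1$ contribution, the claim becomes equivalent to
\begin{equation*}
\frac{(N-s)!}{(N-T)!}\prod_{i\ge 2}\frac{(it_i)!}{t_i!}\ \le\ \prod_{i\ge 2}q_i^{it_i}. \tag{$\star$}
\end{equation*}

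The key observation is that $(\star)$ is hardest when $N$ is as small as possible. Treating $t_2,\dots,t_{k-1}$ (hence $s$ and $T$) as fixed and viewing the left side as a function of $N$, the factor $(N-s)!/(N-T)!$ weakly decreases as $N$ grows, since passing from $N$ to $N+1$ multiplies it by $(N+1-s)/(N+1-T)\le 1$ (as $T\le s$). The right-hand side does not depend on $N$, so it suffices to prove $(\star)$ at the extreme admissible value $N=s$, where $(N-s)!/(N-T)! = 1/(s-T)!$ and $s - T = \sum_{i\ge2}(i-1)t_i$.

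At $N=s$ the inequality $(\star)$ reads $\frac{1}{(s-T)!}\prod_{i\ge2}\frac{(it_i)!}{t_i!}\le\prod_{i\ge2}q_i^{it_i}$, and now the factors decouple. Since $(s-T)! = \big(\sum_{i\ge2}(i-1)t_i\big)! \ge \prod_{i\ge2}\big((i-1)t_i\big)!$ (the quotient being a multinomial coefficient, hence at least $1$), it is enough to establish the termwise bound $\frac{(it_i)!}{t_i!\,((i-1)t_i)!}\le q_i^{it_i}$ for each $i\ge2$. The left side is exactly $\binom{it_i}{t_i}$, and applying the standard entropy estimate $\binom{a+b}{a}\le (a+b)^{a+b}/(a^a b^b)$ with $a=t_i$ and $b=(i-1)t_i$ gives $\binom{it_i}{t_i}\le i^{it_i}/(i-1)^{(i-1)t_i} = \big(i^i/(i-1)^{i-1}\big)^{t_i} = q_i^{it_i}$, precisely because $q_i^i = i^i/(i-1)^{i-1}$.

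The genuinely nonobvious step, which I would flag as the crux, is recognizing that the awkward global factor $(N-s)!/(N-T)!$ cannot be thrown away termwise (indeed $(it_i)!/t_i!\le q_i^{it_i}$ already fails for $i\ge 4$ and $t_i=1$), but that its monotonicity in $N$ pins down the worst case $N=s$, after which replacing $1/(s-T)!$ by $\prod 1/((i-1)t_i)!$ supplies exactly the slack needed to turn the factorial ratios into binomial coefficients that the entropy bound matches perfectly. Everything else is routine bookkeeping: the factorial cancellations in the first reduction, and the elementary inequality $\binom{a+b}{a}\le(a+b)^{a+b}/a^ab^b$, which follows by isolating the $a$-th term of $(a+b)^{a+b}=\sum_j\binom{a+b}{j}a^jb^{a+b-j}$.
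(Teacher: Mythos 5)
Your proof is correct and complete. One point of context: the paper itself contains no proof of this lemma --- it is imported verbatim as Lemma~2.7 of Bosek, Czerwi\'nski, Grytczuk, and Rz\k{a}\.zewski \cite{BosekAADM} --- so there is no in-paper argument to compare against; your write-up is a valid self-contained substitute, and it follows the route one would expect from the definition of $q_i$ (the identity $q_i^i = i^i/(i-1)^{i-1}$ is exactly what the entropy bound $\binom{a+b}{a}\le (a+b)^{a+b}/(a^ab^b)$ produces with $a=t_i$, $b=(i-1)t_i$). Checking the steps: the reduction to $(\star)$ is a correct factorial cancellation, legitimate because $t_1+2t_2+\cdots+(k-1)t_{k-1}=s$ makes the second multinomial well-formed; the monotonicity is right, since passing from $N$ to $N+1$ scales $(N-s)!/(N-T)!$ by $(N+1-s)/(N+1-T)\le 1$ as $T=\sum_i t_i\le \sum_i it_i=s$, so the worst case is the binding value $N=s$ (admissible, since $p=s-T\ge 0$); the slack step $(s-T)!\ge \prod_{i\ge 2}\bigl((i-1)t_i\bigr)!$ is valid because $s-T=\sum_{i\ge 2}(i-1)t_i$ and the quotient is a multinomial coefficient; and the degenerate cases $t_i=0$ are trivial. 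Your flagged crux is genuine and well spotted: the naive termwise bound $(it_i)!/t_i!\le q_i^{it_i}$ indeed fails already at $i=4$, $t_i=1$ (where $24>256/27$), so redistributing the factor $1/(s-T)!$ across the product to turn each $(it_i)!/\bigl(t_i!\,((i-1)t_i)!\bigr)$ into the binomial coefficient $\binom{it_i}{t_i}$ is precisely what makes the estimate close. An equivalent packaging, which avoids the monotonicity-in-$N$ discussion, is to note directly that $(N-T)!/(N-s)!$ is a falling factorial of length $s-T$ starting at $N-T\ge s-T$, hence at least $(s-T)!$; but this is a cosmetic difference, not a different proof.
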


Now we are ready to prove the final lemma.
\begin{lemma}
Let
\[
R:=\left\lceil 2+ \sum_{i=1}^{k-1} q_i\sqrt[i]{a_i} \right\rceil,
\]
where $q_1=1$ and $q_i=\frac{i}{i-1}\sqrt[i]{i-1}$ for $i \geq 1.$ Then there exists $N_0$ such that for every $N \geq N_0$ the number of elements of the set $\O$ is strictly smaller than $|\I|=[R]^N$.
\label{lem:oissmaller}
\end{lemma}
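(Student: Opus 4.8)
The plan is to bound $|\O|$ by the number of admissible tables times the number of partial colorings, and then compare this with $|\I|=R^N$. Every element of $\O$ is a pair $(T,\varphi_N)$, and since the returned coloring $\varphi_N$ assigns to each of the $n$ vertices either no color or one of the $R$ colors of its (truncated) list, there are at most $(R+1)^n$ possibilities for $\varphi_N$. This factor is a constant independent of $N$, so it suffices to control the number $\#T$ of tables from \autoref{lem: numberoftables}; concretely, I will establish $|\O|\le (R+1)^n\cdot \#T<R^N$ for all sufficiently large $N$.

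The core is a combinatorial estimate of the bound \eqref{eq: numberoftables}. Set $\sigma:=\sum_{i=1}^{k-1}q_i\sqrt[i]{a_i}$, so that $R=\lceil 2+\sigma\rceil$ and in particular $\sigma\le R-2$. I would apply \autoref{lem:estimation} to each summand, replacing $\binom{N}{p,t_1,\dots,t_{k-1}}$ by $\binom{N}{s}\binom{s}{t_1,2t_2,\dots,(k-1)t_{k-1}}\prod_i q_i^{i t_i}$. Multiplying by $\prod_i a_i^{t_i}$ and using the identity $q_i^{i}a_i=\bigl(q_i\sqrt[i]{a_i}\bigr)^{i}$, the weight carried by each part $m_i:=i t_i$ becomes exactly $\bigl(q_i\sqrt[i]{a_i}\bigr)^{m_i}$. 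Dropping the divisibility constraint $i\mid m_i$ (which only adds non-negative terms) and applying the multinomial theorem gives
\[
\sum_{t_1+2t_2+\dots+(k-1)t_{k-1}=s}\binom{s}{t_1,2t_2,\dots,(k-1)t_{k-1}}\prod_{i=1}^{k-1}\bigl(q_i\sqrt[i]{a_i}\bigr)^{i t_i}\;\le\;\Big(\sum_{i=1}^{k-1}q_i\sqrt[i]{a_i}\Big)^{s}=\sigma^{s},
\]
and hence $\#T\le \sum_{s=N-n+1}^{N}\binom{N}{s}\,\sigma^{s}$.

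It then remains to finish asymptotically. For $s$ in the range $[N-n+1,N]$ we have $N-s\le n-1$, so once $N\ge 2(n-1)$ we may bound $\binom{N}{s}=\binom{N}{N-s}\le\binom{N}{n-1}\le N^{n-1}$, while $\sigma^{s}\le C_1\,\sigma^{N}$ for a constant $C_1$ depending only on $n$ and $\sigma$. Since the sum has only $n$ terms, this yields $|\O|\le (R+1)^n\cdot n\cdot C_1\cdot N^{n-1}\sigma^{N}=C\,N^{n-1}\sigma^{N}$ for a constant $C$ independent of $N$. Because $R=\lceil 2+\sigma\rceil\ge \sigma+2>\sigma$, we have $\sigma/R\le 1-2/R<1$, so
\[
\frac{|\O|}{R^N}\;\le\; C\,N^{n-1}\Big(\frac{\sigma}{R}\Big)^{N}\xrightarrow[N\to\infty]{}0,
\]
because exponential decay dominates polynomial growth. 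Thus there exists $N_0$ such that $|\O|<R^N=|\I|$ for every $N\ge N_0$, as required.

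The only genuinely delicate step is the second one: recognizing that, after applying \autoref{lem:estimation}, the exponents conspire so that $q_i^{i}a_i=\bigl(q_i\sqrt[i]{a_i}\bigr)^{i}$ and the inner sum collapses, via the multinomial theorem, into the single quantity $\sigma^{s}$ whose base is exactly $R-2$. This is precisely what the definition of $R$ is engineered to guarantee, and it is what converts the per-part weights $a_i$ into the geometric factor $(\sigma/R)^N$ that drives the contradiction. The remaining ingredients — the $(R+1)^n$ count of partial colorings and the polynomial-versus-exponential limit — are routine.
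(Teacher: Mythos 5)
Your proof is correct and takes essentially the same route as the paper: the same $(R+1)^n$ count of partial colorings, the same combination of \autoref{lem: numberoftables} and \autoref{lem:estimation}, and the same multinomial collapse of the inner sum to $\sigma^s$ — where you are in fact slightly more careful than the paper, which writes that step as an equality even though the exponents $it_i$ range only over multiples of $i$, so it is really the inequality you state. The only divergence is cosmetic and lies in the final asymptotics: you keep the restriction $s\ge N-n+1$ and bound $\binom{N}{s}\le N^{n-1}$ to get $|\O|\le C\,N^{n-1}\sigma^N$ (incidentally the sharper estimate the paper itself deploys later in \autoref{sec:sequences}), whereas the paper's proof of this lemma simply extends the sum over all $s\in[1,N]$ and uses $(1+\sigma)^N\le(R-1)^N$; both yield $|\O|/R^N\to 0$ and hence the claim.
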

\begin{proof}
From Lemma \ref{lem: numberoftables} it follows that the number $\#T$ of all possible tables $T$ that may be produced is bounded by \eqref{eq: numberoftables}. Using \autoref{lem:estimation}, we have that
\begin{align}
\begin{split}
\#T \leq &\sum_{s=N-n+1}^N \; \; \sum_{t_1 + \dots + (k-1)t_{k-1}=s} \ {N \choose p,t_1, \dots, t_{k-1}}a_1^{t_1}\cdot \ldots \cdot a_{k-1}^{t_{k-1}} \\ \leq & \sum_{s=N-n+1}^N {N \choose s} \sum_{t_1 + \dots + (k-1)t_{k-1}=s} {s \choose t_1,2t_2,\dots,(k-1)t_{k-1}} \prod_{i=1}^{k-1} (q_i\sqrt[i]{a_i})^{i t_i}.
\label{eq:numberoftablesestimated}
\end{split}
\end{align}
Because $t_1+2t_2+\dots+(k-1)t_{k-1}=s$, we can use the Multinomial Theorem to write
\begin{equation}
\sum_{t_1+\ldots+(k-1)t_{k-1}=s} {s \choose t_1,2t_2,\dots,(k-1)t_{k-1}} \prod_{i=1}^{k-1} (q_i\sqrt[i]{a_i})^{i t_i} = \left(\sum_{i=1}^{k-1} q_i\sqrt[i]{a_i} \right)^s.
\label{eq:numberoftablesestimated2}
\end{equation}

Now consider the number of all possible partial ied-colorings $\varphi_N$ that can be produced by the algorithm.
Recall that all lists $L_v$ were truncated to exactly $R$ elements.
We can assign one of $R$ colors to each vertex $v \in V$ or leave $v$ uncolored, which gives at most $(R+1)^n$ possible partial ied-colorings. The value $(R+1)^n$ does not depend on $N$, so from \eqref{eq:numberoftablesestimated} and \eqref{eq:numberoftablesestimated2}, if $N$ is sufficiently large, we get
\[
\#T \leq   \sum_{s=1}^N {N \choose s} \left(\sum_{i=1}^{k-1} q_i\sqrt[i]{a_i} \right)^s = \left(1+ \sum_{i=1}^{k-1} q_i\sqrt[i]{a_i} \right)^N.
\]
It means that for $N$ large enough we obtain the following bound on $\O$.
\begin{align}
\begin{split}
|\mathcal{O}| \leq &  \#T \cdot (R+1)^n \leq (R+1)^n \sum_{s=N-n+1}^N{N \choose s} \left(\sum_{i=1}^{k-1} q_i\sqrt[i]{a_i} \right)^s \\
< & (R+1)^n \sum_{s=1}^N {N \choose s} \left(\sum_{i=1}^{k-1} q_i\sqrt[i]{a_i} \right)^s = (R+1)^n \left(1+ \sum_{i=1}^{k-1} q_i\sqrt[i]{a_i} \right)^N \\
\leq & (R+1)^n (R-1)^N  < R^N=|\mathcal{I}|,
\end{split}
\label{eq:r-bound}
\end{align}
which completes the proof of the lemma.
\end{proof}

Assuming that we never produce a complete ied-coloring of $H$, we shown in \autoref{cor:oi-equal} that $|\mathcal{O}|=|\mathcal{I}|$. On the other hand, by \autoref{lem:oissmaller} we get that if $N$ is sufficiently large, then $|\mathcal{O}| <|\mathcal{I}|$, which is a contradiction. Therefore, for any fixed hypergraph $H$ with lists of size at least $R$, there always exists at least one sequence $C$ of length $N$, for which the algorithm returns a complete ied-coloring of $H$.

Looking back at the definition of $a_i$, we get:
\[
R = \begin{cases} 
\left\lceil 2 + \sum_{i \in I} q_i \sqrt[i]{\frac{\Delta(\Delta-1)(k-1)}{k-i}2^{k-i+1} f_i}\right\rceil & \text{in the case of ieds-coloring,}\\[1em]
\left\lceil 2 + \sum_{i \in I} q_i \sqrt[i]{\frac{\Delta(\Delta-1)(k-1)}{k-i}i!}\right\rceil & \text{in the case of iedm-coloring.}
\end{cases}
\]
This completes the proof of both statements in Theorem \ref{thm:main}. \qed

\subsection{Computational complexity}\label{sec:numiters}

In the previous section we showed that if $N$ is sufficiently large, then for some $C$ the algorithm returns a complete ied-coloring. It appears that the algorithm that we used is actually quite efficient. Let us estimate the expected complexity of the algorithm, if $C$ is chosen uniformly at random.

We use the notation from the previous section, in particular $H$ is a fixed hypergraph that we want to color, and $N$ is a large integer. For a sequence $C \in [R]^N$, define a random variable $Z(C)$ as follows:
\[Z(C) = 
\begin{cases}
t & \text{ if the algorithm returns a complete ied-coloring after $t$ iterations,}\\
N+1 & \text{ if the algorithm fails to find a complete ied-coloring.}
\end{cases}
\] 
Note that $Z(C) < N+1$ if and only if the algorithm returns a complete ied-coloring. 
Let us assume that $Z(C) > t$, i.e., the algorithm did not terminate before reaching iteration $t$, and consider the pair $(T, \varphi_t)$ after the $t$-th iteration. Let $\mathcal{O}(t)$ stand for the number of possible pairs $(T, \varphi_t)$ that may be created in the algorithm after $t$ iterations. Observe that if $t > t_0 :=n \; \frac{\ln (R+1)}{\ln \frac{R-0.5}{R-1}}$, from \eqref{eq:r-bound} we get
\[
\mathcal{O}(t) < (R-1)^t(R+1)^n < (R-0.5)^t. 
\]
Hence, for $t \in [t_0+1,N]$, the probability that the algorithm does not terminate after at most $t$ iterations is
\[
\mathbb{P}(Z(C)>t) = \frac{\O(t)}{R^t} < \left (\frac{R-0.5}{R} \right ) ^t.
\]

Assuming that $C$ is chosen uniformly at random, we can estimate the expected value of $Z(C)$ as follows:
\begin{align*}
\mathbb{E} (Z) = &\sum_{t=0}^{N} \mathbb{P}(Z(C)>t) = \sum_{t=0}^{\lceil t_0 \rceil} \mathbb{P}(Z(C)>t)  + \sum_{t= \lceil t_0 \rceil+1}^{N} \mathbb{P}(Z(C)>t) <  \sum_{t=0}^{\lceil t_0 \rceil} 1  +  \sum_{t=\lceil t_0 \rceil+1}^{N} \left (\frac{R-0.5}{R} \right) ^t \\
<& \lceil t_0 \rceil  + 1 + \sum_{t=0}^{\infty} \left (\frac{R-0.5}{R} \right) ^t = \lceil t_0 \rceil  + 1 + \frac{1}{1-\frac{R-0.5}{R}} = \lceil t_0 \rceil + 1 +2R \leq  n \; \frac{\ln (R+1)}{\ln \frac{R-0.5}{R-1}} + 2R +2.
\end{align*}

Notice that 
\[
\lim_{R \to \infty} R \; \ln\left(\frac{R-0.5}{R-1}\right) = \frac{1}{2},
\]
so
\[
\lim_{R \to \infty} \left ( R \ln R \right ) \left ( \ln\left(\frac{R-0.5}{R-1}\right)/\ln(R+1) \right )= \frac{1}{2},
\]
which implies that $\mathbb{E} (Z)=O(nR\ln R$). Moreover, recall that the value of $R$ is bounded by a function of $k$ and $\Delta$, so for fixed $k$ and $\Delta$ the expected number of iterations of the algorithm is linear in $n$.
Finally, observe that each iteration of the algorithm clearly takes polynomial time; actually, if $k$ and $\Delta$ are fixed, then the complexity of each iteration is dominated by finding the least uncolored vertex, which can be done in time $O(\log n)$, using a priority queue.
Summing up, the expected complexity of the algorithm is polynomial, and if $k$ and $\Delta$ are fixed, then it is $O(n \log n)$.

\section{Special cases and extensions}\label{sec:special}
In this section we discuss the applicability of our approach. First, in \autoref{sec:corollaries}, we show how the results in \autoref{thm:main} translate to the graph labeling problems, defined in \autoref{sec:graphlabeling}. We also show another example, where the input hypergraph is a so-called {\em configuration}.
Then, in \autoref{sec:sequences}, we show that our approach can be easily adapted to solve other problems of similar flavor. In particular, we show how to modify the algorithm, so that it produces colorings distinguishing intersecting edges by {\em sequences} of colors.

\subsection{Special cases} \label{sec:corollaries}

Theorem \ref{thm:main} gives us some interesting corollaries concerning graph labeling problems defined in \autoref{sec:graphlabeling}.
Throughout this section, $k$ is a fixed integer greater than 2, and $G=(V,E)$ is a $k$-regular graph, whose every edge $e$  is  assigned with a list $L_{e}$ of available labels. In case of total labelings, we assume that also every vertex $v$ has its list $L_v$.

\begin{corollary} \label{cor:edge-set}
If each of the lists assigned to edges has at least
\begin{equation}
\label{eq:ndes}
\left\lceil 2 + \frac{k-1}{k-2} \sqrt[k-1]{8(k-1)(k-2)f_{k-1}}\right\rceil
\end{equation}
elements, then there exists a list edge labeling of $G$, distinguishing neighbors by sets. Moreover, if $k \geq 1540$, then  \eqref{eq:ndes} is bounded from above by $0.54k$.
\end{corollary}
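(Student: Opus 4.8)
The plan is to read the corollary off \autoref{thm:main}, statement \eqref{thmain-ieds}, and then to convert the resulting closed form into the numerical estimate by an \emph{effective} asymptotic analysis.

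For the existence statement I would use the dual hypergraph $H=(E,Q)$ of $G$ introduced in \autoref{sec:hyper}: since $G$ is $k$-regular, $H$ is $k$-uniform with $\Delta(H)=2$ and $I(H)=\{k-1\}$, and an edge labeling of $G$ distinguishes neighbors by sets precisely when it is an ieds-coloring of $H$. Hence it suffices to substitute $\Delta=2$ and $I=\{k-1\}$ into the bound of \autoref{thm:main}, statement \eqref{thmain-ieds}. As $I$ is a singleton, the sum over $i\in I$ reduces to its $i=k-1$ term; plugging in $\Delta(\Delta-1)=2$, $k-i=1$, $2^{k-i+1}=4$ and $q_{k-1}=\frac{k-1}{k-2}\sqrt[k-1]{k-2}$ turns the radicand into $8(k-1)f_{k-1}$ and the whole term into $\frac{k-1}{k-2}\sqrt[k-1]{8(k-1)(k-2)f_{k-1}}$, which is exactly \eqref{eq:ndes}.

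It remains to bound \eqref{eq:ndes} by $0.54k$ for $k\ge 1540$. Setting $n=k-1$, the dominant factor is $\sqrt[n]{f_n}$, and here I would use the asymptotics of the ordered Bell numbers, $f_n\sim \frac{n!}{2(\ln 2)^{n+1}}$, together with Stirling's formula $(n!)^{1/n}\sim n/e$, to get $\sqrt[n]{f_n}\sim \frac{n}{e\ln 2}$. Since the prefactor $\frac{k-1}{k-2}$ and the polynomial radical $\sqrt[k-1]{8(k-1)(k-2)}$ both tend to $1$, and $\frac{(k-1)^2}{k-2}=k+\frac{1}{k-2}$, the expression \eqref{eq:ndes} is asymptotically $\frac{k}{e\ln 2}\approx 0.5307\,k$, safely under $0.54k$.

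The real work, and the only genuine obstacle, is to make this estimate explicit enough to pin down the threshold $k\ge 1540$, because the margin there is thin. I would replace each asymptotic equivalence by an honest inequality: a clean upper bound for the ordered Bell numbers valid for large $n$, say $f_n\le \frac{n!}{(\ln 2)^{n+1}}$ (the superfluous constant factor is harmless, as its $n$-th root tends to $1$), and the explicit Stirling bound $n!\le \sqrt{2\pi n}\,(n/e)^n e^{1/(12n)}$. Taking $n$-th roots converts the polynomial prefactors into factors of the form $1+O\!\big(\tfrac{\log n}{n}\big)$; evaluating each of these at $n=1539$ shows their product inflates the clean constant $\tfrac{1}{e\ln 2}\approx 0.5307$ to roughly $0.538$, leaving only a small gap to $0.54$. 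One then verifies that the quantity in \eqref{eq:ndes}, increased by $2$, stays below $0.54k$ at $k=1540$ (so that its ceiling does as well), and that the inequality only improves for larger $k$, since the slack $(0.54-\tfrac{1}{e\ln 2})k$ grows while all the $1/n$-power corrections decrease toward $1$. Controlling those correction factors tightly enough not to consume the remaining margin at the boundary value $1540$ is the delicate point.
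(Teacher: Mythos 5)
Your proposal follows the paper's proof exactly: the same reduction to the dual hypergraph ($k$-uniform, $\Delta=2$, $I=\{k-1\}$) and substitution into \autoref{thm:main}\eqref{thmain-ieds} yielding \eqref{eq:ndes}, and the same use of Barth\'el\'emy's asymptotics $f_i = \frac{i!}{2(\ln 2)^{i+1}}(1+o(1))$ together with Stirling to obtain the limit $\frac{1}{e\ln 2}\approx 0.531$, with the threshold $k\geq 1540$ settled by explicit numerical verification. Your explicit-inequality treatment of that verification (a clean upper bound on $f_n$ plus quantitative Stirling, checked at the boundary value with the correction factors around $0.539$ of $k$) is sound and merely spells out what the paper compresses into the phrase ``from a direct calculation.''
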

\begin{proof}
Let $H=(E, Q)$ be the dual hypergraph of $G$. As we observed before, $H$ is $k$-uniform with $\Delta(H)=2$ and $I(H)=\{k-1\}$. From \autoref{thm:main}\eqref{thmain-ieds} we get \eqref{eq:ndes}.

Barth{\'e}l{\'e}my~\cite{barthelemy1980asymptotic} showed that $f_i$ has the following asymptotic behavior:
\[
f_i = \frac{i!}{2(\ln2)^{i+1}}\left(1 + o(1)\right),
\]
so to get the latter bound, observe that 
\[
\lim_{k\rightarrow\infty} \frac{\left\lceil 2 + \frac{k-1}{k-2} \sqrt[k-1]{16(k-1)(k-2)f_{k-1}}\right\rceil}{k}=\lim_{k\rightarrow\infty}\frac{\sqrt[k]{k!\left(1+o(1)\right)}}{k \ln2 }= \frac{1}{e\ln2} \approx 0.531.
\]
It means that if $k$ is sufficiently large, then
\begin{equation} \label{eq:054k}
\frac{1}{e\ln2} \cdot k \leq 0.54k. 
\end{equation}
From a direct calculation we get that \eqref{eq:054k} holds for $k \geq 1540$.
\end{proof}

Analogously we can prove the following three results.

\begin{corollary} \label{cor:total-set}
If each of the lists assigned to edges and vertices has at least
\begin{equation}
\label{eq:ndts}
\left\lceil 2 + \frac{k}{k-1} \sqrt[k]{8k(k-1)f_k}\right\rceil
\end{equation}
elements, then there exists a list total labeling of $G$, distinguishing neighbors by sets. Moreover, if $k \geq 1600$, then  \eqref{eq:ndts} is bounded from above by $0.54k$.
\end{corollary}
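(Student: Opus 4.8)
The plan is to mirror the proof of \autoref{cor:edge-set}, replacing the dual hypergraph by the total hypergraph. First I would let $H=(V\cup E, Q)$ be the total hypergraph of $G$. As recorded in \autoref{sec:hyper}, since $G$ is $k$-regular, $H$ is $(k+1)$-uniform, satisfies $\Delta(H)=2$, and has $I(H)=\{k\}$; moreover a total labeling of $G$ distinguishes neighbors by sets precisely when it is an ieds-coloring of $H$. Thus it suffices to apply \autoref{thm:main}\eqref{thmain-ieds} to $H$ and translate the resulting list-size bound.

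Next I would carry out the substitution. In the notation of \autoref{thm:main} the uniformity is $k+1$, the only element of $I(H)$ is $i=k$, and $\Delta=2$; in particular $(k+1)-i=1$. Plugging these into the single summand of the ieds-bound gives, inside the root, $\frac{\Delta(\Delta-1)\bigl((k+1)-1\bigr)}{(k+1)-i}\,2^{(k+1)-i+1}f_i = 8k\,f_k$, while $q_i=q_k=\frac{k}{k-1}\sqrt[k]{k-1}$. Folding the factor $\sqrt[k]{k-1}$ under the $k$-th root yields exactly $\frac{k}{k-1}\sqrt[k]{8k(k-1)f_k}$, which is \eqref{eq:ndts}, establishing the existence part.

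For the asymptotic estimate I would proceed exactly as in \autoref{cor:edge-set}, using Barth\'el\'emy's formula $f_k = \frac{k!}{2(\ln 2)^{k+1}}(1+o(1))$ together with Stirling's approximation $\sqrt[k]{k!}\sim \frac{k}{e}$. After dividing by $k$, the additive constant $2$, the prefactor $\frac{k}{k-1}\to 1$, the ceiling, and the polynomial factor $\sqrt[k]{8k(k-1)}\to 1$ all become negligible, and one obtains $\frac{1}{k}\sqrt[k]{f_k}\to \frac{1}{e\ln 2}$. Hence the whole expression divided by $k$ tends to $\frac{1}{e\ln 2}\approx 0.531 < 0.54$, so for $k$ large enough \eqref{eq:ndts} is at most $0.54k$.

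Finally, the only genuinely different point from \autoref{cor:edge-set} is pinning down the explicit threshold. Since the limit lies strictly below $0.54$, some finite bound on $k$ works, but making the transition explicit requires controlling the ceiling together with the $+2$ term and the slowly vanishing polynomial factors inside the $k$-th root. A direct calculation along these lines gives $k\geq 1600$. I expect this explicit verification to be the main (though entirely routine) piece of work, the mild increase over the threshold $1540$ of \autoref{cor:edge-set} reflecting that the relevant uniformity here is $k+1$ rather than $k$, so the slow convergence to $\frac{1}{e\ln 2}$ needs a slightly larger $k$ to fall below $0.54$.
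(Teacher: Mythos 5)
Your proposal is correct and follows exactly the route the paper intends: the paper proves only \autoref{cor:edge-set} in detail and states that \autoref{cor:total-set} is proved ``analogously,'' which is precisely your argument of applying \autoref{thm:main}\eqref{thmain-ieds} to the $(k+1)$-uniform total hypergraph with $\Delta=2$ and $I=\{k\}$, yielding $8kf_k$ under the root, and then repeating the Barth\'el\'emy-based limit $\frac{1}{e\ln 2}\approx 0.531$ with a direct calculation for the threshold $k\geq 1600$. Your substitutions and the folding of $\sqrt[k]{k-1}$ under the root check out, so nothing is missing.
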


\begin{corollary} \label{cor:edge-multiset}
If each of the lists assigned to edges has at least
\begin{equation}
\label{eq:ndem}
\left\lceil 2 + \frac{k-1}{k-2} \sqrt[k-1]{2(k-1)(k-2)(k-1)!}\right\rceil.
\end{equation}
elements, then there exists a list edge labeling of $G$, distinguishing neighbors by multisets. Moreover, if $k \geq 5435$, then  \eqref{eq:ndem} is bounded from above by $0.37k$.
\end{corollary}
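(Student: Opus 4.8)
The plan is to mirror the proof of Corollary~\ref{cor:edge-set}, but to invoke part~\eqref{thmain-iedm} of Theorem~\ref{thm:main} in place of part~\eqref{thmain-ieds}. As there, I would pass to the dual hypergraph $H=(E,Q)$ of $G$, which is $k$-uniform with $\Delta(H)=2$ and $I(H)=\{k-1\}$; by the discussion in \autoref{sec:hyper}, a list edge labeling of $G$ distinguishing neighbors by multisets is precisely a list iedm-coloring of $H$. The sum in Theorem~\ref{thm:main}\eqref{thmain-iedm} then collapses to its single term $i=k-1$, for which $\Delta(\Delta-1)=2$, $k-i=1$, and $i!=(k-1)!$, so the radicand equals $2(k-1)(k-1)!$. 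Absorbing the factor $\sqrt[k-1]{k-2}$ from $q_{k-1}=\tfrac{k-1}{k-2}\sqrt[k-1]{k-2}$ into the root converts this into $\tfrac{k-1}{k-2}\sqrt[k-1]{2(k-1)(k-2)(k-1)!}$, which is exactly the bound~\eqref{eq:ndem}.

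For the asymptotic claim, the key simplification relative to the sets case is that no Bell-number asymptotics are required: only Stirling's formula for $(k-1)!$ is needed. I would write $\sqrt[k-1]{(k-1)!}\sim (k-1)/e$, note that the $(k-1)$-th root of the polynomial prefactor $2(k-1)(k-2)$ tends to $1$, and that $\tfrac{k-1}{k-2}\to 1$, so that
\[
\lim_{k\to\infty}\frac{1}{k}\left(2+\frac{k-1}{k-2}\sqrt[k-1]{2(k-1)(k-2)(k-1)!}\right)=\frac{1}{e}\approx 0.3679.
\]
Since $1/e<0.37$, the expression~\eqref{eq:ndem} lies below $0.37k$ once $k$ is large enough, and I would then pin down the explicit threshold $k\ge 5435$ by a direct computation, in exact analogy with the threshold $k\ge 1540$ obtained in Corollary~\ref{cor:edge-set}.

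The specialization of the parameters is entirely routine; the only genuinely delicate point is certifying the explicit threshold. Because $1/e\approx 0.3679$ is only marginally below the target constant $0.37$, the convergence is slow, and the correction terms (the ceiling, the additive $2$, the factor $\tfrac{k-1}{k-2}$, and the subexponential factors pulled under the root) must all be controlled simultaneously to guarantee that the crossover really occurs at $k=5435$ rather than later. I would handle this by bounding $(k-1)!$ from above via the explicit Stirling inequality $(k-1)!\le \sqrt{2\pi(k-1)}\,((k-1)/e)^{k-1}e^{1/(12(k-1))}$, thereby reducing~\eqref{eq:ndem} to an explicit elementary function of $k$, and then verifying the desired inequality numerically at the claimed threshold.
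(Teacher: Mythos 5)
Your proposal coincides with the paper's proof: the paper treats this corollary as ``analogous'' to Corollary~\ref{cor:edge-set}, i.e.\ it applies Theorem~\ref{thm:main}\eqref{thmain-iedm} to the dual hypergraph of $G$ (which is $k$-uniform with $\Delta=2$ and $I=\{k-1\}$), absorbs the factor $\sqrt[k-1]{k-2}$ from $q_{k-1}$ into the root to obtain exactly~\eqref{eq:ndem}, and justifies the $0.37k$ bound by noting that $\eqref{eq:ndem}/k \to 1/e \approx 0.368$ (via Stirling, no Fubini-number asymptotics needed) together with a direct calculation for the threshold. The only refinement worth adding is that your numerical verification at the single value $k=5435$ should be accompanied by a (routine) monotonicity or uniform-bound argument so the inequality holds for \emph{all} $k\ge 5435$, but this is precisely the ``direct calculation'' the paper itself leaves implicit.
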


\begin{corollary} \label{cor:total-multiset}
If each of the lists assigned to edges and vertices has at least
\begin{equation}
\label{eq:ndtm}
\left\lceil 2 + \frac{k}{k-1} \sqrt[k]{2k(k-1)k!}\right\rceil.
\end{equation}
elements, then there exists a list total labeling of $G$, distinguishing neighbors by multisets. Moreover, if $k \geq 5650$, then  \eqref{eq:ndtm} is bounded from above by $0.37k$.
\end{corollary}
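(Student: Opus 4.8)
The plan is to mirror the proof of \autoref{cor:edge-set}, but to work with the \emph{total} hypergraph of $G$ in place of the dual hypergraph, and to invoke part \eqref{thmain-iedm} of \autoref{thm:main} instead of part \eqref{thmain-ieds}. First I would let $H=(V\cup E,Q)$ be the total hypergraph of $G$. As recalled in \autoref{sec:hyper}, since $G$ is $k$-regular the hypergraph $H$ is $(k+1)$-uniform with $\Delta(H)\leq 2$ and $I(H)=\{k\}$, and a total labeling of $G$ distinguishes neighbors by multisets if and only if it is an iedm-coloring of $H$. Thus the existence question for the labeling reduces directly to the existence of a list iedm-coloring of $H$.

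Next I would substitute these parameters into the bound of \autoref{thm:main}\eqref{thmain-iedm}. The uniformity of $H$ is $k+1$ (playing the role of the ``$k$'' in the statement of the theorem) and $I$ consists of the single value $i=k$, so the sum collapses to one term. With $\Delta=2$ we have $\Delta(\Delta-1)=2$; the factor $(k+1)-1$ equals $k$, the factor $(k+1)-i$ equals $1$, and $i!=k!$; moreover $q_k=\frac{k}{k-1}\sqrt[k]{k-1}$. Collecting these factors under the $k$-th root yields exactly \[\frac{k}{k-1}\sqrt[k]{2k(k-1)k!},\] which is the threshold \eqref{eq:ndtm} guaranteeing the desired list total labeling.

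For the ``moreover'' part I would analyze the asymptotics of \eqref{eq:ndtm} divided by $k$. Using Stirling's approximation in the form $(k!)^{1/k}=\frac{k}{e}(1+o(1))$, together with $\frac{k}{k-1}\to 1$ and $\sqrt[k]{2k(k-1)}\to 1$ (since $2k(k-1)=\Theta(k^2)$ and $(k^2)^{1/k}\to 1$), I would obtain \[\lim_{k\to\infty}\frac{1}{k}\left\lceil 2+\frac{k}{k-1}\sqrt[k]{2k(k-1)k!}\right\rceil=\frac{1}{e}\approx 0.3679.\] Because $1/e<0.37$, for all sufficiently large $k$ the expression \eqref{eq:ndtm} lies below $0.37k$, and a direct numerical check would be used to pin the threshold at $k\geq 5650$.

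The main obstacle I anticipate is precisely this last numerical step. The limiting constant $1/e\approx 0.36788$ sits very close to the target $0.37$, so the convergence is slow and the additive and lower-order contributions (the constant $2$, the ceiling, and the subleading factors of $\frac{k}{k-1}$ and $\sqrt[k]{2k(k-1)}$) genuinely matter: one must bound them carefully to certify that the inequality first holds at $k=5650$ rather than merely in the limit. The structural reduction to the total hypergraph and the substitution into \autoref{thm:main} are otherwise entirely routine.
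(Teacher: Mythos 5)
Your proposal is correct and takes essentially the same route as the paper: the paper proves this corollary ``analogously'' to \autoref{cor:edge-set}, namely by passing to the $(k+1)$-uniform total hypergraph with $\Delta(H)\leq 2$ and $I(H)=\{k\}$ and substituting into \autoref{thm:main}\eqref{thmain-iedm}, which collapses to the single term $q_k\sqrt[k]{2k\cdot k!}=\frac{k}{k-1}\sqrt[k]{2k(k-1)k!}$ exactly as you compute. For the ``moreover'' part the paper likewise only records that \eqref{eq:ndtm} divided by $k$ tends to $1/e\approx 0.368$ (your Stirling argument) and leaves the threshold $k\geq 5650$ to a direct calculation, just as you propose.
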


The bounds in Corollaries \ref{cor:edge-multiset} and \ref{cor:total-multiset} follow from the fact that the limit of \eqref{eq:ndem} divided by $k$, and also of \eqref{eq:ndtm} divided by $k$, is $1/e \approx 0.368$.

Let us show another example of application of \autoref{thm:main}. We say that a set of lines is in {\em general position} if no three lines share a common point.

\begin{corollary} \label{cor:lines}
Let $\cal L$ be a family of straight lines in general position.
Fix a family $\cal P$ of points, such that each line in $\cal L$ has precisely $k \geq 3$ points.
Assume that each $p \in {\cal P}$ has a list of at least $R$ colors. For each $p \in {\cal P}$ we can choose a color from its list, so that no two lines that share a common point
\begin{compactenum}[(a)]
\item have the same set of colors, if $R \geq \left\lceil 2 + \frac{k-1}{k-2} \sqrt[k-1]{8(k-1)(k-2)f_{k-1}}\right\rceil$,
\item have the same multiset of colors, if $R \geq \left\lceil 2 + \frac{k-1}{k-2} \sqrt[k-1]{2(k-1)(k-2)(k-1)!}\right\rceil$.
\end{compactenum}
\end{corollary}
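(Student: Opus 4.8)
The plan is to encode the configuration of lines and points as a $k$-uniform hypergraph and then invoke \autoref{thm:main} directly. Concretely, I would set $H=(V,E)$ with vertex set $V:={\cal P}$, and for each line $\ell \in {\cal L}$ I would introduce a hyperedge consisting of the $k$ points of ${\cal P}$ lying on $\ell$. Since every line carries exactly $k$ points, $H$ is $k$-uniform, and any list assigned to a point $p$ becomes a list on the corresponding vertex of $H$. Distinct lines share at most one point, so for $k\ge 3$ no two lines determine the same point set, and $H$ has no multiple edges, as required.

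The heart of the argument is reading off the parameters $\Delta(H)$ and $I(H)$ from the geometry. First, because the lines are in general position, no point lies on three or more of them, so every vertex of $H$ has degree at most $2$; hence we may take $\Delta=2$. Second, two distinct lines in the plane meet in at most one point, so whenever two hyperedges $P,Q$ intersect they share exactly one vertex, giving $|P \setminus Q|=|Q \setminus P|=k-1$ and therefore $I(H)=\{k-1\}$. Finally, two lines \emph{share a common point} precisely when the corresponding hyperedges intersect, so the requirement that no two such lines receive the same set (respectively multiset) of colors is exactly the defining condition of an ieds-coloring (respectively iedm-coloring) of $H$.

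It then remains to substitute $\Delta=2$ and $I=\{k-1\}$ into the bounds of \autoref{thm:main}. As both sums range over the single index $i=k-1$, they collapse to one term. With $q_{k-1}=\frac{k-1}{k-2}\sqrt[k-1]{k-2}$ and $\frac{\Delta(\Delta-1)(k-1)}{k-i}=2(k-1)$, the set-version radicand of \autoref{thm:main}\eqref{thmain-ieds} becomes $2(k-1)\cdot 2^{2}f_{k-1}=8(k-1)f_{k-1}$ (since $2^{k-i+1}=4$ for $i=k-1$), so the bound reads $\left\lceil 2 + \frac{k-1}{k-2}\sqrt[k-1]{8(k-1)(k-2)f_{k-1}}\right\rceil$, which is exactly (a); the multiset-version radicand of \autoref{thm:main}\eqref{thmain-iedm} becomes $2(k-1)(k-1)!$, giving $\left\lceil 2 + \frac{k-1}{k-2}\sqrt[k-1]{2(k-1)(k-2)(k-1)!}\right\rceil$, which is exactly (b).

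Since the whole argument is a translation followed by a specialization of the main theorem, there is no genuinely hard step; the only point requiring care—and the one I would state explicitly—is the geometric observation that general position forces $\Delta(H)\le 2$, as this is what pins the formula down to the same shape as the edge-labeling corollaries rather than producing a larger bound. One degenerate case deserves a sentence: if no two lines meet at all, then $H$ has no intersecting edges and any list coloring works; this is harmless, since we only use $\Delta=2$ as an upper bound on $\Delta(H)$.
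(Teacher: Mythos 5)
Your proposal is correct and follows exactly the paper's argument: build the $k$-uniform hypergraph on $\cal P$ with edges given by the lines, observe that general position yields $\Delta(H)\le 2$ and pairwise intersections of size one yield $I(H)=\{k-1\}$, and substitute into \autoref{thm:main}; your arithmetic ($2^{k-i+1}=4$ and absorbing $k-2$ from $q_{k-1}$ into the radicand) reproduces both stated bounds. The extra remarks about the absence of multiple edges and the degenerate case of disjoint lines are sound but beyond what the paper's two-line proof records.
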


\begin{proof}
Construct a hypergraph $H$, whose vertex set is $\cal P$ and edges correspond to $\cal L$. Observe that $H$ is $k$-uniform with $\Delta(H) \leq 2$ and $I=\{k-1\}$. Both claims follow directly from \autoref{thm:main}.
\end{proof}

The hypergraph that appears in \autoref{cor:lines} is a special type of the so-called {\em configuration} (see~\cite{conf}). A $(v,b,k,r)$-configuration is a collection of $v$ points and $b$ lines, where:
\begin{compactenum}[a)]
\item each line contains $k$ points and each point belongs to $r$ lines,
\item two lines share at most one point,
\item each pair of points belongs to at most one line.
\end{compactenum}
In a way analogous to \autoref{cor:lines}, we obtain the following.

\begin{corollary} \label{cor:configuration}
Consider a $(v,b,k,r)$-configuration with set of point $\cal P$ and set of lines $\cal L$, where $k \geq 3$.
Assume that each $p \in {\cal P}$ has a list of at least $R$ colors. For each $p \in {\cal P}$ we can choose a color from its list, so that no two lines that share a common point
\begin{compactenum}[(a)]
\item have the same set of colors, if $R \geq \left\lceil 2 + \frac{k-1}{k-2} \sqrt[k-1]{4r(r-1)(k-1)(k-2)f_{k-1}}\right\rceil$,
\item have the same multiset of colors, if $R \geq \left\lceil 2 + \frac{k-1}{k-2} \sqrt[k-1]{r(r-1)(k-1)(k-2)(k-1)!}\right\rceil$.
\end{compactenum}
\end{corollary}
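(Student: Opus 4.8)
The plan is to follow the exact template of \autoref{cor:lines}: encode the configuration as a $k$-uniform hypergraph and read off the two bounds directly from \autoref{thm:main}. Concretely, I would build $H=(\mathcal{P},Q)$ whose vertex set is the point set $\mathcal{P}$ and whose edges are the lines, so that each $\ell \in \mathcal{L}$ becomes the edge consisting of the $k$ points lying on $\ell$. Two lines ``share a common point'' precisely when the corresponding edges intersect, hence a list coloring of $\mathcal{P}$ avoiding equal sets (resp.\ multisets) of colors on intersecting lines is exactly a list ieds-coloring (resp.\ list iedm-coloring) of $H$. It therefore suffices to determine the parameters $k$, $\Delta(H)$, and $I(H)$ and substitute them into \autoref{thm:main}.

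First I would read off the three parameters from the configuration axioms. Uniformity is immediate: axiom (a) gives that every line carries exactly $k$ points, so $H$ is $k$-uniform; and since $k\geq 3$ while any two lines meet in at most one point (axiom (b)), no two distinct lines span the same point set, so $H$ has no repeated edges, as \autoref{thm:main} requires. Axiom (a) also states that every point lies on exactly $r$ lines, so $\Delta(H)=r$ and I would set $\Delta:=r$. The key parameter is $I(H)$: if two lines $P,Q$ intersect, axiom (b) forces them to meet in exactly one point, whence $|P\setminus Q|=k-1$, so that $I(H)=\{k-1\}$.

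With $I=\{k-1\}$, $\Delta=r$, and $k-i=1$ (so $2^{k-i+1}=4$), both sums collapse to a single summand. For the set variant, \autoref{thm:main}\eqref{thmain-ieds} yields
\[
R \geq \left\lceil 2 + q_{k-1}\sqrt[k-1]{\,r(r-1)(k-1)\cdot 4\cdot f_{k-1}\,}\right\rceil,
\]
and folding $q_{k-1}=\frac{k-1}{k-2}\sqrt[k-1]{k-2}$ into the radical turns this into the claimed bound $\left\lceil 2 + \frac{k-1}{k-2}\sqrt[k-1]{4r(r-1)(k-1)(k-2)f_{k-1}}\right\rceil$ of part (a). The identical manipulation of \autoref{thm:main}\eqref{thmain-iedm}, with $i!=(k-1)!$ replacing $2^{k-i+1}f_i$, gives the multiset bound of part (b). Thus both claims follow by direct substitution.

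The argument is essentially bookkeeping, so there is no deep obstacle; the one genuinely substantive point is the identification $I(H)=\{k-1\}$, which is exactly where axiom (b) of a configuration enters, together with the routine but slightly fiddly step of absorbing the factor $q_{k-1}=\frac{k-1}{k-2}\sqrt[k-1]{k-2}$ under the $(k-1)$-th root so that the constants match the stated form. (Implicitly one assumes $r\geq 2$ so that $\Delta\geq 2$, as \autoref{thm:main} demands; for $r=1$ no two lines meet, the problem is vacuous, and the formulas return $R\geq 2$ anyway because $r(r-1)=0$.)
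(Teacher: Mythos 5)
Your proof is correct and takes essentially the same route as the paper, which derives this corollary ``in a way analogous to \autoref{cor:lines}'': encode the configuration as a $k$-uniform hypergraph on $\cal P$ with $\Delta(H)=r$ and $I(H)=\{k-1\}$ (the latter from axiom (b)), then substitute into \autoref{thm:main} and absorb $q_{k-1}=\frac{k-1}{k-2}\sqrt[k-1]{k-2}$ under the root. Your additional checks --- that axiom (b) with $k\geq 3$ rules out multiple edges, and the remark on $r\geq 2$ --- are correct points the paper leaves implicit.
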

\begin{figure}[h]
\begin{center}
\begin{subfigure}[c]{0.5\textwidth}
\centering\begin{tikzpicture}
\draw (0,2)--(4,4);
\draw (0,4)--(4,0);
\draw (0,1)--(4,0.5);
\draw (0,1.5)--(4,2);
\draw (0.75,0)--(1,4);
\draw (3,0)--(2.5,4);

\filldraw[fill=yellow!50] (0.8,0.9) circle(4pt);
\filldraw[fill=yellow!50] (2.2,1.8) circle(4pt);
\filldraw[fill=green!50] (2.6,3.3) circle(4pt);
\filldraw[fill=yellow!50] (0.9,2.45) circle(4pt);
\filldraw[fill=red!50] (2.85,1.15) circle(4pt);
\filldraw[fill=yellow!50] (0.85,1.6) circle(4pt);
\filldraw[fill=red!50] (2.9,0.65) circle(4pt);
\filldraw[fill=green!50] (1.35,2.65) circle(4pt);
\filldraw[fill=red!50] (3.5,1.95) circle(4pt);
\filldraw[fill=green!50] (1.9,0.75) circle(4pt);
\end{tikzpicture}
\end{subfigure}%
\begin{subfigure}[c]{0.5\textwidth}
\centering\begin{tikzpicture}
\draw (1,1)--(5,1);
\draw (2,2)--(4,2);
\draw (1,3)--(5,3);
\draw (1,1)--(5,3);
\draw (1,3)--(5,1);
\draw (1,1)--(3,3);
\draw (1,3)--(3,1);
\draw (3,1)--(5,3);
\draw (3,3)--(5,1);

\filldraw[fill=red!50] (1,1) circle(4pt);
\filldraw[fill=red!50] (1,3) circle(4pt);
\filldraw[fill=green!50] (5,1) circle(4pt);
\filldraw[fill=blue!50] (3,1) circle(4pt);
\filldraw[fill=yellow!50] (3,3) circle(4pt);
\filldraw[fill=yellow!50] (5,3) circle(4pt);
\filldraw[fill=blue!50] (2,2) circle(4pt);
\filldraw[fill=green!50] (3,2) circle(4pt);
\filldraw[fill=red!50] (4,2) circle(4pt);
\end{tikzpicture}
\end{subfigure}
\caption{(a) A set of lines in general position and (b) a $(9,9,3,3)$-configuration and their corresponding sets $\cal P$ of points. Observe that in both cases no two lines that share a common point have the same set of colors.}
\end{center}
\end{figure}
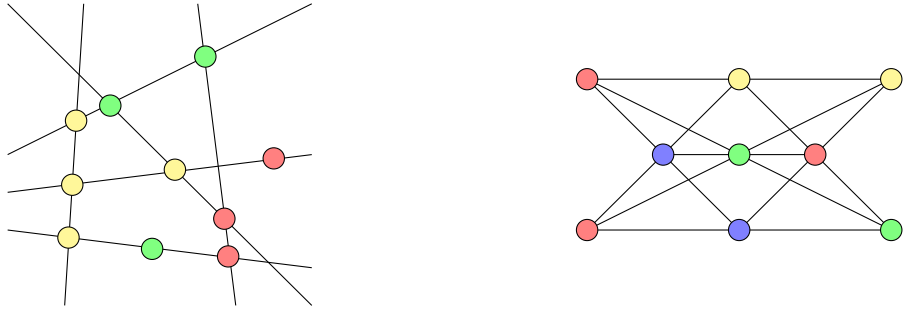

\subsection{Extension: distinguishing intersecting edges by sequences}\label{sec:sequences}

In \autoref{cor:lines} we show how to distinguish intersecting lines by sets and multisets of colors.
However, in case of lines, it is perhaps more natural to consider {\em sequences} of colors. Although \autoref{thm:main} cannot be directly applied in such a case, our approach is robust and can be easily adapted to this setting.
The aim of this section is to present how to adapt the approach from \autoref{sec:entropy} to solve problems of similar flavor. We will not discuss the proof in a detail, as it is almost the same as the proof of \autoref{thm:main}, but we will focus on pointing out the differences.

If edges of our hypergraph correspond to straight lines, each of them has two natural sequences of colors, depending on the direction along the line. We may want to find a coloring, in which the sequences of colors on intersecting edges are different, whichever direction we choose. 
On the other hand, if edges correspond to closed curves, e.g. circles, we may want a coloring in which intersecting edges have distinct ``cyclic sequences'' of colors. Let us start with a generalization of both these cases.

Fix some set $\Pi$ of permutations of $[k]$, such that if $
\sigma \in \Pi$, then $\sigma^{-1} \in \Pi$, and define $\pi := |\Pi|$. We say that two sequences $B=(b_1,b_2,\ldots,b_k)$ and $D=(d_1,d_2,\ldots,d_k)$ are {\em $\Pi$-compatible} if there is $\sigma \in \Pi$, such that $\sigma(B) = (\sigma(b_1),\sigma(b_2),\ldots,\sigma(b_k) )$ is equal to $D$. Note that compatibility is not necessarily reflexive nor transitive.
Now let $H=(V,E)$ be a $k$-uniform hypergraph with maximum degree $\Delta$ and $I(H)=I$.
Let us assume that each edge is ordered, i.e., we will assume that each edge of $H$ is a sequence (such hypergraphs are called {\em sequence hypergraphs} and  were considered e.g. by B\"ohmova, Chalopin, Mihal\'{a}k, Proietti, and Widmayer~\cite{DBLP:conf/wg/BohmovaCMPW16}). Note that there is no relation on sequences of distinct edges, even if they share some vertices.
We say that a coloring of vertices of $H$ {\em distinguishes intersecting edges by $\Pi$-compatible sequences} (or, in short, is {\em $\Pi$-distinguishing}), if for any two intersecting edges $P$ and $Q$, the sequences of colors on $P$ and $Q$ are not $\Pi$-compatible.

Let us show how to adapt our algorithm to find a list $\Pi$-distinguishing coloring of $H$.
We proceed as in the proof of \autoref{thm:main}. Analogously to \eqref{def:partial leds-col} and \eqref{def:partial ledm-col}, by a conflict we mean a situation, that the current partial coloring cannot be extended to a complete $\Pi$-distinguishing coloring.
More formally, consider two $k$-element sequences $P'=(p'_1,p'_2,\ldots,p'_k)$ and $Q=(q_1,q_2,\ldots,q_k)$ of vertices, sharing at least one common element (we will interpret $P'$ as some permutation of an edge $P$, and $Q$ is just an edge). Let $\varphi$ be a partial coloring of $P' \cup Q$ (for simplicity, we will sometimes identify sequences with their sets of elements). For $j \in [k]$, we say that $P'$ is {\em $\varphi$-similar to $Q$ on position $j$} if either $p'_j$ and $q_j$ are colored by $\varphi$ and $\varphi(p'_j)=\varphi(q_j)$, or $p'_j = q_j$, i.e., the same vertex appears in both sequences on position $j$.
The edges $P$ and $Q$ are {\em $\varphi$-similar} if there is $\sigma \in \Pi$, such that for every $j \in [k]$, sequences $\sigma(P)$ and $Q$ are $\varphi$-similar on position $j$.

We observe that if for any two intersecting edges $P$ and $Q$, and any partial vertex coloring $\varphi$, edges $P$ and $Q$ are $\varphi$-similar, there is no way to extend $\varphi$ to a complete $\Pi$-distinguishing coloring of $H$. So in this situation we will report a conflict.
Note that, similarly to the case of sets, a conflict on intersecting edges $P$ and $Q$ may occur after coloring a vertex $v$ from $P \setminus Q$ or from $P \cap Q$ (see \autoref{fig:lines-conflict}).
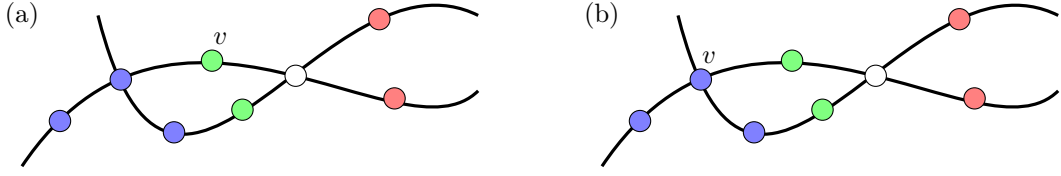
\begin{figure}[h]
\begin{center}
\begin{subfigure}[c]{0.5\textwidth}
\begin{tikzpicture}
\node at (0,2) {(a)};
\draw[very thick] (0,0) .. controls (2,3) and (5,0) .. (6,1);
\draw[very thick] (1,2) .. controls (2,-2) and (4,3) .. (6,2);

\filldraw[fill=blue!50] (1.3,1.15) circle(4pt);
\filldraw[fill=white!50] (3.6,1.2) circle(4pt);
\filldraw[fill=blue!50] (0.5,0.6) circle(4pt);
\filldraw[fill=green!50] (2.5,1.4) circle(4pt);
\node at (2.55,1.7) {$\ v$};
\filldraw[fill=blue!50] (2,0.45) circle(4pt);
\filldraw[fill=green!50] (2.9,0.75) circle(4pt);
\filldraw[fill=red!50] (4.9,0.9) circle(4pt);
\filldraw[fill=red!50] (4.7,1.95) circle(4pt);
\end{tikzpicture}
\end{subfigure}%
\begin{subfigure}[c]{0.5\textwidth}
\begin{tikzpicture}
\node at (0,2) {(b)};
\draw[very thick] (0,0) .. controls (2,3) and (5,0) .. (6,1);
\draw[very thick] (1,2) .. controls (2,-2) and (4,3) .. (6,2);

\filldraw[fill=blue!50] (1.3,1.15) circle(4pt);
\node at (1.35,1.45) {$\ v$};
\filldraw[fill=white!50] (3.6,1.2) circle(4pt);
\filldraw[fill=blue!50] (0.5,0.6) circle(4pt);
\filldraw[fill=green!50] (2.5,1.4) circle(4pt);
\filldraw[fill=blue!50] (2,0.45) circle(4pt);
\filldraw[fill=green!50] (2.9,0.75) circle(4pt);
\filldraw[fill=red!50] (4.9,0.9) circle(4pt);
\filldraw[fill=red!50] (4.7,1.95) circle(4pt);
\end{tikzpicture}
\end{subfigure}
\caption{Edges $P$ and $Q$ of the hypergraph and their corresponding sequences, represented by ($x$-monotone) curves. After coloring a vertex $v$ such that (a) $v \in P \setminus Q$  or (b) $v \in P \cap Q$, it is not possible to extend the obtained partial coloring to get different sequences on $P$ and $Q$. Note that in both cases there was no conflict before coloring $v$.}
\label{fig:lines-conflict}
\end{center}
\end{figure}

Suppose a conflict given by some $\sigma \in \Pi$ appeared on intersecting edges $P$ and $Q$, such that $v \in P \setminus Q$ and $|P \setminus Q| = |Q \setminus P| = i$. In this case we want to erase all colors from $P \setminus Q$.
In our encoding of this conflict, we need to be able to retrieve the edges $P$ and $Q$. Moreover, to remember the coloring of vertices in $P \setminus Q$, we need to remember the permutation $\sigma$. Since $\sigma \in \Pi$, it can be chosen in at most $\pi$ ways. Observe that we do not erase the colors of any vertices of $Q$, so this is indeed sufficient to restore the coloring.

We can choose one of two possible encodings of $P$ and $Q$. 
In the first one, we store $x_P$ and $x_Q$, in the same way as in the case of sets. Additionally, in order to retrieve the set $X$ and thus the edge $Q$, we need to be able to store the information about $i$. We encode it as $n_I(i)$.
The total number of possibilities is at most $\Delta \left\lfloor \frac{(\Delta-1)(k-1)}{k-i}\right\rfloor |I| \; \pi$.

In the second variant, we again store $x_P$ as $n_{E(v)}(P)$. Then we choose any vertex $v'$ in $P \cap Q$ (note that we can do it in at most $k-1$ ways, as $v \notin Q$), and then encode $Q$ as $n_{E(v') \setminus \{P\}}(Q)$. 
The total number of possibilities is at most $\Delta (\Delta-1)(k-1) \; \pi$.

For every $i \in I$ we choose the encoding that gives fewer possibilities (and store an information about the choice as well), so in the case if $v \in P \setminus Q$, the number of possibilities is at most:
\[
\min \left \{\Delta \left\lfloor \frac{(\Delta-1)(k-1)}{k-i} \right\rfloor |I| \; \pi,\; \Delta (\Delta-1)(k-1) \; \pi \right \} \leq \Delta(\Delta-1)(k-1) \pi \min \left\{\frac{|I|}{k-i},1 \right\}.
\]

If there is a conflict on edges $P$ and $Q$, such that $v \in P \cap Q$, we also proceed as in the case of sets. We encode $P$ and $Q$ with $x_P=n_{E(v)}(P)$ and $x_Q=n_{E(v) \setminus \{P\}}(Q)$. We also store $\sigma$.
Let $v'$ be the vertex of $P$, which is mapped to $v$ by $\sigma$ applied to $P$ (recall that $v$ is a vertex of $Q$). If $v' \in P \setminus Q$, then let $w:=v'$, otherwise let $w$ be the smallest vertex of $P \setminus Q$. We will erase the colors of vertices of $P \setminus (Q \cup \{w\}) \cup \{v\}$. Note that by the choice of $w$, we can safely do it, as we can restore the coloring using $\sigma$.

Also note that we always erase the color of $v$, so after that all conflicts are removed.
We observe that the maximum number of possible entries in the conflict table $T$, which correspond to a conflict on $P$ and $Q$, is

\[
a_i \leq \Delta(\Delta-1)(k-1) \pi \min \left\{1, \frac{|I|}{k-i} \right\} + \Delta(\Delta-1)\pi = \Delta(\Delta-1)\pi  \left(1+ (k-1)\min \left\{1, \frac{|I|}{k-i} \right\}\right).
\]

The rest of the proof is exactly the same, but let us add a slight modification in the very end. While proving \autoref{thm:gndi-nphard}, we did not care too much about the additive constant in the bound on the sizes of lists, but now let us estimate \eqref{eq:r-bound} more carefully. 
Now, for every $\epsilon >0$, if $N$ is large enough,  we have
\begin{align*}
|\mathcal{O}| \leq & \#T \cdot (R+1)^n  \leq (R+1)^n  \sum_{s=N-n+1}^N{N \choose s} \left(\sum_{i=1}^{k-1} q_i\sqrt[i]{a_i} \right)^s \\
 \leq &\left(\sum_{i=1}^{k-1} q_i\sqrt[i]{a_i} \right)^N(R+1)^{n} \sum_{s=N-n+1}^N{N \choose s} \left(\sum_{i=1}^{k-1} q_i\sqrt[i]{a_i} \right)^{s-N} \\
 \leq &\left(\sum_{i=1}^{k-1} q_i\sqrt[i]{a_i} \right)^N(R+1)^{n} \cdot n \cdot N^n 
 <  \left(\epsilon/2 + \sum_{i=1}^{k-1} q_i\sqrt[i]{a_i}\right)^N (R+1)^n < \left(\epsilon + \sum_{i=1}^{k-1} q_i\sqrt[i]{a_i}\right)^N.
\end{align*}
So, choosing $\epsilon$ sufficiently small, if \[R \geq \left \lceil \epsilon + \sum_{i=1}^{k-1} q_i\sqrt[i]{a_i} \right \rceil= 1+\left\lfloor \sum_{i=1}^{k-1} q_i\sqrt[i]{a_i} \right \rfloor,\] then $|\O|<|\I|$, which completes the proof. Thus we obtain the following.

\begin{theorem}\label{thm:seq}
Let $k \geq 2$ and let $H$ be a $k$-uniform hypergraph $H$ with $\Delta(H) \leq \Delta$ and $I(H)=I$, whose every vertex is equipped with a list of at least $R$ colors. Let $\Pi$ be some set of permutations of $[k]$, such that if $
\sigma \in \Pi$, then $\sigma^{-1} \in \Pi$. Let $\pi := |\Pi|$. Define $q_1:=1$ and $q_i:= \frac{i}{i-1}\sqrt[i]{i-1}$ for every $i > 1$.
If $R \geq 1+\left\lfloor \sum_{i \in I} q_i \sqrt[i]{\Delta(\Delta-1)\pi  \left(1+ (k-1)\min \left\{1, \frac{|I|}{k-i} \right\}\right)}\right\rfloor$,
then there exists a list $\Pi$-distinguishing coloring of $H$.
\end{theorem}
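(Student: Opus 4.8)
The plan is to run the same entropy-compression scheme as in the proof of \autoref{thm:main}, replacing the notion of conflict by $\Pi$-similarity. Concretely, I would fix a large $N$, range over all inputs $C = (c_1,\dots,c_N) \in \I := [R]^N$, and execute the greedy coloring algorithm of \autoref{sec:algorithm}: at step $j$ color the smallest uncolored vertex $v$ with the $c_j$-th color of its truncated list $L_v$, and if this creates a pair of intersecting edges $P,Q$ that are $\varphi$-similar (witnessed by some $\sigma \in \Pi$), declare a conflict, record an entry in $T(j)$, and uncolor a carefully chosen subset of $P$ so that the resulting $\varphi_j$ is again conflict-free. As before, the algorithm either outputs a complete $\Pi$-distinguishing coloring, or after $N$ iterations produces a pair $(T,\varphi_N) \in \O$. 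Assuming for contradiction that no input yields a complete coloring, every $C \in \I$ produces some $(T,\varphi_N)$, so $|\O| \le |\I|$.

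The core of the argument is to prove $|\O| = |\I|$ by reconstructing $C$ from $(T,\varphi_N)$, mirroring \autoref{lem:w-reconstr} and \autoref{lem:c-reconstr}. The only genuinely new ingredient is the encoding of a conflict, which must let us undo the uncoloring. Following the discussion above, when $v \in P \setminus Q$ I would store $x_P = n_{E(v)}(P)$, a choice between the two encodings of $Q$ (either $(n_I(i), x_Q)$ with $x_Q = n_X(Q)$, or a pivot $v' \in P \cap Q$ together with $n_{E(v')\setminus\{P\}}(Q)$, whichever is cheaper), and the witnessing permutation $\sigma$; when $v \in P \cap Q$ I would store $x_P$, $x_Q = n_{E(v)\setminus\{P\}}(Q)$, and $\sigma$. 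In every case the erased vertices lie in $P$ and their colors are determined, through the color-preserving relation imposed by $\sigma$, by the colors of retained vertices of $Q$ (and of the pivot $w$); hence $\varphi_{j-1}$ and so $c_j$ are recovered, giving the bijection $\I \leftrightarrow \O$.

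For the counting step I would bound, for each $i \in I$, the number $a_i$ of possible non-$+$ entries of $T(j)$ arising from a conflict with $|P\setminus Q| = i$. Taking the minimum of the two encodings in the $v \in P \setminus Q$ case gives at most $\Delta(\Delta-1)(k-1)\pi\,\min\{1, |I|/(k-i)\}$ possibilities, and the $v \in P \cap Q$ case contributes at most $\Delta(\Delta-1)\pi$, so $a_i \le \Delta(\Delta-1)\pi\bigl(1 + (k-1)\min\{1, |I|/(k-i)\}\bigr)$. Plugging these $a_i$ into the machinery of \autoref{lem: numberoftables}, \autoref{lem:estimation} and \autoref{lem:oissmaller} bounds $\#T$ and hence $|\O| \le \#T\,(R+1)^n$, by the Multinomial Theorem, as a power of $\sum_{i \in I} q_i \sqrt[i]{a_i}$. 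Here I would use the sharper estimate sketched just before the theorem, in which the polynomial factor is absorbed into an arbitrarily small additive $\epsilon$; this replaces the additive $2$ of \autoref{thm:main} by $\epsilon$ and yields the threshold $R \ge 1 + \lfloor \sum_{i\in I} q_i \sqrt[i]{a_i}\rfloor$. For such $R$ and $N$ large, $|\O| < |\I|$ contradicts $|\O| = |\I|$, so some $C$ must produce a complete $\Pi$-distinguishing coloring.

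The main obstacle is verifying the correctness of the conflict encoding in the $v \in P \cap Q$ case: one must retain the right pivot $w$ of $P \setminus Q$ (the $\sigma$-preimage of $v$ when it lies in $P \setminus Q$, and otherwise the smallest vertex of $P \setminus Q$) so that erasing $P \setminus (Q \cup \{w\}) \cup \{v\}$ still leaves, among the surviving vertices, one of each color needed to invert $\sigma$ and restore the erased colors. Confirming that $\sigma$ together with the retained colors determines the erased ones, and that the two-encoding choice genuinely delivers the claimed $\min\{1, |I|/(k-i)\}$ saving, is the delicate bookkeeping on which the otherwise routine computation rests.
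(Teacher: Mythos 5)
Your proposal is correct and follows essentially the same route as the paper: the same entropy-compression algorithm with conflicts redefined via $\varphi$-similarity, the same two alternative encodings of $Q$ (via $n_I(i)$ and $n_X(Q)$, or via a pivot $v' \in P \cap Q$) taken at the cheaper option to get the $\min\{1,|I|/(k-i)\}$ factor, the same choice of the retained vertex $w$ in the $v \in P \cap Q$ case (the $\sigma$-preimage of $v$ if it lies in $P \setminus Q$, else the smallest vertex of $P \setminus Q$) so that $\sigma$ restores the erased colors, and the same $\epsilon$-sharpened counting replacing the additive $2$ of \autoref{thm:main} by $1+\lfloor\cdot\rfloor$. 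The ``delicate bookkeeping'' you flag is exactly the point the paper addresses, and your resolution of it coincides with the paper's.
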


So, going back to the example of lines, we obtain the following sequence counterpart of \autoref{cor:lines}, see \autoref{fig:lines-seq}.
\begin{corollary} \label{cor:lines-seq}
Let $\cal L$ be a family of straight lines in a general position.
Fix a family $\cal P$ of points, such that each line in $\cal L$ has precisely $k \geq 3$ points.
Assume that each $p \in {\cal P}$ has a list of at least $R$ colors (regardless on the direction along the line). 
If \[R \geq 1+\left\lfloor \frac{k-1}{k-2} \sqrt[k-1]{4k^2-8k}\right\rfloor,\]
then for each $p \in {\cal P}$ we can choose a color from its list, so that no two lines that share a common point have the same sequence of colors.
\end{corollary}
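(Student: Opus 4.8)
The plan is to derive the corollary as a direct instance of \autoref{thm:seq}, by encoding the geometric picture as a sequence hypergraph and choosing the permutation set $\Pi$ so that $\Pi$-distinguishing captures exactly the requirement ``distinct color sequences regardless of direction.'' First I would build the hypergraph $H=(\mathcal{P},\mathcal{L})$ whose vertices are the points and whose edges are the lines, each edge being ordered according to the natural order of its $k$ points along the line; thus $H$ is a $k$-uniform sequence hypergraph, and two lines sharing a point corresponds precisely to two edges sharing a vertex.

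Next I would read off the two structural parameters that feed into \autoref{thm:seq}. Since $\mathcal{L}$ is in general position, no point lies on three lines, so every vertex of $H$ has degree at most two and we may take $\Delta=2$. Two distinct lines meet in at most one point, so any two intersecting edges share exactly one vertex; hence $|P\setminus Q|=|Q\setminus P|=k-1$ for every intersecting pair, giving $I(H)=\{k-1\}$ and $|I|=1$.

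The main modeling step, and the one I expect to require the most care, is the choice of $\Pi$. Each line carries two color sequences, one per direction, related by the reversal permutation $\rho\colon j\mapsto k+1-j$, and I would set $\Pi=\{\mathrm{id},\rho\}$. Both elements are involutions, so $\Pi$ is closed under inverses as \autoref{thm:seq} demands, and $\pi=|\Pi|=2$. The delicate point is to argue that the four direction-combinations for an intersecting pair reduce to just these two permutations: writing $s_P,s_Q$ for the two sequences and $s^R$ for reversals, equality for \emph{some} pair of directions holds iff $s_P=s_Q$ or $s_P=s_Q^R$, which is exactly $\Pi$-compatibility; so a $\Pi$-distinguishing coloring is precisely what the corollary asks for. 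I would also note explicitly that choosing a larger $\Pi$ (e.g.\ all of $S_k$) is unnecessary and would inflate $\pi$ to $k!$, destroying the bound, so confirming that reversal is the right and only needed generator is essential.

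Finally I would substitute into the bound of \autoref{thm:seq}. The sum over $i\in I$ collapses to the single term $i=k-1$; with $\Delta=2$, $\pi=2$, $|I|=1$ and $k-i=1$ we have $\min\{1,|I|/(k-i)\}=1$, so $a_{k-1}\leq \Delta(\Delta-1)\pi\,(1+(k-1))=4k$. Using $q_{k-1}=\frac{k-1}{k-2}\sqrt[k-1]{k-2}$, this gives $q_{k-1}\sqrt[k-1]{a_{k-1}}\leq \frac{k-1}{k-2}\sqrt[k-1]{4k(k-2)}=\frac{k-1}{k-2}\sqrt[k-1]{4k^2-8k}$, which is exactly the stated threshold for $R$. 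Everything after the choice of $\Pi$ is a routine substitution into the already-established theorem, so the heart of the argument is the reduction to $\Pi=\{\mathrm{id},\rho\}$.
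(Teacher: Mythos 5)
Your proposal is correct and matches the paper's intended derivation: the paper obtains \autoref{cor:lines-seq} exactly by applying \autoref{thm:seq} to the point--line hypergraph with $\Delta=2$, $I=\{k-1\}$ (hence $|I|=1$ and $k-i=1$), and $\Pi=\{\mathrm{id},\rho\}$ with $\rho$ the reversal, so $\pi=2$ and $a_{k-1}\leq 4k$, yielding $q_{k-1}\sqrt[k-1]{4k(k-2)}=\frac{k-1}{k-2}\sqrt[k-1]{4k^2-8k}$. Your reduction of the four direction-combinations to the two conditions $s_P=s_Q$ or $s_P=s_Q^R$ is precisely the justification the paper leaves implicit, so nothing is missing.
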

Note that for $k \geq 12$, the bound in \autoref{cor:lines-seq} is 2, which is clearly best possible.

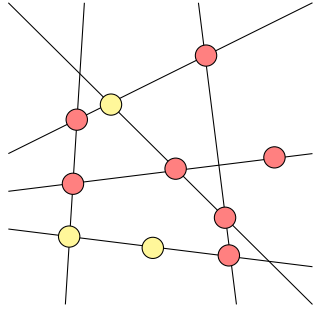
\begin{figure}[h]
\begin{center}
\begin{tikzpicture}
\draw (0,2)--(4,4);
\draw (0,4)--(4,0);
\draw (0,1)--(4,0.5);
\draw (0,1.5)--(4,2);
\draw (0.75,0)--(1,4);
\draw (3,0)--(2.5,4);

\filldraw[fill=yellow!50] (0.8,0.9) circle(4pt);
\filldraw[fill=red!50] (2.2,1.8) circle(4pt);
\filldraw[fill=red!50] (2.6,3.3) circle(4pt);
\filldraw[fill=red!50] (0.9,2.45) circle(4pt);
\filldraw[fill=red!50] (2.85,1.15) circle(4pt);
\filldraw[fill=red!50] (0.85,1.6) circle(4pt);
\filldraw[fill=red!50] (2.9,0.65) circle(4pt);
\filldraw[fill=yellow!50] (1.35,2.65) circle(4pt);
\filldraw[fill=red!50] (3.5,1.95) circle(4pt);
\filldraw[fill=yellow!50] (1.9,0.75) circle(4pt);
\end{tikzpicture}
\end{center}
\caption{A set of lines in general position and a set $\cal P$ of points. No two lines that share a common point have the same set of colors, regardless of directions.} \label{fig:lines-seq}
\end{figure}

Finally, let us recall that Seamone and Stevens~\cite{DBLP:journals/dmtcs/SeamoneS13} considered distinguishing adjacent vertices of a graph by sequences of labels appearing on the incident edges. In their setting, there was one global ordering of all edges (either fixed, or one that could be chosen along with the labeling), and sequences of labels were implied by this ordering. For fixed ordering of edges, they showed that one can choose such a labeling from lists of size at least 3, provided that the minimum degree is large enough, compared to the maximum degree. In particular, lists of size 3 suffice for $k$-regular graphs with $k \geq 6$.

Note that if $\Pi$ contains only the identity, then a bound for the size of lists in a $\Pi$-distinguishing coloring of the dual hypergraph of a $k$-regular graph implies the bound for the size of lists in the problem considered by Seamone and Stevens. Actually, our setting is more flexible, as the sequences corresponding to edges of the hypergraph can be chosen independently.
However, even in this more general setting, we improve the result by Seamone and Stevens~\cite{DBLP:journals/dmtcs/SeamoneS13} for regular graphs.

\begin{corollary} \label{cor:graph-sequences}
Let $G$ be a $k$-regular graph, and fix some ordering of edges.
If each edge has a list of at least
\begin{equation}
\label{eq:graph-seq}
1+\left\lfloor \frac{k-1}{k-2} \sqrt[k-1]{2k^2-4k}\right\rfloor
\end{equation}
labels, then there exists a list edge labeling, in which no two adjacent vertices have the same sequence of labels appearing on incident edges. Moreover, if $k \geq 6$, then \eqref{eq:graph-seq} is at most 3, and if $k \geq 10$, then  \eqref{eq:graph-seq} is $2$.
\end{corollary}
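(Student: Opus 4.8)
The plan is to apply \autoref{thm:seq} to the dual hypergraph of $G$, specialising the permutation set $\Pi$ so that $\Pi$-compatibility coincides with equality of sequences. Concretely, let $H=(E,Q)$ be the dual hypergraph of $G$; as recalled in \autoref{sec:hyper}, since $G$ is $k$-regular, $H$ is $k$-uniform with $\Delta(H)=2$ and $I(H)=\{k-1\}$. The fixed global ordering of the edges of $G$ induces, for each vertex $v$, a fixed ordering of the edge $E(v)$ of $H$, so every edge of $H$ becomes a sequence. Because we want to declare two incident edges in conflict exactly when their colour sequences are identical, I would take $\Pi=\{\mathrm{id}\}$ (admissible, as $\mathrm{id}^{-1}=\mathrm{id}$) and $\pi=1$: then $B$ and $D$ are $\Pi$-compatible iff $B=D$, and a $\Pi$-distinguishing colouring of $H$ is precisely a list edge labelling of $G$ in which adjacent vertices see different sequences of labels on incident edges.

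With these parameters the bound of \autoref{thm:seq} collapses to a single summand. Since $I=\{k-1\}$, the only index is $i=k-1$, for which $|I|=1$ and $k-i=1$, so $\min\{1,|I|/(k-i)\}=1$ and the quantity under the radical is $a_{k-1}=\Delta(\Delta-1)\pi\bigl(1+(k-1)\bigr)=2k$. Using $q_{k-1}=\tfrac{k-1}{k-2}\sqrt[k-1]{k-2}$ and folding the factor $\sqrt[k-1]{k-2}$ into the radical gives $q_{k-1}\sqrt[k-1]{a_{k-1}}=\tfrac{k-1}{k-2}\sqrt[k-1]{2k(k-2)}=\tfrac{k-1}{k-2}\sqrt[k-1]{2k^2-4k}$, so the list-size bound supplied by \autoref{thm:seq} is exactly \eqref{eq:graph-seq}. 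This settles the existence statement.

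For the ``moreover'' part I would analyse $g(k):=\tfrac{k-1}{k-2}\sqrt[k-1]{2k^2-4k}$, the quantity whose floor plus one equals \eqref{eq:graph-seq}. First note $g(k)>1$ for all $k\geq 3$ (both factors exceed $1$), so the bound is always at least $2$, matching the trivial lower bound that a single label distinguishes no two incident edges. Writing $\ln g(k)=\ln\tfrac{k-1}{k-2}+\tfrac{1}{k-1}\ln(2k^2-4k)$ shows $g(k)\to 1$ as $k\to\infty$, and a short estimate of this expression (checking that $\ln g$ is decreasing once $k$ is past a small constant) lets one conclude that $g$ is monotone on the relevant range. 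It then suffices to evaluate at the boundary: I would verify directly that $g(6)<3$ and $g(10)<2$, whence $\lfloor g(k)\rfloor\le 2$ for all $k\ge 6$ and $\lfloor g(k)\rfloor=1$ for all $k\ge 10$, giving the claimed values $3$ and $2$.

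I expect the routine but error-prone part to be twofold. First, one must double-check that the fixed-ordering convention of Seamone and Stevens is faithfully captured by $\Pi=\{\mathrm{id}\}$ on the dual hypergraph, in particular that no further symmetry needs to be added and that the collapse of the $(k-2)$ factor from $q_{k-1}$ into the radical is done correctly. Second, the monotonicity and threshold analysis of $g(k)$ is the delicate point, since the thresholds $k\ge 6$ and $k\ge 10$ are tight (indeed $g(5)>3$ and $g(9)>2$), so the estimates on $g$ must be sharp enough to separate these consecutive cases rather than merely asymptotic.
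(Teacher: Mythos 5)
Your proposal is correct and takes essentially the same route as the paper, which derives this corollary precisely by applying \autoref{thm:seq} with $\Pi=\{\mathrm{id}\}$ (so $\pi=1$) to the dual hypergraph of $G$, which is $k$-uniform with $\Delta=2$ and $I=\{k-1\}$, giving $a_{k-1}=2\cdot 1\cdot\bigl(1+(k-1)\bigr)=2k$ and, after folding $\sqrt[k-1]{k-2}$ from $q_{k-1}$ into the radical, exactly the bound \eqref{eq:graph-seq}. Your numerical verification of the thresholds is also accurate (indeed $g(6)\approx 2.71<3$, $g(10)\approx 1.98<2$, while $g(5)>3$ and $g(9)>2$), matching what the paper leaves as a direct check.
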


Finally, note that in the reasoning in this section, we could in fact have a different set $\Pi$ for each (ordered) pair of intersecting edges. Also, our algorithm can be easily adapted to different problems of similar flavor.

\section{Generalized neighbor distinguishing index of bipartite graphs} \label{sec:gndi}
Gy\H{o}ri , Hor\v{n}\'ak, Palmer, and Woźniak \cite{GYORI2008827} considered the generalized neighbor-distinguishing index of bipartite graphs. They showed that for bipartite $G$ with at least 3 vertices it holds that $\gndi(G) \in \{2,3\}$, and provided some observations on graphs with $\gndi(G)=2$.
Let us summarize them below. The notation from \autoref{obs:gndi2} will be used throughout this section.

\begin{observation}[\cite{GYORI2008827}]\label{obs:gndi2}
Let $G=(V,E)$ be a connected graph with $\gndi(G)=2$, and consider an edge labeling $c \colon E \to \{1,2\}$, which distinguishes neighbors by sets. Then the following hold:
\begin{compactenum}
\item $G$ is bipartite,
\item the vertices $v \in V$, such that $c(E(v))=\{1,2\}$ form one bipartition class $Y$, and the remaining vertices form the second bipartition class $X$,
\item $X$ is partitioned into two subsets: $X_1$ containing vertices $v$, such that $c(E(v)) = \{1\}$, and $X_2$ containing vertices $v$, such that $c(E(v)) = \{2\}$,
\item if $\deg v = 1$, then $v \in X$,
\item if $v \in Y$ and $\deg v =2$, then one neighbor of $v$ is in $X_1$ and the other is in $X_2$.
\end{compactenum}
\end{observation}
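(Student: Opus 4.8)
The plan is to derive all five items from one elementary observation: since $uv \in E(u) \cap E(v)$, the color $c(uv)$ lies in both $c(E(u))$ and $c(E(v))$. First I would fix notation. Because $\gndi(G)=2$ the graph has an edge, and since $G$ is connected every vertex has degree at least one, so $c(E(v))$ is a nonempty subset of $\{1,2\}$ for every $v$; hence $c(E(v)) \in \{\{1\},\{2\},\{1,2\}\}$. I would set $Y := \{v : c(E(v)) = \{1,2\}\}$, $X_1 := \{v : c(E(v)) = \{1\}\}$, $X_2 := \{v : c(E(v)) = \{2\}\}$ and $X := X_1 \cup X_2$. These three sets partition $V$, which is precisely item (3), and $X_1 \cap X_2 = \emptyset$ is clear.

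Next I would establish items (1) and (2) by showing that $(X,Y)$ is a bipartition, i.e. that both $X$ and $Y$ are independent. Take any edge $uv$. If $u,v \in Y$, then $c(E(u)) = c(E(v)) = \{1,2\}$, contradicting that $c$ distinguishes neighbors by sets; hence $Y$ is independent. If instead $u,v \in X$, then their color sets are distinct singletons, say $c(E(u)) = \{1\}$ and $c(E(v)) = \{2\}$; but the key observation forces $c(uv) \in \{1\}$ and $c(uv) \in \{2\}$ at once, which is impossible, so $X$ is independent as well. Therefore every edge joins $X$ and $Y$, so $G$ is bipartite, giving item (1); and since $G$ is connected its bipartition is unique, so $X$ and $Y$ are exactly its two classes, which is item (2).

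The remaining items are short. For item (4), a vertex $v$ with $\deg v = 1$ has $|c(E(v))| \le 1$, so $c(E(v))$ is a singleton and $v \in X$. For item (5), let $v \in Y$ with $\deg v = 2$ and incident edges $va,vb$. Since $c(E(v)) = \{1,2\}$ and there are only two incident edges, they carry distinct colors, say $c(va)=1$ and $c(vb)=2$. As $a,b$ are adjacent to $v \in Y$, the bipartition $(X,Y)$ places both in $X$, so each has a singleton color set; the key observation then gives $1 \in c(E(a))$, forcing $a \in X_1$, and $2 \in c(E(b))$, forcing $b \in X_2$, exactly as claimed.

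I do not expect a genuine obstacle: everything reduces to the fact that an edge's color appears in the color set of each of its endpoints. The only points requiring mild care are the appeal to uniqueness of the bipartition in item (2), which uses connectivity, and confirming that the degenerate cases are excluded — $G$ has an edge and no isolated vertices because it is connected and admits a neighbor-distinguishing-by-sets labeling, so every $c(E(v))$ is genuinely nonempty.
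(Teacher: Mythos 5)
Your proof is correct: the single key fact that $c(uv) \in c(E(u)) \cap c(E(v))$, combined with the distinguishing property, cleanly yields the independence of both $X$ and $Y$ and hence all five items, and your handling of the degenerate cases (existence of an edge, no isolated vertices) and of the uniqueness of the bipartition of a connected bipartite graph is exactly right. The paper itself states this result as an observation imported from Győri, Horňák, Palmer, and Woźniak~\cite{GYORI2008827} without proof, so there is no in-paper argument to compare against; your write-up is the standard elementary derivation and is consistent with how the paper tacitly uses these facts later (e.g.\ in the proof of \autoref{thm:gndi-nphard}).
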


In this section we investigate the edge labelings distinguishing neighbors by sets in bipartite graphs more closely. In particular, we exhibit an interesting relation between the problem of determining $\gndi(G)$ for bipartite $G$, and a well-known variant of satisfiability problem, called \naesat, which can be equivalently formulated as a hypergraph coloring problem. This relation was also observed by  Gy{\H{o}}ri and Palmer \cite{DBLP:journals/dm/GyoriP09}.

\subsection{Complexity aspects}

The instance $\Phi$ of \naesat (usually called a {\emph formula}) is a set of boolean variables and a collection of clauses, each being a set of literals (negated or non-negated variables). A clause is unsatisfied if all its literals have the same value, otherwise it is satisfied. The problem asks whether there exists a truth assignment in which all clauses are satisfied.
A restriction of \naesat, in which each clause has at most $k$ literals, is called \naeksat{$k$}. Finally, in \posnaesat, there are no negated variables.
It is well known that the \posnaeksat{3} is NP-complete~\cite{Lovasz,Schaefer:1978:CSP:800133.804350}. Moreover, an algorithm solving \posnaeksat{3} in time $2^{o(N + M)}$, where $N$ is the number of variables and $M$ is the number of clauses, would contradict the Exponential Time Hypothesis (ETH), which is one of the standard assumptions in complexity theory~\cite{DBLP:books/sp/CyganFKLMPPS15}.

For an instance $\Phi$ of \naesat with variables $\cV$ and clauses $\cC$, its \emph{incidence graph} is the bipartite graph with bipartition classes $\cV$ and $\cC$, in which $v \in \cV$ and $C \in \cC$ are adjacent if an only if $v \in C$.
The {\em literal-clause incidence graph} of a formula $\Phi$ is similar to the incidence graph, but now vertices correspond to \emph{literals} and clauses, and edges indicate whether a literal belongs to the clause. Moreover, we have additional edges between each literal and its negation. 
By \planaesat we mean the restriction of \naesat, where the literal-clause incidence graph is planar. Note that in case of \posnaesat, there are no negated variables, so literal-clause incidence graph is exactly the incidence graph, and thus strong planarity of $\Phi$ is equivalent to the planarity of its incidence graph.

Analogously, for a bipartite graph $G$ with bipartition classes $A$ and $B$, an \emph{$A$-derived formula} is a \posnaesat formula, in which for every $v \in A$ we have a variable $var(v)$, and for every $u \in B$ we have a clause $C_u = \{ var(v) \colon v \in N(u) \}$. Observe that the incidence graph of an $A$-derived formula of $G$ is $G$.

\begin{lemma} \label{lem:gndi-naesat}
Let $G$ be a connected bipartite graph with bipartition classes $A,B$ and let $\Phi$ be its $A$-derived formula.
Then $\Phi$ is satisfiable if and only if $G$ has an edge labeling $c$ with two labels, distinguishing neighbors by sets, in which $A=X$.
\end{lemma}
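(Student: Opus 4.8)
The plan is to exhibit a direct, mutually inverse dictionary between NAE-satisfying assignments of $\Phi$ and the edge labelings described in the statement, leaning on the structural description in \autoref{obs:gndi2}. The first thing I would establish is a normalization observation that makes the whole argument fall out: since $G$ is connected with at least two vertices, every vertex has positive degree, so for any labeling $c \colon E \to \{1,2\}$ each vertex $v$ satisfies $c(E(v)) \in \{\{1\},\{2\},\{1,2\}\}$, and therefore $X$ and $Y$ partition $V$. Consequently, the requirement $A = X$ is equivalent to $B = Y$: every variable-vertex $v \in A$ is monochromatic and every clause-vertex $u \in B$ sees both labels. The key simplification is that, once this monochromatic/bichromatic pattern is in force, the condition that $c$ distinguishes neighbors by sets is automatic, because every edge of $G$ joins an $A$-vertex (with a singleton label set) to a $B$-vertex (with the two-element set $\{1,2\}$), and a singleton is never equal to a two-element set. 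So I never have to verify the distinguishing condition separately; I only track which vertices are monochromatic.

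For the forward direction I would take a NAE-satisfying assignment $\beta$ of $\Phi$ and define, for each edge $vu$ with $v \in A$ and $u \in B$, the label $c(vu) := 1$ if $\beta(var(v))$ is true and $c(vu) := 2$ otherwise. By construction every $v \in A$ is monochromatic, with color $1$ encoding true and $2$ encoding false, so $A \subseteq X$. For a clause-vertex $u \in B$, the set $c(E(u))$ is exactly the set of truth-colors of the variables appearing in $C_u$; since $\beta$ NAE-satisfies $C_u$, that clause contains both a true and a false variable, so both labels occur and $u \in Y$. Thus $A \subseteq X$ and $B \subseteq Y$, and together with the partitions $A \cup B = V = X \cup Y$ this forces $A = X$, as required.

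For the reverse direction I would start from a labeling $c$ with $A = X$ (hence $B = Y$) and define $\beta(var(v))$ to be true precisely when $c(E(v)) = \{1\}$. Because each $v \in A$ is monochromatic, the label on every incident edge $vu$ equals the color of $v$; hence $c(E(u))$ is exactly the set of truth-colors of the variables of $C_u$. Since $u \in Y$ we have $c(E(u)) = \{1,2\}$, so $C_u$ contains a true variable and a false variable and is NAE-satisfied. As this holds for every $u \in B$, the assignment $\beta$ satisfies $\Phi$, closing the equivalence.

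I do not expect a genuine obstacle here; the argument is essentially a change of viewpoint, and the only point demanding a little care is the degenerate case of a degree-one clause-vertex (a singleton clause). Such a vertex can never be bichromatic, so no labeling with $A = X$ can exist, and correspondingly a singleton clause can never be NAE-satisfied, so $\Phi$ is unsatisfiable; both sides of the equivalence fail in lockstep, so this case requires no special handling. The one subtlety worth stating explicitly in the write-up is the ``free'' distinguishing property described in the first paragraph, since it is what lets me reduce the entire statement to matching the monochromatic/bichromatic pattern with NAE-satisfaction.
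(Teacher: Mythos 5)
Your proposal is correct and follows essentially the same route as the paper's proof: the identical dictionary between truth values and monochromatic edge-label classes, with the same constructions in both directions. Your explicit normalization step---that the distinguishing-by-sets condition is automatic once every $A$-vertex is monochromatic and every $B$-vertex bichromatic, since a singleton never equals $\{1,2\}$---is left implicit in the paper, and making it explicit is a minor but welcome clarification rather than a different argument.
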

\begin{proof}
First, suppose $\Phi$ is satisfiable and let $\varphi$ be a satisfying truth assignment.
We define an edge labeling $c$ as follows: for $v \in A$ and $u \in B$, we set $c(uv) = 1$ if $\varphi(var(v))=true$ and $c(uv) = 2$ if $\varphi(var(v))=false$. Note that $c$ is well-defined and for each $v \in A$ we have $c(E(v))=\{\varphi(var(v))\}$. Now consider a clause $C_u$. Since $\varphi$ satisfies $C_u$, there is some $var(v_1) \in C_u$ such that $\varphi(var(v_1))=true$ and some $var(v_2) \in C_u$ such that $\varphi(var(v_2))=false$. This means that $c(v_1u) = 1$ and $c(v_2u)=2$, so $c(E(u))=\{1,2\}$. Therefore $c$ satisfies the conditions in the statement of the lemma.

On the other hand, for an edge labeling $c \colon E \to \{1,2\}$, which distinguishes neighbors by sets and in which $A=X$, we may define the truth assignment $\varphi$ as $\varphi(var(v))=true$ if $v \in X_1$ and $\varphi(var(v))=false$ if $v \in X_2$. Since for each $u$ we have $c(E(u))=\{1,2\}$, we observe that $C_u$ contains and least one true and at least one false variable.
\end{proof}

\autoref{lem:gndi-naesat} together with NP-completeness of \posnaesat implies that it is NP-complete to decide whether an input bipartite graph $G$ with bipartition classes $A,B$ has an edge labeling with two labels, which distinguishes neighbors by sets and $A=X$. Here we strengthen this result by restricting the structure of $G$ and removing asymmetry between bipartition classes.

\begin{theorem}  \label{thm:gndi-nphard}
For every $g \geq 4$, given a subcubic bipartite graph $G$ with $n$ vertices and girth at least $g$, it is NP-complete to decide whether $\gndi(G)=2$. Moreover, the problem cannot be solved in time $2^{o(n)}$, unless the ETH fails.
\end{theorem}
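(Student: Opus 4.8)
The plan is to reduce from \posnaeksat{3}, which is NP-complete and, by the stated lower bound, cannot be solved in time $2^{o(N+M)}$ unless the ETH fails. Membership in NP of the problem of deciding whether $\gndi(G)=2$ is immediate, since a candidate labeling $c\colon E\to\{1,2\}$ can be guessed and verified in polynomial time. For hardness, given a \posnaeksat{3} instance $\Phi$ with variable set $\cV$ and clause set $\cC$, whose incidence graph we may assume to be connected (otherwise we treat each component separately), I would construct a subcubic bipartite graph $G$ of girth at least $g$ whose $A$-derived formula $\Phi'$ is equisatisfiable with $\Phi$, and then combine \autoref{lem:gndi-naesat} with \autoref{obs:gndi2}.

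The construction proceeds through gadget stages, each preserving not-all-equal satisfiability, bipartiteness, and linear size. To make the incidence graph subcubic I would first bound occurrences: a variable $x$ that occurs $t$ times is replaced by $t$ private copies, one per clause, which are tied to a common value by auxiliary size-$2$ clauses $\{a,b\}$ (whose not-all-equal constraint forces $a\neq b$), so that every variable-vertex ends up with degree at most $3$; clause-vertices already have degree at most $3$ since clauses have at most three literals. The bulk of the work lies in pushing the girth up to $g$: I would route each variable--clause incidence through a long, even-length alternating chain of such size-$2$ gadgets (even length so that the transmitted color is unchanged), with the chains organized in a bounded-degree, tree-like fashion around each variable so that all degrees stay at most $3$. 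Because every incidence is expanded into paths of length $\Theta(g)$ and $g$ is a fixed constant, the number of vertices satisfies $n=O_g(N+M)$, while any cycle of $G$ is forced to traverse at least two such chains and hence has length at least $g$.

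The conceptually crucial step is to break the symmetry between the side $A$ of variable-vertices and the side $B$ of clause-vertices, because \autoref{lem:gndi-naesat} only controls labelings in which $A=X$. Here I would use \autoref{obs:gndi2}: in a connected bipartite graph every $2$-labeling distinguishing neighbors by sets makes the set $Y$ of vertices seeing both colors exactly one bipartition class, and every vertex of degree $1$ lies in $X$. Thus, attaching a single pendant variable $d\in A$ to a variable-vertex $v$ of degree at most two through one size-$2$ clause $\{d,v\}$ forces $d\in X$; since $d\in A$ and $X$ is a whole bipartition class, this forces $X=A$ in \emph{every} valid $2$-labeling. The gadget is harmless, as it is satisfied by coloring $d$ opposite to $v$, and it creates no short cycle. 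One must also avoid the dual trap: to prevent a labeling with $X=B$, no clause-vertex may be forced into $X$, so all clause-vertices---real clauses and all size-$2$ gadget clauses alike---are kept of degree at least $2$.

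Assembling the pieces, $G$ is connected, subcubic, bipartite, and of girth at least $g$, and its $A$-derived formula $\Phi'$ is not-all-equal satisfiable if and only if $\Phi$ is. Since every $2$-labeling of $G$ must have $A=X$, \autoref{lem:gndi-naesat} yields $\gndi(G)=2$ exactly when $\Phi'$, equivalently $\Phi$, is satisfiable, whereas $\gndi(G)\ge 2$ holds trivially as $G$ has an edge; this gives NP-completeness. Moreover $n=O_g(N+M)$, so an algorithm deciding $\gndi(G)=2$ in time $2^{o(n)}$ would solve \posnaeksat{3} in time $2^{o(N+M)}$, contradicting the ETH. The step I expect to be the main obstacle is the girth amplification: one must simultaneously certify girth at least $g$, keep every degree at most $3$, preserve not-all-equal satisfiability exactly, and keep the blow-up linear in $N+M$, which forces a careful design and analysis of the transmission gadgets---in particular a proof that every cycle is long and that no clause-vertex of degree $1$ is accidentally introduced.
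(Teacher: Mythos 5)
Your proposal is correct and takes essentially the same approach as the paper: your even-length chains of size-$2$ not-all-equal clauses are precisely the paper's variable paths $P^v$ with attachment points $x^v_i$ spaced $g'$ apart ($g'$ divisible by $4$ so the truth value is transmitted unchanged), and your pendant variable forcing $X=A$ via item 4 of \autoref{obs:gndi2} plays exactly the role of the degree-one path endpoints $y^v$. The only differences are presentational: the paper verifies the forced edge-labeling pattern along each path directly rather than routing the argument through \autoref{lem:gndi-naesat} and $A$-derived formulas, and it uses one pendant per variable rather than a single global one.
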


\begin{proof}
Consider an instance $\Phi$ of \posnaeksat{3}, with variable set $\cV$ and clause set $\cC$. Let $N:=|\cV|$ and $M:=|\cC|$.
We may assume that for every $C \in \cC$ we have $2 \leq |C| \leq 3$, since one-element clauses cannot be satisfied.
For a variable $v$, by $d(v)$ we will denote the total number of occurrences of $v$ in clauses. Clearly $\sum_{v \in \cV} d(v) \leq 3M$. 
We assume that the clauses are ordered, this implies the ordering of occurrences of each variable.

Let $g'$ be the smallest integer divisible by 4, which is larger than $g/2$.
For each variable $v \in \cV$ we introduce a path $P^v$ with $\ell=g' \cdot d(v) +1$ vertices $p^v_0,p^v_1,\ldots,p^v_{\ell-1}$.
For $i=1,2,\ldots,d(v)$, the vertex $p^v_{(i-1) \cdot g'}$ will be denoted by $x^v_i$, and the vertex $p^v_{\ell-1}$ will be denoted by $y^v$.

For each clause $C$ we introduce a vertex $z^C$. Now, if $C$ contains variables $u,v,w$, such that the occurrence in $C$ is the $i$-th occurrence of $u$, the $j$-th occurrence of $v$, and the $k$-th occurrence of $w$, we add edges $x^u_iz^C$, $x^v_jz^C$, and $x^w_kz^C$.
If $|C|=2$, then we add just two edges.

Let $G$ be the graph constructed in such a way. Observe that $G$ is bipartite, subcubic, and the girth of $G$ is at least $2g' \geq g$ (see \autoref{fig:gndi-girth}).
Moreover, for every $v \in \cV$, the degree of $y^v$ is 1.

\begin{figure}[ht]
\centering
\begin{tikzpicture}[scale=0.6,every node/.style={circle,minimum size = 5pt},red/.style={fill=red!20}, blue/.style={fill=blue!20},
, rb/.style  = {path picture={
    \fill[blue!20](path picture bounding box.north west) -- (path picture bounding box.south east) --  (path picture bounding box.north east)--cycle;
    \fill[red!20](path picture bounding box.north west) -- (path picture bounding box.south west) --  (path picture bounding box.south east)--cycle;
  }}]

\foreach \i in {0,1}
{	
	\node[draw] (u1\i) at (12*\i,5) {};
	\node[draw] (u2\i) at (12*\i+1,5) {};
	\node[draw] (u3\i) at (12*\i+2,5) {};
	\node[draw] (u4\i) at (12*\i+3,5) {};
	
	\node[draw] (v1\i) at (12*\i+5,5) {};
	\node[draw] (v2\i) at (12*\i+6,5) {};
	\node[draw] (v3\i) at (12*\i+7,5) {};
	\node[draw] (v4\i) at (12*\i+8,5) {};
	
	\draw (u1\i)--(u2\i);
	\draw[line width=2, color=red!50] (u2\i)--(u3\i)--(u4\i);
	
	\draw[line width=2, color=red!50] (v1\i)--(v2\i);
	\draw (v2\i)--(v3\i)--(v4\i);
	
	\draw[dashed,line width=2, color=red!50] (u4\i) -- (v1\i);
	\draw[dashed] (u1\i) --++ (-1,0);
	\draw[dashed] (v4\i) --++ (1,0);
	
	\draw [
    thick,
    decoration={
        brace,        
        raise=0.2cm
    },
    decorate
] (u2\i) -- (v2\i) 
node [pos=0.5,anchor=north,yshift=0.95cm] {$\geq g'$}; 
}

\node[label={\small $x^u_i$}]  at  (u20) {};
\node[label={\small $x^u_{i+1}$}] at  (v20) {};
\node[label={\small $x^v_j$}]  at  (u21) {};
\node[label={\small $x^v_{j+1}$}]  at  (v21) {};

\node[draw,label=below:{\small $z^C$}] (zc1) at (8,0) {};
\node[draw,label=below:{\small $z^{C'}$}] (zc2) at (12,0) {};

\draw[line width=2, color=red!50] (u20) -- (zc1)--(u21);
\draw[line width=2, color=red!50] (v20) -- (zc2)--(v21);

\end{tikzpicture}
\caption{The girth of $G$ is at least $2g' \geq g$.} \label{fig:gndi-girth}
\end{figure}
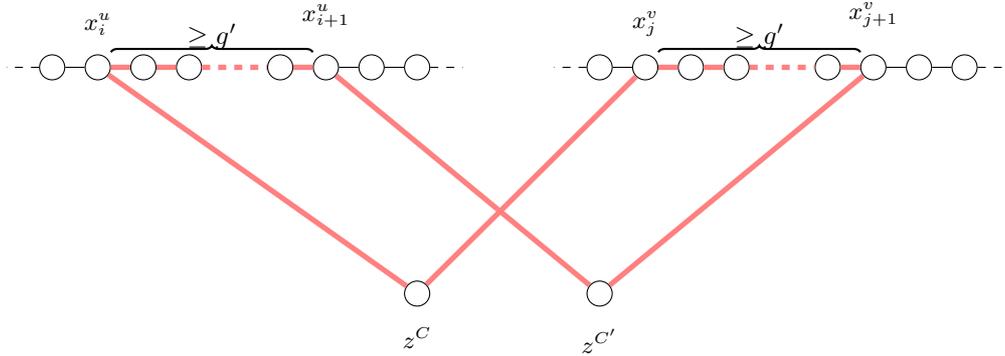

We claim that $\gndi(G)=2$ if and only if $\Phi$ is satisfiable.
First, suppose that $\gndi(G)=2$ and let $c \colon E(G) \to \{1,2\}$ be a labeling distinguishing neighbors by sets.
Recall from \autoref{obs:gndi2} that for every $v \in \cV$, the value of $c(E(y^v))$ is either $\{1\}$ or $\{2\}$.
Moreover, it is straightforward to observe that for all $v$ and all $i$ it holds that $c(E(x^v_i))=c(E(y^v))$ (see \autoref{fig:gndi-var}).
Let $\varphi$ be the truth assignment defined as follows: $\varphi(v)$ is $true$ if $c(E(y^v))=\{1\}$, otherwise $\varphi(v)$ is $false$.
Suppose $\varphi$ is not a satisfiable assignment and let $C$ be an unsatisfied clause. We consider the case that $|C|=3$, the case if $|C|=2$ is analogous. So let $C=\{u,v,w\}$ and without loss of generality assume that all these three variables are set false by $\varphi$.
However, this means that all edges incident with $z^C$ are labeled 2, so $c(E(z^C))=\{2\}$, which contradicts the assumption that $c$ distinguishes neighbors by sets.

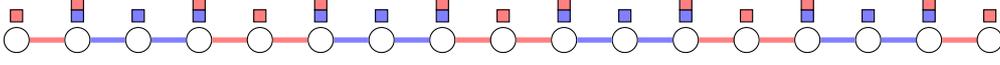
\begin{figure}[h]
\centering
\begin{tikzpicture}[scale=0.8,every node/.style={draw,circle,minimum size = 5pt},white/.style={fill=white}]

\foreach \i in {0,...,3}
{
    \draw [fill=red!50] (-0.1+4*\i,0.3) rectangle (0.1+4*\i,0.5);
	\node[white] (x\i) at (4*\i,0) {};
	\draw [fill=blue!50] (0.9+4*\i,0.3) rectangle (1.1+4*\i,0.5);
	\draw [fill=red!50] (0.9+4*\i,0.5) rectangle (1.1+4*\i,0.7);
	\node[white]  (a\i) at (4*\i+1,0) {};
	\draw [fill=blue!50] (1.9+4*\i,0.3) rectangle (2.1+4*\i,0.5);
	\node[white] (y\i) at (4*\i+2,0) {};
	\draw [fill=blue!50] (2.9+4*\i,0.3) rectangle (3.1+4*\i,0.5);
	\draw [fill=red!50] (2.9+4*\i,0.7) rectangle (3.1+4*\i,0.5);
	\node[white]  (b\i) at (4*\i+3,0) {};
	\draw[line width=2, color=red!50] (x\i) -- (a\i);
	\draw[line width=2, color=blue!50] (a\i)--(y\i) -- (b\i);
	\draw[line width=2, color=red!50] (b\i)--(4*\i+4,0);
}
\draw [fill=red!50] (-0.1+4*4,0.3) rectangle (0.1+4*4,0.5);
\node[white] (x4) at (4*4,0) {};
\end{tikzpicture}
\caption{Up to the permutation of labels, there is only one possible labeling of edges of $P^v$.} \label{fig:gndi-var}
\end{figure}

On the other hand, let $\varphi$ be a satisfiable assignment of $\Phi$. We define the labeling $c$ of edges of each $P^v$ according to the pattern shown in \autoref{fig:gndi-var}, such that the edge incident with $y^v$ is labeled 1 if $\varphi(v)=true$, and otherwise it is labeled 2.
The edges incident to $x^v_i$ are labeled in such a way that $c(E(x^v_i))$ is either $\{1\}$ or $\{2\}$, note that this is uniquely determined by the labeling of edges of $P^v$. Note that since $\varphi$ satisfied $\Phi$, we have $c(E(z^C))=\{1,2\}$ for every $C \in \cC$, and thus $c$ distinguishes neighbors by sets.

Finally, observe that the number $n$ of vertices of $G$ is $O(N+M)$, so an algorithm solving our problem in time $2^{o(n)}$ could be used to solve any instance of \posnaeksat{3} in time $2^{o(N+M)}$, which would in turn contradict the ETH.
\end{proof}

\autoref{lem:gndi-naesat} has some further interesting implications. 
Quite surprisingly, Moret~\cite{Moret:1988:PNP:49097.49099} showed that \planaeksat{3} can be solved in polynomial time.
Since the textbook polynomial reduction from \naesat to \naeksat{3} preserves planarity of the incidence graph, the following also holds.

\begin{corollary}\label{cor:planae}
\planaesat can be solved in polynomial time.
\end{corollary}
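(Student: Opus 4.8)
The plan is to reduce \planaesat to \planaeksat{3} and then invoke Moret's polynomial-time algorithm~\cite{Moret:1988:PNP:49097.49099}. Starting from an instance $\Phi$ of \planaesat with variables \cV and clauses \cC, I would first fix a planar embedding of its literal-clause incidence graph. This embedding records, for every clause vertex, the cyclic (rotational) order in which the edges to its literals leave that vertex; this rotation system is the object I will exploit to keep planarity under control.

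For the reduction itself I would use the textbook not-all-equal clause-splitting gadget. A clause $(\ell_1,\ell_2,\ldots,\ell_m)$ with $m\ge 4$ is replaced, using a fresh auxiliary variable $y$, by the pair $(\ell_1,\ell_2,y)$ and $(\bar y,\ell_3,\ldots,\ell_m)$; iterating on the second (still long) clause turns the original clause into a chain of $m-2$ three-clauses $C_1,\ldots,C_{m-2}$ linked by auxiliary variables $y_1,\ldots,y_{m-3}$. The standard verification shows that the original clause is not-all-equal satisfied by an assignment if and only if the resulting chain is, so after processing every long clause the new formula is equivalent to $\Phi$ as an \naeksat{3} instance, has size $O(|\Phi|)$, and is constructible in polynomial time.

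The crucial point, and the step I expect to be the main obstacle, is verifying that this reduction preserves \emph{strong} planarity, i.e.\ planarity of the new literal-clause incidence graph, since the auxiliary variables introduce new negation edges $y_i\bar y_i$. Here I would use the fixed rotation system. For each long clause, relabel its literals so that $\ell_1,\ldots,\ell_m$ is exactly the cyclic order in which its edges left the clause vertex in the embedding. The chain produced by the splitting, together with the auxiliary literal vertices and the negation edges, forms a caterpillar whose spine alternates the new clause vertices $C_1,\ldots,C_{m-2}$ with the auxiliary literal pairs $y_i,\bar y_i$ (joined by their negation edges), and whose feet are the original literals $\ell_1,\ell_2$ on $C_1$, then one literal per subsequent clause, then $\ell_{m-1},\ell_m$ on $C_{m-2}$ — attached in exactly this order. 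Because the feet are consumed in rotational order, this caterpillar can be drawn inside a small disk around the former position of the clause vertex: each edge $C_t\ell_s$ is routed along the original edge $z\ell_s$, while all the auxiliary gadgetry (clause vertices, auxiliary literals, and their negation edges) stays strictly inside the disk, so that no crossings are introduced and the negation edges for the original variables are left untouched. Performing this substitution independently for every long clause yields a planar embedding of the new literal-clause incidence graph.

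Finally, the resulting formula is an \naeksat{3} instance whose literal-clause incidence graph is planar, i.e.\ an instance of \planaeksat{3}, so Moret's algorithm decides its satisfiability in polynomial time; by the equivalence established by the gadget this decides $\Phi$ as well. The genuinely delicate part is the embedding argument of the previous paragraph, whereas the correctness of the gadget and the polynomial size bound are routine.
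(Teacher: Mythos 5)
Your proposal is correct and takes essentially the same route as the paper, which likewise reduces to \planaeksat{3} via the textbook clause-splitting gadget and then invokes Moret's polynomial-time algorithm; the paper disposes of the key step in one clause (``the textbook polynomial reduction from \naesat to \naeksat{3} preserves planarity of the incidence graph''). Your disk-embedding argument for the caterpillar gadget, with the negation edges of the auxiliary variables kept inside the disk, is precisely the verification of strong planarity that the paper leaves implicit.
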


This,  combined with \autoref{lem:gndi-naesat}, gives us the following result.

\begin{theorem}\label{thm:gndi-planar-poly}
If $G$ is planar and bipartite, then $\gndi(G)$ can be computed in polynomial time.
\end{theorem}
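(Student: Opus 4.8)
The plan is to exploit the fact, due to \cite{GYORI2008827} (see \autoref{obs:gndi2}), that every bipartite graph on at least three vertices satisfies $\gndi(G)\in\{2,3\}$. Since only these two values are possible, computing $\gndi(G)$ reduces to deciding whether $\gndi(G)=2$: if so we report $2$, and otherwise $3$. Because the distinguishing constraints are local, I would first reduce to the connected case, labeling each component independently and taking the maximum; for a connected bipartite graph the bipartition into classes $A$ and $B$ is unique up to swapping them.

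Next I would translate the condition $\gndi(G)=2$ into a satisfiability question. By \autoref{obs:gndi2}, in any $2$-labeling $c\colon E\to\{1,2\}$ distinguishing neighbors by sets, the set of vertices $v$ with $c(E(v))=\{1,2\}$ is exactly one full bipartition class $Y$, and the remaining vertices form the other class $X$. Hence such a labeling exists precisely when one can take either $A=X$ (so $B=Y$) or $B=X$ (so $A=Y$). By \autoref{lem:gndi-naesat}, the first option is realizable if and only if the $A$-derived \posnaesat formula $\Phi_A$ is satisfiable, and the second if and only if the $B$-derived formula $\Phi_B$ is satisfiable. Therefore $\gndi(G)=2$ if and only if at least one of $\Phi_A,\Phi_B$ is satisfiable.

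It then remains to decide satisfiability of $\Phi_A$ and $\Phi_B$ efficiently. The incidence graph of $\Phi_A$ has variable-vertices $A$ and clause-vertices $B$, with $var(v)$ joined to $C_u$ exactly when $v$ and $u$ are adjacent in $G$; thus it is isomorphic to $G$ and, in particular, planar. As these are positive instances, strong planarity coincides with planarity of the incidence graph, so both $\Phi_A$ and $\Phi_B$ are instances of \planaesat. Invoking \autoref{cor:planae}, which asserts that \planaesat is solvable in polynomial time, I can test both formulas and hence decide whether $\gndi(G)=2$ in polynomial time, completing the computation of $\gndi(G)$.

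The reduction itself is short and routine once the preceding lemmas are in place, and I do not expect any computational obstacle. The one point that genuinely needs care is recognizing that a distinguishing $2$-labeling may designate \emph{either} bipartition class as $X$, so both derived formulas must be tested; checking only $\Phi_A$ would be insufficient. The substantive difficulty — the polynomial-time solvability of \planaesat, which rests on Moret's theorem — has already been discharged in \autoref{cor:planae}, and that is precisely what powers this argument.
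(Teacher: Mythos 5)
Your proposal is correct and matches the paper's own proof essentially step for step: reduce to the connected case, observe via \autoref{obs:gndi2} that $\gndi(G)=2$ forces one full bipartition class to play the role of $X$, test both derived formulas $\Phi_A$ and $\Phi_B$ through \autoref{lem:gndi-naesat}, note their incidence graph is $G$ itself (hence planar), and invoke \autoref{cor:planae}. The point you flag as the one requiring care --- that both $\Phi_A$ and $\Phi_B$ must be tested, since either class may serve as $X$ --- is exactly the same case split the paper makes.
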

\begin{proof}
Recall that if $G$ is bipartite, then either $\gndi(G)=2$ or $\gndi(G)=3$.  We will focus on deciding whether $\gndi(G)=2$, otherwise we answer that $\gndi(G)=3$.
Without loss of generality we may assume that $G$ is connected, since otherwise we may label each connected component independently. Let $A,B$ be bipartition classes of $G$.
Clearly $\gndi(G)=2$ if and only if at least one of the following hold:
\begin{compactitem}
\item $G$ has an edge labeling $c$ with two labels, distinguishing neighbors by sets, in which $A=X$, or
\item $G$ has an edge labeling $c$ with two labels, distinguishing neighbors by sets, in which $B=X$.
\end{compactitem}
By \autoref{lem:gndi-naesat}, this is equivalent to saying that the $A$-derived formula $\Phi_A$ of $G$ is satisfiable, or the $B$-derived formula $\Phi_B$ of $G$ is satisfiable. Notice that the incidence graph of $\Phi_A$ and of $\Phi_B$ is exactly $G$, so it is planar. Since formulae $\Phi_A$ and $\Phi_B$ can be computed in polynomial time, and by \autoref{cor:planae}, the satisfiability of each of them can be tested in polynomial time, we obtain our claim.
\end{proof}

Finally, Dehghan~\cite{Dehghan2016} showed NP-completes of yet another variant of \naeksat{3}, called \restrictedsat.
In this problem we are given an instance $\Phi$ of \textsc{Strongly Planar Positive Not-All-Equal 3-Sat} along with a non-empty subset $\cV'$ of variables, and we ask whether $\Phi$ has a satisfiable truth assignment, such that all variables in $\cV'$ are set $true$. This, along with \autoref{lem:gndi-naesat}, implies the hardness of the following extension variant of our problem.

\begin{theorem}\label{thm:gndi-planar-extend}
Let $G$ be a connected planar bipartite graph with bipartition classes $A,B$, and let $E'$ be a subset of edges of $G$.
It is NP-complete to determine whether $G$ has an edge labeling $c$ with labels $\{1,2\}$, distinguishing neighbors by sets, in which $c(e) = 1$ for every $e \in E'$.
\end{theorem}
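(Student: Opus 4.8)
The plan is to reduce from \restrictedsat, whose NP-completeness was established by Dehghan, using the correspondence provided by \autoref{lem:gndi-naesat}. Membership in NP is immediate: given a labeling $c\colon E \to \{1,2\}$, one checks in polynomial time that it distinguishes neighbors by sets and that $c(e)=1$ for every $e \in E'$. So the whole content is in the hardness reduction.

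Given an instance $(\Phi, \cV')$ of \restrictedsat, where $\Phi$ is a strongly planar positive \naeksat{3} formula with variable set $\cV$ and clause set $\cC$, and $\cV' \subseteq \cV$ is non-empty, I would let $G$ be the incidence graph of $\Phi$, with bipartition classes $A=\cV$ and $B=\cC$. Since $\Phi$ is positive and strongly planar, $G$ is planar; it is bipartite, and, assuming as we may that the clauses have size at least $2$, every clause-vertex has degree at least $2$, so $G$ has no isolated edge. We may also assume $G$ is connected (this is the one technicality; otherwise one joins the components by a gadget that preserves both planarity and the existence of the desired labeling, or argues componentwise). Observe that $\Phi$ is precisely the $A$-derived formula of $G$. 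Finally, set
\[
E' := \bigcup_{v \in \cV'} E(v),
\]
the set of all edges incident to the variable-vertices in $\cV'$; this is computable in polynomial time.

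For correctness I would argue both directions. In the forward direction, from a satisfying assignment $\varphi$ of $\Phi$ in which every $v \in \cV'$ is true, I build the labeling $c$ exactly as in the proof of \autoref{lem:gndi-naesat} (that is, $c(vu)=1$ iff $\varphi(v)=true$); since each $v \in \cV'$ is true, all edges incident to $v$ receive color $1$, so $c \equiv 1$ on $E'$. In the backward direction, let $c$ be any labeling distinguishing neighbors by sets with $c \equiv 1$ on $E'$. For $v \in \cV'$, all edges incident to $v$ lie in $E'$ and hence equal $1$, so $c(E(v))=\{1\}$; by \autoref{obs:gndi2} this places $v$ in $X$, indeed in $X_1$. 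Because $\cV'$ is non-empty and $G$ is connected, this single vertex forces the entire bipartition to align as $A=\cV=X$ and $B=\cC=Y$. Thus $c$ is a labeling distinguishing neighbors by sets with $A=X$, and \autoref{lem:gndi-naesat} delivers a satisfying assignment $\varphi$ of $\Phi$ in which $\varphi(v)=true$ for each $v \in \cV'$ (as these vertices lie in $X_1$). Hence $(\Phi,\cV')$ is a yes-instance exactly when $(G,E')$ is.

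The main obstacle, and the only real content beyond \autoref{lem:gndi-naesat}, is breaking the symmetry between the two bipartition classes. A labeling distinguishing neighbors by sets of a connected bipartite graph may a priori realize either $A=X$ or $B=X$, and only the orientation $A=X$ corresponds to assignments of $\Phi$; the opposite orientation would correspond to the unrelated $B$-derived formula and could create spurious solutions. The key observation is that forcing every edge at a $\cV'$-vertex to color $1$ makes that vertex monochromatic, which by \autoref{obs:gndi2} forces it into $X_1$ — this simultaneously encodes ``true'' and, together with connectivity, pins the orientation $A=X$ once and for all. Planarity is preserved automatically, since $G$ is literally the incidence graph, so no planarity-sensitive gadgetry is needed; the only point requiring care is the connectivity assumption. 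As \restrictedsat is NP-complete, the claimed hardness follows.
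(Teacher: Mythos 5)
Your proof is correct and follows essentially the same route as the paper's: reduce from \restrictedsat via the incidence graph, take $E'$ to be the set of edges incident to the vertices of $\cV'$, and use \autoref{obs:gndi2} together with the non-emptiness of $\cV'$ to force the orientation $A=X$ before invoking \autoref{lem:gndi-naesat}. You are in fact somewhat more careful than the paper's three-sentence argument, which leaves NP membership, the two correctness directions, and the connectivity of the incidence graph implicit.
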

\begin{proof}
Let $(\Phi,\cV')$ be an instance of \restrictedsat and $G$ be its incidence graph.
For each variable $v \in \cV'$, we label all edges incident to $v$ with label 1. Note that this means that the variable $v$ must be set $true$. 
Moreover, notice that since $\cV'$ is non-empty, at least one of variable-vertices is forced to be in $X_1$, so there is no ambiguity between the bipartition classes of $G$.
\end{proof}

Finally let us mention another related problem, introduced by Tahraoui, Duch\^ene, and Kheddouci~\cite{TAHRAOUI20123011}.
We say that an edge labeling $c$ of a connected graph $G$ with at least two vertices is an \emph{edge-labeling by gap} if the vertex coloring defined as
\[
f(v) = \begin{cases}
c(e) & \text{ if } \deg v = 1 \text{ and } v \in e,\\ 
\max_{e \ni v} c(e) - \min_{e \ni v} c(e) & \text{otherwise},
\end{cases}
\]
is proper. By $\gap(G)$ we denote the minimum $k$, for which there exists an edge labeling by gap using labels $\{1,2,\ldots,k\}$.
We observe that if $G$ is a bipartite graph with minimum degree at least 2, then $c \colon E \to \{1,2\}$ is an edge-labeling by gap if and only if it distinguishes neighbors by sets.
Indeed, suppose that $c$ is an edge-labeling by gap and consider an edge $uv$ of $G$. Without loss of generality assume that $f(u) < f(v)$, they cannot be equal by the definition of $c$. Since $\deg u, \deg v \geq 2$, and $c$ uses only labels 1 and 2, we must have $f(u)=0$ and $f(v)=1$. This means that $c(E(v))=\{1,2\}$ and $c(E(u)) \in \{ \{1\}, \{2\} \}$, which proves that $c$ is a labeling distinguishing neighbors by sets.
Analogously, suppose $c$ distinguishes neighbors by sets and consider an edge $uv$, such that $u \in X$ and $v \in Y$. By the definition of $c$, we observe that $f(u)=0$ and $f(v)=1$, so $c$ is an edge labeling by gap.

However, the presence of vertices of degree 1 in $G$ changes the situation significantly: Dehghan, Sadeghi, and Ahadi~\cite{DEHGHAN201325} proved that determining if $\gap(G)=2$ is NP-complete for a planar bipartite graph $G$, while it is polynomially solvable if $G$ has minimum degree 2. This is in a contrast with \autoref{thm:gndi-planar-poly}.

\subsection{Graphs $G$ with $\gndi(G)=2$} \label{sec:gndi2}

The \posnaesat is equivalent to a well-known hypergraph problem, whose origins date back to the work of Bernstein~\cite{Bernstein}, and was popularized by  Erd\H{o}s~\cite{Erdos1,Erdos2}. We say that a hypergraph $H$ \emph{has property \B} (or simply is 2-colorable) if there is a 2-coloring of its vertices, in which no edge is monochromatic. It is straightforward to observe that $\Phi$ is a satisfiable \posnaesat formula if and only if the hypergraph whose vertices are variables of $\Phi$, and edges are formed by the clauses, has property \B.
Radhakrishnan and Srinivasan~\cite{RSA:RSA2} proved that every $k$-uniform hypergraph with at most $0.1 \sqrt{k/\ln k} \; 2^k$ edges has property \B. Moreover, if $k$ is large enough, then the bound can be improved to $0.7 \sqrt{k/\ln k} \; 2^k$. By \autoref{lem:gndi-naesat}, we obtain the following corollary.

\begin{corollary}
Let $G$ be a bipartite graph with bipartition classes $A$ and $B$, such that $\deg b = k$ for every $b \in B$.
If $|B| \leq 0.1 \sqrt{k/\ln k} \; 2^k$, then $\gndi(G)=2$.\\
Moreover, if $k$ is sufficiently large and $|B| \leq 0.7 \sqrt{k/\ln k} \; 2^k$, then $\gndi(G)=2$.
\end{corollary}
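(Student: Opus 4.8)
The plan is to chain three facts already established in the excerpt: the translation of ``$\gndi(G)=2$'' into satisfiability of a \posnaesat formula (\autoref{lem:gndi-naesat}), the equivalence between \posnaesat satisfiability and property \B of the associated hypergraph, and the Radhakrishnan--Srinivasan bound guaranteeing property \B for sparse $k$-uniform hypergraphs. Since $G$ is bipartite, it suffices to exhibit an edge labeling with two labels that distinguishes neighbors by sets: indeed $\gndi(G)\ge 2$ holds automatically because $G$ has an edge (a single label cannot separate the two endpoints of any edge), so producing such a labeling forces $\gndi(G)=2$. As the distinguishing constraints are local, I may assume $G$ is connected, treating each connected component separately and gluing the resulting labelings otherwise; each component inherits the hypothesis that every $B$-vertex has degree $k$ and has at most $|B|$ clause-vertices.

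Next I would pass to the $A$-derived formula $\Phi$ of $G$, whose variables are the vertices of $A$ and whose clauses are the sets $C_u=\{var(v)\colon v\in N(u)\}$ for $u\in B$. Because $G$ is simple and $\deg u = k$ for every $u\in B$, each clause consists of exactly $k$ distinct variables. Hence the hypergraph $H_\Phi$ with vertex set $A$ and edge set $\{C_u\colon u\in B\}$ is $k$-uniform and has at most $|B|$ edges, and by the stated equivalence $\Phi$ is satisfiable precisely when $H_\Phi$ has property \B.

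I would then invoke the theorem of Radhakrishnan and Srinivasan: when $|B|\le 0.1\sqrt{k/\ln k}\,2^k$, and for sufficiently large $k$ already when $|B|\le 0.7\sqrt{k/\ln k}\,2^k$, the $k$-uniform hypergraph $H_\Phi$ has property \B. Consequently $\Phi$ is satisfiable, and \autoref{lem:gndi-naesat} produces an edge labeling of $G$ with two labels distinguishing neighbors by sets (with $A=X$); therefore $\gndi(G)=2$, proving both claims.

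Since the argument is a chaining of known results, no single step is difficult, and the work lies entirely in bookkeeping. I expect the main (mild) obstacle to be twofold: first, confirming that each clause is genuinely a $k$-element set, which rests on the simplicity of $G$ together with the \emph{exact}-degree hypothesis on $B$ --- a weaker bound such as $\deg b \le k$ would not give $k$-uniformity (and a degree-$1$ vertex in $B$ would even force an unsatisfiable clause), so the application of the Radhakrishnan--Srinivasan bound would collapse; and second, discharging the connectivity assumption of \autoref{lem:gndi-naesat}. The latter is resolved by noting that in the constructed labeling every edge joins a clause-vertex $u$ with $c(E(u))=\{1,2\}$ to a variable-vertex $v$ with $c(E(v))$ a singleton, so adjacent vertices always receive different sets; hence the per-component labelings combine into a single global labeling distinguishing neighbors by sets.
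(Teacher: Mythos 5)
Your proposal is correct and follows essentially the same route as the paper, which obtains the corollary precisely by chaining \autoref{lem:gndi-naesat}, the equivalence between \posnaesat and property \B, and the two Radhakrishnan--Srinivasan bounds stated immediately before it. Your additional bookkeeping (verifying $k$-uniformity of the derived hypergraph, the trivial lower bound $\gndi(G)\geq 2$, and the reduction to connected components) only makes explicit what the paper leaves implicit, so the argument stands as written.
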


As the very first application of the Lov\'asz Local Lemma, Erd\H{o}s and Lov\'{a}sz~\cite{erdos1975problems}  proved that each hypergraph, whose every edge has at least $k$ elements and intersects at most $\frac{2^{k-1}}{e}-1$ other edges, has property \B (see also Alon, Spencer~\cite{DBLP:books/wi/AlonS92}). For large $k$ and uniform hypergraphs, this bound was improved by Radhakrishnan and Srinivasan~\cite{RSA:RSA2} who showed that if $H$ is a $k$-uniform hypergraph, in which no edge intersects more than $0.17 \sqrt{k/ \ln k} \; 2^k$ edges, then $H$ has property \B. This gives the following.

\begin{corollary}
Let $G$ be a bipartite graph with bipartition classes $A$ and $B$, such that $\deg b \geq k$ for every $b \in B$.
If for every vertex $b \in B$, the number of vertices at distance exactly 2 from $b$ is at most $\frac{2^{k-1}}{e}-1$, then $\gndi(G)=2$.\\
Moreover, if $\deg b=k$ for every $b \in B$ and $k$ is sufficiently large, then the bound on the number of vertices at distance 2 can be improved to $0.17 \sqrt{k/ \ln k} \; 2^k$.
\end{corollary}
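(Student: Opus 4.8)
The plan is to route the statement through \autoref{lem:gndi-naesat} and the property-\B reformulation of \posnaesat, so that the whole corollary becomes a single application of the two local-lemma bounds quoted just above the statement. First I would reduce to connected $G$: a labeling distinguishing neighbors by sets can be chosen independently on each connected component, and concatenating such labelings (each using two labels) gives a two-label labeling of all of $G$. Since $\deg b \geq k \geq 2$ for every $b \in B$, no component is an isolated edge, so every component is nice; and as any nice graph with an edge satisfies $\gndi \geq 2$, it suffices to \emph{exhibit} a labeling with two labels to conclude $\gndi(G)=2$.

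Next I would set up the correct formula. Taking the high-degree side $B$ to supply the clauses, consider the $A$-derived \posnaesat formula $\Phi_A$: a variable $var(v)$ for each $v \in A$, and a clause $C_b = \{var(v) \colon v \in N(b)\}$ for each $b \in B$. By the equivalence recalled just before the statement, $\Phi_A$ is satisfiable if and only if the hypergraph $H$ with vertex set $A$ and edge set $\{N(b) \colon b \in B\}$ has property \B. The heart of the argument is the dictionary between $G$ and $H$: each edge $N(b)$ of $H$ has exactly $\deg b \geq k$ elements, and two edges $N(b)$ and $N(b')$ intersect precisely when $b$ and $b'$ have a common neighbor in $A$, i.e.\ when $b$ and $b'$ are at distance exactly $2$ in $G$. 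Consequently the number of edges of $H$ met by $N(b)$ equals the number of vertices of $B$ at distance exactly $2$ from $b$.

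With this dictionary the hypotheses translate verbatim into the hypotheses of the quoted bounds. For the first claim, every edge of $H$ has at least $k$ elements and, by assumption, meets at most $\frac{2^{k-1}}{e}-1$ other edges, so the Erd\H{o}s--Lov\'asz bound gives that $H$ has property \B; hence $\Phi_A$ is satisfiable, and \autoref{lem:gndi-naesat} produces a two-label labeling of $G$ distinguishing neighbors by sets, so $\gndi(G)=2$. For the moreover part, the assumption $\deg b = k$ makes $H$ exactly $k$-uniform, so the sharper Radhakrishnan--Srinivasan bound applies once the distance-$2$ count is at most $0.17\sqrt{k/\ln k}\,2^k$, and the same chain of implications yields $\gndi(G)=2$.

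I do not expect a genuine difficulty here, as the proof is essentially a translation. The only points demanding care are bookkeeping: one must orient the $A$-derived formula so that the \emph{clauses} correspond to the high-degree class $B$ (this is what forces every edge of $H$ to have at least $k$ vertices, and makes $H$ exactly $k$-uniform in the moreover part), and one must check that edge-intersection in $H$ coincides with distance-$2$ adjacency in $G$ rather than with ordinary adjacency. Confirming $\gndi(G)\geq 2$, so that the constructed labeling pins down $\gndi(G)=2$ rather than merely $\gndi(G)\leq 2$, is the only other step worth stating explicitly.
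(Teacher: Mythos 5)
Your proposal is correct and matches the paper's intended derivation exactly: the corollary is obtained by applying \autoref{lem:gndi-naesat} to the $A$-derived formula whose clauses come from the class $B$, identifying satisfiability with property \B of the hypergraph $\{N(b)\colon b\in B\}$, and invoking the Erd\H{o}s--Lov\'asz bound (respectively the Radhakrishnan--Srinivasan bound in the $k$-uniform case) via the observation that hyperedges intersecting $N(b)$ correspond to vertices of $B$ at distance exactly $2$ from $b$. The bookkeeping points you flag (orienting the formula so clauses sit on $B$, the distance-$2$ dictionary, and the cheap lower bound $\gndi(G)\geq 2$) are precisely the content the paper leaves implicit.
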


The bound by  Erd\H{o}s and Lov\'{a}sz~\cite{erdos1975problems} implies that if $k \geq 9$, then every  $k$-uniform $k$-regular hypergraph $H$ has property \B. The bound on $k$ can be improved to $k \geq 4$ (the result is attibuted to Thomassen, the proof can be found in Henning, Yeo~\cite{HENNING20131192}). Thus we immediately obtain the following corollary.

\begin{corollary}
For every $k \geq 4$ and every $k$-regular bipartite graph $G$ it holds that $\gndi(G)=2$.
\end{corollary}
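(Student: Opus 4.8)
The plan is to reduce the statement to the black-box fact, recalled just above, that every $k$-uniform $k$-regular hypergraph has property \B for $k \geq 4$, using the dictionary between two-label edge labelings distinguishing neighbors by sets and \posnaesat supplied by \autoref{lem:gndi-naesat}.

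First I would dispose of connectivity. Since edge labelings are local and every connected component of a $k$-regular bipartite graph is itself $k$-regular and bipartite, it suffices to treat connected $G$, after which $\gndi$ of the whole graph is the maximum of $\gndi$ over its components. So assume $G=(V,E)$ is connected with bipartition classes $A,B$. Observe that $\gndi(G)\geq 2$ always, because a labeling using a single label gives every vertex the color set $\{1\}$ and distinguishes no pair of neighbors; hence it is enough to produce a two-label edge labeling distinguishing neighbors by sets.

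Next I would form the $A$-derived formula $\Phi_A$ of $G$ and pass to its associated hypergraph $H'$, whose vertices are the variables (equivalently, the vertices of $A$) and whose edges are the clauses $C_u = N(u)$ for $u \in B$. The key observation is that $H'$ inherits both uniformity and regularity from the regularity of $G$: each clause $C_u$ contains exactly $\deg u = k$ literals, so $H'$ is $k$-uniform, and each variable $var(v)$ occurs in exactly $\deg v = k$ clauses, so $H'$ is $k$-regular. By the result attributed to Thomassen (Henning, Yeo), $H'$ therefore has property \B for $k \geq 4$, which by the \posnaesat--property \B equivalence means that $\Phi_A$ is satisfiable. Applying the forward direction of \autoref{lem:gndi-naesat} to a satisfying assignment yields an edge labeling of $G$ with two labels distinguishing neighbors by sets, whence $\gndi(G) \leq 2$; combined with the trivial lower bound this gives $\gndi(G)=2$.

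I expect the only genuinely delicate point to be the bookkeeping behind the claim that $H'$ is $k$-regular: distinct vertices $u,u' \in B$ may share the same neighborhood (as in $K_{k,k}$), so $H'$ should be understood as a hypergraph possibly carrying repeated edges. This is harmless, since repeated edges impose exactly the same non-monochromaticity constraint and therefore do not affect property \B; the degree counts $\deg u = k$ and $\deg v = k$ are precisely what render $H'$ $k$-uniform and $k$-regular (with multiplicity), so the cited theorem applies without modification.
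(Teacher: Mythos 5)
Your proof is correct and takes essentially the same route as the paper, which obtains the corollary immediately from the Thomassen bound (as presented by Henning and Yeo) that every $k$-uniform $k$-regular hypergraph with $k \geq 4$ has property \B, translated back to labelings via \autoref{lem:gndi-naesat} and the equivalence between \posnaesat and property \B. Your explicit treatment of repeated clauses (e.g.\ when distinct vertices of $B$ share a neighborhood, as in $K_{k,k}$) is a detail the paper passes over silently, and your resolution --- repeated edges impose identical non-monochromaticity constraints, so property \B is unaffected --- is sound.
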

 
It is well known that $k \geq 4$ is the best possible bound of this type, as the Fano plane $F$, i.e., the unique $(7,7,3,3)$-configuration, is a 3-uniform 3-regular hypergraph that does not have property \B (see \autoref{fig:fano}).
This is equivalent to saying that the 3-regular incidence graph $G_F$ of $F$ does not have an edge labeling with two labels, that distinguish neighbors by sets (note that by the symmetry of $F$, the bipartition classes of $G_F$ are symmetric).

\begin{figure}[h]
\centering
\begin{subfigure}[b]{0.4\textwidth}
\centering
\begin{tikzpicture}[scale=3,mydot/.style={  draw,  circle,  fill=black,  inner sep=1.5pt}]
\node at (0,0.8) {a)};
\draw   (0,0) coordinate (A) --   (1,0) coordinate (B) --   ($ (A)!.5!(B) ! {sin(60)*2} ! 90:(B) $) coordinate (C) -- cycle;
\coordinate (O) at   (barycentric cs:A=1,B=1,C=1);
\draw (O) circle [radius=1.717/6];
\draw (C) -- ($ (A)!.5!(B) $) coordinate (LC); 
\draw (A) -- ($ (B)!.5!(C) $) coordinate (LA); 
\draw (B) -- ($ (C)!.5!(A) $) coordinate (LB); 
\foreach \Nodo in {A,B,C,O,LC,LA,LB}
  \node[mydot] at (\Nodo) {};      
\end{tikzpicture}
\end{subfigure}
\begin{subfigure}[b]{0.4\textwidth}
\centering
\begin{tikzpicture}[scale=0.5,every node/.style={draw,circle,minimum size = 5pt},red/.style={fill=red!20}, blue/.style={fill=blue!20},
, rb/.style  = {path picture={
    \fill[blue!20](path picture bounding box.north west) -- (path picture bounding box.south east) --  (path picture bounding box.north east)--cycle;
    \fill[red!20](path picture bounding box.north west) -- (path picture bounding box.south west) --  (path picture bounding box.south east)--cycle;
  }}]
  
\node[draw=none] at (-1.5,5) {b)};  
\node (v1) at (0,0) {};  
\node (v2) at (0,1) {};
\node (v3) at (0,2) {};
\node (v4) at (0,3) {};
\node (v5) at (0,4) {};
\node (v6) at (0,5) {};
\node (v7) at (0,6) {};

\node (e1) at (5,0) {};  
\node (e2) at (5,1) {};
\node (e3) at (5,2) {};
\node (e4) at (5,3) {};
\node (e5) at (5,4) {};
\node (e6) at (5,5) {};
\node (e7) at (5,6) {};

\draw (v1) -- (e1);
\draw (v2) -- (e1);
\draw (v3) -- (e1);

\draw (v3) -- (e2);
\draw (v4) -- (e2);
\draw (v5) -- (e2);

\draw (v1) -- (e3);
\draw (v6) -- (e3);
\draw (v5) -- (e3);

\draw (v1) -- (e4);
\draw (v7) -- (e4);
\draw (v4) -- (e4);

\draw (v2) -- (e5);
\draw (v7) -- (e5);
\draw (v5) -- (e5);

\draw (v3) -- (e6);
\draw (v7) -- (e6);
\draw (v6) -- (e6);
 
\draw (v2) -- (e7);
\draw (v4) -- (e7);
\draw (v6) -- (e7); 
\end{tikzpicture}
\end{subfigure}
\caption{a) Fano plane (points denote vertices, straight lines and the circle denote 3-element edges); b) the 3-regular bipartite incodence graph $G_F$ of $F$ with $\gndi(G_F)=3$.} \label{fig:fano}
\end{figure}
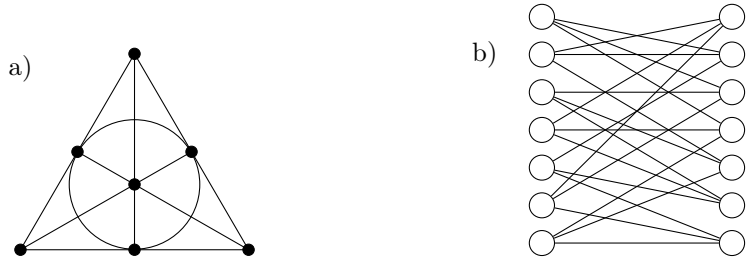

It turns out that the Fano plane is an exceptional case. Person and Schacht~\cite{Person:2009:AHW:1496770.1496795} showed that almost all 3-uniform hypergraphs that do not contain $F$ as a (not necessarily induced) subhypergraph have property \B.
More formally, they showed that if $H$ is a labeled 3-regular hypergraph with $n$ vertices, which does not contain $F$, chosen uniformly at random, then $H$ has property \B with probability at least $1-2^{-\Omega(n^2)}$.
We observe that a hypergraph $H$ contains $F$ as a subhypergraph if and only if its incidence graph contains $G_F$ as an induced subgraph. Thus we obtain the following.

\begin{corollary}
Let ${\cal G}$ be the family of bipartite graphs $G$ with bipartition classes $A,B$, in which $\deg b = 3$ for all $b \in B$, and excluding $G_F$ as an induced subgraph. Then $\gndi(G)=2$ for almost all $G \in {\cal G}$.
\end{corollary}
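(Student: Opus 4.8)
The plan is to reduce the statement to the theorem of Person and Schacht through the dictionary, established earlier in this section, between edge labelings of bipartite graphs and property \B of hypergraphs. First I would fix the bijection. Given $G \in {\cal G}$ with bipartition classes $A,B$, associate to it the hypergraph $H_G$ on vertex set $A$ whose edges are $\{N(b) \colon b \in B\}$; since $\deg b = 3$ for every $b \in B$, the hypergraph $H_G$ is $3$-uniform, and $G$ is precisely its incidence graph. Conversely, every labeled $3$-uniform hypergraph on vertex set $A$ arises in this way from its incidence graph. One has to be slightly careful about two vertices of $B$ having the same neighborhood, which would create a multi-edge in $H_G$; I would therefore restrict ${\cal G}$ to those $G$ that are genuine incidence graphs, or equivalently observe that repeated edges affect neither property \B nor the counting asymptotically.

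Next I would transport the two relevant properties across this bijection. By the observation preceding the corollary, $H_G$ contains the Fano plane $F$ as a (not necessarily induced) subhypergraph if and only if $G$ contains $G_F$ as an induced subgraph; hence $G \in {\cal G}$ excludes $G_F$ exactly when $H_G$ is $F$-free. For the labeling side, recall that for bipartite $G$ we have $\gndi(G) \in \{2,3\}$, and, as in the proof of \autoref{thm:gndi-planar-poly}, $\gndi(G)=2$ holds as soon as the $A$-derived formula $\Phi_A$ is satisfiable. Since $\Phi_A$ is exactly the \posnaesat formula whose clauses are the edges of $H_G$, its satisfiability is equivalent to $H_G$ having property \B (and the resulting $2$-coloring yields the labeling with $A=X$ componentwise, so connectivity of $G$ is not needed here). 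Thus $H_G$ having property \B implies $\gndi(G)=2$.

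Combining these, for $G$ ranging over ${\cal G}$ with $|A|=n$ the image $H_G$ ranges bijectively over labeled $F$-free $3$-uniform hypergraphs on $n$ vertices, and under this correspondence the event ``$H_G$ has property \B'' is contained in the event ``$\gndi(G)=2$''. The theorem of Person and Schacht states that a uniformly random such hypergraph has property \B with probability at least $1 - 2^{-\Omega(n^2)}$, which tends to $1$. Pushing this bound through the bijection shows that the fraction of $G \in {\cal G}$ with $\gndi(G)=2$ tends to $1$ as $n \to \infty$, i.e., $\gndi(G)=2$ for almost all $G \in {\cal G}$.

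The main obstacle I anticipate is not the logical chain above, which is routine once the dictionary is in place, but making the two notions of ``almost all'' coincide: the Person--Schacht result is phrased for the uniform measure on labeled $F$-free $3$-uniform hypergraphs, and I must check that the natural sampling of ${\cal G}$ (by $n=|A|$, uniform over admissible incidence graphs) pushes forward to exactly that measure under $G \mapsto H_G$. The same-neighborhood corner case and the bookkeeping of which vertex class is treated as labeled are where care is needed; everything else follows verbatim from the stated results.
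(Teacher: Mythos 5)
Your proposal is correct and takes essentially the same route as the paper: the corollary is obtained there precisely by combining the Person--Schacht theorem with the observation that a hypergraph contains $F$ as a subhypergraph if and only if its incidence graph contains $G_F$ as an induced subgraph, and with \autoref{lem:gndi-naesat}, which identifies property \B of $H_G$ (equivalently, satisfiability of the $A$-derived \posnaesat formula) with the existence of a $2$-labeling distinguishing neighbors by sets, hence with $\gndi(G)=2$. The points you flag as needing care --- repeated neighborhoods in $B$ creating multi-edges, and matching the uniform measure on labeled $F$-free $3$-uniform hypergraphs with the counting of graphs in ${\cal G}$ --- are left entirely implicit in the paper, so your treatment is, if anything, more careful than the original on exactly the steps it glosses over.
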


\section{Conclusions} \label{sec:conclusion}
Let us conclude the paper with pointing out several open questions.
In \autoref{thm:main} we require that $H$ is a uniform hypergraph, which restricts the corollaries in \autoref{sec:corollaries} to regular graphs. Our approach cannot be easily adapted to nonuniform hypergraphs. However, there are some other algorithmic versions of the Lov\'{a}sz Local Lemma that work in such a setting (see e.g. Czumaj and Scheidler \cite{RSA:RSA3}). It would be interesting to investigate if we can find some bounds on the minimum size of lists that guarantee the existence of an ieds- and an iedm-coloring of a nonuniform hypergraph. Observe that in case of iedm-colorings, we will never have a conflict on intersecting edges with different cardinalities.

Next, let us turn our attention to graph labeling problems.
In \autoref{cor:edge-multiset} we show that if $k$ is sufficiently large, then for any $k$-regular graph, lists of size at least $0.37 k$ are sufficient to guarantee the existence of a list edge labeling, distinguishing neighbors by multisets.
On the other hand, we know that 4 colors suffice in the non-list version (and even 3, if the minimum degree is sufficiently large) \cite{ADDARIOBERRY2005237}. Is it possible to get a constant upper bound in the list variant of the problem?

In case of list edge labelings, distinguishing neighbors by sets,  we cannot hope for a constant bound, as for any graph $G$ with $\chi(G) \geq 3$ it holds that $\gndi(G) = \lceil \log_2 \chi(G) \rceil +1$~\cite{DBLP:journals/dm/GyoriP09}. However, it would still be interesting to know whether the bound from \autoref{cor:edge-set} can be improved to a sublinear function of $k$.

Furthermore, as shown in \autoref{sec:gndi2}, a close connection between the property \B of a hypergraph $H$ and the value of $\gndi(G)$ of its incidence graph $G$ yields the existence of neighbor-distinguishig edge labeling using only two labels in several classes of bipartite graphs. However, note that uniformity of $H$, while a natural property of a hypergraph, is translated into a less natural condition of regularity of one bipartition class of $G$. It would be interesting to investigate property \B in hypergraphs, whose incidence graphs belong to some more natural classes of bipartite graphs. Moreover, recall from \autoref{lem:gndi-naesat} that if $G$ is connected and bipartite, then $\gndi(G)=2$ if {\em any} of two hypergraphs, for which $G$ is the incidence graph, has property \B.

Finally, from \autoref{thm:gndi-nphard} and the complexity results mentioned in the introduction it follows that the decision problem of determining whether a hypergraph $H$ with $n$ vertices has an ieds- or an iedm-coloring using at most $c$ colors is NP-hard, even for very restricted classes of hypergraphs. Moreover, assuming the ETH, there is no algorithm solving this problem in time $2^{o(n)}$. This lower bound matches the complexity of the trivial brute force algorithm, if the number of colors $c$ is considered constant. However, if the number of colors is non-constant, the complexity of the brute-force algorithm is $2^{O(n \log n)}$. Can we get a single exponential, i.e., $2^{O(n)}$ time, algorithm for the problem, if the number of colors is arbitrary, or prove that such an algorithm does not exist, under some standard complexity assumptions?

\bigskip
\noindent \textbf{Acknowledgements.}
The authors are sincerely grateful to prof. Zbigniew Lonc, Agnieszka Piliszek, and Bartosz Kołodziejek for fruitful discussions on the topic.


\begin{thebibliography}{10}
\providecommand{\url}[1]{{#1}}
\providecommand{\urlprefix}{URL }
\expandafter\ifx\csname urlstyle\endcsname\relax
  \providecommand{\doi}[1]{DOI~\discretionary{}{}{}#1}\else
  \providecommand{\doi}{DOI~\discretionary{}{}{}\begingroup
  \urlstyle{rm}\Url}\fi

\bibitem{ADDARIOBERRY2005237}
Addario-Berry, L., Aldred, R., Dalal, K., Reed, B.: Vertex colouring edge
  partitions.
\newblock Journal of Combinatorial Theory, Series B \textbf{94}(2), 237 -- 244
  (2005).
\newblock \doi{https://doi.org/10.1016/j.jctb.2005.01.001}

\bibitem{Addario-Berry2007}
Addario-Berry, L., Dalal, K., McDiarmid, C., Reed, B.A., Thomason, A.:
  Vertex-colouring edge-weightings.
\newblock Combinatorica \textbf{27}(1), 1--12 (2007).
\newblock \doi{10.1007/s00493-007-0041-6}.
\newblock \urlprefix\url{https://doi.org/10.1007/s00493-007-0041-6}

\bibitem{ADDARIOBERRY20081168}
Addario-Berry, L., Dalal, K., Reed, B.: Degree constrained subgraphs.
\newblock Discrete Applied Mathematics \textbf{156}(7), 1168 -- 1174 (2008).
\newblock \doi{https://doi.org/10.1016/j.dam.2007.05.059}.
\newblock
  \urlprefix\url{http://www.sciencedirect.com/science/article/pii/S0166218X07002995}.
\newblock GRACO 2005

\bibitem{alon1991parallel}
Alon, N.: A parallel algorithmic version of the local lemma.
\newblock Random Structures \& Algorithms \textbf{2}(4), 367--378 (1991)

\bibitem{DBLP:books/wi/AlonS92}
Alon, N., Spencer, J.: The Probabilistic Method.
\newblock John Wiley (1992)

\bibitem{DBLP:journals/siamdm/BalisterGLS07}
Balister, P.N., Gy{\H{o}}ri, E., Lehel, J., Schelp, R.H.: Adjacent vertex
  distinguishing edge-colorings.
\newblock {SIAM} J. Discrete Math. \textbf{21}(1), 237--250 (2007).
\newblock \doi{10.1137/S0895480102414107}.
\newblock \urlprefix\url{https://doi.org/10.1137/S0895480102414107}

\bibitem{barthelemy1980asymptotic}
Barth{\'e}l{\'e}my, J.P.: An asymptotic equivalent for the number of total
  preorders on a finite set.
\newblock Discrete Mathematics \textbf{29}(3), 311--313 (1980)

\bibitem{DBLP:journals/jgt/BartnickiGN09}
Bartnicki, T., Grytczuk, J., Niwczyk, S.: Weight choosability of graphs.
\newblock Journal of Graph Theory \textbf{60}(3), 242--256 (2009).
\newblock \doi{10.1002/jgt.20354}.
\newblock \urlprefix\url{https://doi.org/10.1002/jgt.20354}

\bibitem{Bernstein}
Bernstein, F.: {Zur Theorie der trigonometrische Reihen}.
\newblock Leipz. Ber. \textbf{60}, 325 -- 328 (1908)

\bibitem{Bertossi:1995:CAH:225996.226008}
Bertossi, A.A., Bonuccelli, M.A.: Code assignment for hidden terminal
  interference avoidance in multihop packet radio networks.
\newblock IEEE/ACM Trans. Netw. \textbf{3}(4), 441--449 (1995).
\newblock \doi{10.1109/90.413218}.
\newblock \urlprefix\url{http://dx.doi.org/10.1109/90.413218}

\bibitem{DBLP:journals/cj/BodlaenderKTL04}
Bodlaender, H.L., Kloks, T., Tan, R.B., van Leeuwen, J.: Approximations for
  lambda-colorings of graphs.
\newblock Comput. J. \textbf{47}(2), 193--204 (2004).
\newblock \doi{10.1093/comjnl/47.2.193}.
\newblock \urlprefix\url{https://doi.org/10.1093/comjnl/47.2.193}

\bibitem{DBLP:conf/wg/BohmovaCMPW16}
B{\"{o}}hmov{\'{a}}, K., Chalopin, J., Mihal{\'{a}}k, M., Proietti, G.,
  Widmayer, P.: Sequence hypergraphs.
\newblock In: P.~Heggernes (ed.) Graph-Theoretic Concepts in Computer Science -
  42nd International Workshop, {WG} 2016, Istanbul, Turkey, June 22-24, 2016,
  Revised Selected Papers, \emph{Lecture Notes in Computer Science}, vol. 9941,
  pp. 282--294 (2016).
\newblock \doi{10.1007/978-3-662-53536-3_24}.
\newblock \urlprefix\url{https://doi.org/10.1007/978-3-662-53536-3_24}

\bibitem{10.1007/978-88-7642-475-5_50}
Bonamy, M., Bousquet, N., Hocquard, H.: Adjacent vertex-distinguishing edge
  coloring of graphs.
\newblock In: J.~Ne{\v{s}}et{\v{r}}il, M.~Pellegrini (eds.) The Seventh
  European Conference on Combinatorics, Graph Theory and Applications, pp.
  313--318. Scuola Normale Superiore, Pisa (2013)

\bibitem{DBLP:journals/jgt/BonamyP17}
Bonamy, M., Przybyło, J.: On the neighbor sum distinguishing index of planar
  graphs.
\newblock Journal of Graph Theory \textbf{85}(3), 669--690 (2017).
\newblock \doi{10.1002/jgt.22098}.
\newblock \urlprefix\url{https://doi.org/10.1002/jgt.22098}

\bibitem{BosekAADM}
Bosek, B., Czerwiński, S., Grytczuk, J., Rzążewski, P.: Harmonious coloring
  of uniform hypergraphs.
\newblock Applicable Analysis and Discrete Mathematics \textbf{10}, 73--87
  (2016)

\bibitem{DBLP:books/sp/CyganFKLMPPS15}
Cygan, M., Fomin, F.V., Kowalik, L., Lokshtanov, D., Marx, D., Pilipczuk, M.,
  Pilipczuk, M., Saurabh, S.: Parameterized Algorithms.
\newblock Springer (2015).
\newblock \doi{10.1007/978-3-319-21275-3}.
\newblock \urlprefix\url{https://doi.org/10.1007/978-3-319-21275-3}

\bibitem{RSA:RSA3}
Czumaj, A., Scheideler, C.: Coloring nonuniform hypergraphs: A new algorithmic
  approach to the general lovász local lemma.
\newblock Random Structures \& Algorithms \textbf{17}(3-4), 213--237 (2000).
\newblock \doi{10.1002/1098-2418(200010/12)17:3/4<213::AID-RSA3>3.0.CO;2-Y}.
\newblock
  \urlprefix\url{http://dx.doi.org/10.1002/1098-2418(200010/12)17:3/4<213::AID-RSA3>3.0.CO;2-Y}

\bibitem{Dehghan2016}
Dehghan, A.: On strongly planar not-all-equal 3sat.
\newblock Journal of Combinatorial Optimization \textbf{32}(3), 721--724
  (2016).
\newblock \doi{10.1007/s10878-015-9894-6}.
\newblock \urlprefix\url{https://doi.org/10.1007/s10878-015-9894-6}

\bibitem{DEHGHAN201325}
Dehghan, A., Sadeghi, M.R., Ahadi, A.: Algorithmic complexity of proper
  labeling problems.
\newblock Theoretical Computer Science \textbf{495}, 25 -- 36 (2013).
\newblock \doi{https://doi.org/10.1016/j.tcs.2013.05.027}.
\newblock
  \urlprefix\url{http://www.sciencedirect.com/science/article/pii/S0304397513003988}

\bibitem{DudekDMTCS}
Dudek, A., Wajc, D.: {On the complexity of vertex-coloring edge-weightings}.
\newblock {Discrete Mathematics \& Theoretical Computer Science} \textbf{{Vol.
  13 no. 3}} (2011).
\newblock \urlprefix\url{http://dmtcs.episciences.org/548}

\bibitem{dujmovic2015nonrepetitive}
Dujmovi{\'c}, V., Joret, G., Kozik, J., Wood, D.R.: Nonrepetitive colouring via
  entropy compression.
\newblock Combinatorica pp. 1--26 (2015)

\bibitem{Edwards2006}
Edwards, K., Hor{\v{n}}{\'a}k, M., Wo{\'{z}}niak, M.: On the
  neighbour-distinguishing index of a graph.
\newblock Graphs and Combinatorics \textbf{22}(3), 341--350 (2006).
\newblock \doi{10.1007/s00373-006-0671-2}.
\newblock \urlprefix\url{https://doi.org/10.1007/s00373-006-0671-2}

\bibitem{Erdos1}
Erd\H{o}s, P.: On a combinatorial problem.
\newblock Nordisk Mat. Tidskr. \textbf{11}, 5 -- 10 (1963)

\bibitem{Erdos2}
Erd\H{o}s, P.: {On a combinatorial problem II}.
\newblock Acta Math. Acad. Sci. Hungar. \textbf{15}, 445--447 (1964)

\bibitem{erdos1975problems}
Erd{\H o}s, P., Lov{\'a}sz, L.: Problems and results on 3-chromatic hypergraphs
  and some related questions.
\newblock Colloquia Mathematica Societas J\'anos Bolyai \textbf{10. Infinite
  and finite sets}(2), 609--627 (1975)

\bibitem{esperet2013acyclic}
Esperet, L., Parreau, A.: Acyclic edge-coloring using entropy compression.
\newblock European Journal of Combinatorics \textbf{34}(6), 1019--1027 (2013)

\bibitem{FIALA20112513}
Fiala, J., Golovach, P.A., Kratochvíl, J.: Parameterized complexity of
  coloring problems: Treewidth versus vertex cover.
\newblock Theoretical Computer Science \textbf{412}(23), 2513 -- 2523 (2011).
\newblock \doi{https://doi.org/10.1016/j.tcs.2010.10.043}.
\newblock
  \urlprefix\url{http://www.sciencedirect.com/science/article/pii/S0304397510006055}.
\newblock Theory and Applications of Models of Computation (TAMC 2009)

\bibitem{DBLP:journals/rsa/GrytczukKM13}
Grytczuk, J., Kozik, J., Micek, P.: New approach to nonrepetitive sequences.
\newblock Random Struct. Algorithms \textbf{42}(2), 214--225 (2013).
\newblock \doi{10.1002/rsa.20411}.
\newblock \urlprefix\url{https://doi.org/10.1002/rsa.20411}

\bibitem{DBLP:journals/dm/GyoriP09}
Gy{\H{o}}ri, E., Palmer, C.: A new type of edge-derived vertex coloring.
\newblock Discrete Mathematics \textbf{309}(22), 6344--6352 (2009).
\newblock \doi{10.1016/j.disc.2008.11.014}.
\newblock \urlprefix\url{https://doi.org/10.1016/j.disc.2008.11.014}

\bibitem{GYORI2008827}
Győri, E., Horňák, M., Palmer, C., Woźniak, M.: General
  neighbour-distinguishing index of a graph.
\newblock Discrete Mathematics \textbf{308}(5), 827 -- 831 (2008).
\newblock \doi{https://doi.org/10.1016/j.disc.2007.07.046}.
\newblock
  \urlprefix\url{http://www.sciencedirect.com/science/article/pii/S0012365X07005262}.
\newblock Selected Papers from 20th British Combinatorial Conference

\bibitem{hatami2005delta}
Hatami, H.: {$\Delta$+ 300 is a bound on the adjacent vertex distinguishing
  edge chromatic number}.
\newblock Journal of Combinatorial Theory, Series B \textbf{95}(2), 246--256
  (2005)

\bibitem{HENNING20131192}
Henning, M.A., Yeo, A.: 2-colorings in k-regular k-uniform hypergraphs.
\newblock European Journal of Combinatorics \textbf{34}(7), 1192 -- 1202
  (2013).
\newblock \doi{https://doi.org/10.1016/j.ejc.2013.04.005}.
\newblock
  \urlprefix\url{http://www.sciencedirect.com/science/article/pii/S0195669813000607}

\bibitem{DBLP:journals/gc/HocquardP17}
Hocquard, H., Przybyło, J.: On the neighbour sum distinguishing index of
  graphs with bounded maximum average degree.
\newblock Graphs and Combinatorics \textbf{33}(6), 1459--1471 (2017).
\newblock \doi{10.1007/s00373-017-1822-3}.
\newblock \urlprefix\url{https://doi.org/10.1007/s00373-017-1822-3}

\bibitem{DBLP:journals/dm/HornakS10}
Hor\v{n}{\'{a}}k, M., Sot{\'{a}}k, R.: General neighbour-distinguishing index
  via chromatic number.
\newblock Discrete Mathematics \textbf{310}(12), 1733--1736 (2010).
\newblock \doi{10.1016/j.disc.2009.11.018}.
\newblock \urlprefix\url{https://doi.org/10.1016/j.disc.2009.11.018}

\bibitem{NAV:NAV20041}
Jin, X.T., Yeh, R.K.: Graph distance-dependent labeling related to code
  assignment in computer networks.
\newblock Naval Research Logistics (NRL) \textbf{52}(2), 159--164 (2005).
\newblock \doi{10.1002/nav.20041}.
\newblock \urlprefix\url{http://dx.doi.org/10.1002/nav.20041}

\bibitem{Kalkowski12}
Kalkowski, M.: A note on a 1, 2 conjecture.
\newblock Electronic Journal of Combinatorics  (to appear)

\bibitem{KALKOWSKI6}
Kalkowski, M., Karoński, M., Pfender, F.: Vertex-coloring edge-weightings with
  integer weights at most 6.
\newblock Rostock. Math. Kolloq \textbf{64}, 39 -- 43 (2009)

\bibitem{KALKOWSKI2010347}
Kalkowski, M., Karoński, M., Pfender, F.: Vertex-coloring edge-weightings:
  Towards the 1-2-3-conjecture.
\newblock Journal of Combinatorial Theory, Series B \textbf{100}(3), 347 -- 349
  (2010).
\newblock \doi{https://doi.org/10.1016/j.jctb.2009.06.002}.
\newblock
  \urlprefix\url{http://www.sciencedirect.com/science/article/pii/S0095895609000586}

\bibitem{KARONSKI2004151}
Karoński, M., Łuczak, T., Thomason, A.: Edge weights and vertex colours.
\newblock Journal of Combinatorial Theory, Series B \textbf{91}(1), 151 -- 157
  (2004).
\newblock \doi{https://doi.org/10.1016/j.jctb.2003.12.001}.
\newblock
  \urlprefix\url{http://www.sciencedirect.com/science/article/pii/S0095895603001412}

\bibitem{KwasnyPrzybylo}
Kwaśny, J., Przybyło, J.: Asymptotically optimal bound on the adjacent vertex
  distinguishing edge choice number.
\newblock CoRR \textbf{abs/1705.01637} (2017).
\newblock \urlprefix\url{https://arxiv.org/abs/1705.01637}

\bibitem{Lovasz}
Lovász, L.: Coverings and colorings of hypergraphs.
\newblock In: Proc. 4th Southeastern Conf. on Comb., pp. 3--12 (1973)

\bibitem{DBLP:journals/tcom/Makansi87}
Makansi, T.: Transmitter-oriented code assignment for multihop packet radio.
\newblock {IEEE} Trans. Communications \textbf{35}(12), 1379--1382 (1987).
\newblock \doi{10.1109/TCOM.1987.1096728}.
\newblock \urlprefix\url{https://doi.org/10.1109/TCOM.1987.1096728}

\bibitem{molloy1998further}
Molloy, M., Reed, B.: Further algorithmic aspects of the local lemma.
\newblock In: Proceedings of the thirtieth annual ACM symposium on Theory of
  computing, pp. 524--529. ACM (1998)

\bibitem{Moret:1988:PNP:49097.49099}
Moret, B.M.E.: {Planar NAE3SAT is in P}.
\newblock SIGACT News \textbf{19}(2), 51--54 (1988).
\newblock \doi{10.1145/49097.49099}.
\newblock \urlprefix\url{http://doi.acm.org/10.1145/49097.49099}

\bibitem{moser2010constructive}
Moser, R.A., Tardos, G.: A constructive proof of the general {Lov{\'a}sz Local
  Lemma}.
\newblock Journal of the ACM (JACM) \textbf{57}(2), 11 (2010)

\bibitem{Person:2009:AHW:1496770.1496795}
Person, Y., Schacht, M.: Almost all hypergraphs without fano planes are
  bipartite.
\newblock In: Proceedings of the Twentieth Annual ACM-SIAM Symposium on
  Discrete Algorithms, SODA '09, pp. 217--226. Society for Industrial and
  Applied Mathematics, Philadelphia, PA, USA (2009).
\newblock \urlprefix\url{http://dl.acm.org/citation.cfm?id=1496770.1496795}

\bibitem{DBLP:journals/rsa/Przybylo15}
Przybyło, J.: Asymptotically optimal neighbour sum distinguishing colourings
  of graphs.
\newblock Random Struct. Algorithms \textbf{47}(4), 776--791 (2015).
\newblock \doi{10.1002/rsa.20553}.
\newblock \urlprefix\url{https://doi.org/10.1002/rsa.20553}

\bibitem{DBLP:journals/dam/Przybylo16}
Przybyło, J.: Neighbour sum distinguishing total colourings via the
  combinatorial nullstellensatz.
\newblock Discrete Applied Mathematics \textbf{202}, 163--173 (2016).
\newblock \doi{10.1016/j.dam.2015.08.028}.
\newblock \urlprefix\url{https://doi.org/10.1016/j.dam.2015.08.028}

\bibitem{DBLP:journals/dmtcs/PrzybyloW10}
Przybyło, J., Woźniak, M.: On a 1, 2 conjecture.
\newblock Discrete Mathematics {\&} Theoretical Computer Science
  \textbf{12}(1), 101--108 (2010).
\newblock \urlprefix\url{http://dmtcs.episciences.org/491}

\bibitem{DBLP:journals/combinatorics/PrzybyloW11}
Przybyło, J., Woźniak, M.: Total weight choosability of graphs.
\newblock Electr. J. Comb. \textbf{18}(1) (2011).
\newblock
  \urlprefix\url{http://www.combinatorics.org/Volume_18/Abstracts/v18i1p112.html}

\bibitem{RSA:RSA2}
Radhakrishnan, J., Srinivasan, A.: Improved bounds and algorithms for
  hypergraph 2-coloring.
\newblock Random Structures \& Algorithms \textbf{16}(1), 4--32 (2000).
\newblock \doi{10.1002/(SICI)1098-2418(200001)16:1<4::AID-RSA2>3.0.CO;2-2}.
\newblock
  \urlprefix\url{http://dx.doi.org/10.1002/(SICI)1098-2418(200001)16:1<4::AID-RSA2>3.0.CO;2-2}

\bibitem{Schaefer:1978:CSP:800133.804350}
Schaefer, T.J.: The complexity of satisfiability problems.
\newblock In: Proceedings of the Tenth Annual ACM Symposium on Theory of
  Computing, STOC '78, pp. 216--226. ACM, New York, NY, USA (1978).
\newblock \doi{10.1145/800133.804350}.
\newblock \urlprefix\url{http://doi.acm.org/10.1145/800133.804350}

\bibitem{DBLP:journals/corr/abs-1211-5122}
Seamone, B.: The 1-2-3 conjecture and related problems: a survey.
\newblock CoRR \textbf{abs/1211.5122} (2012).
\newblock \urlprefix\url{http://arxiv.org/abs/1211.5122}

\bibitem{DBLP:journals/dmtcs/SeamoneS13}
Seamone, B., Stevens, B.: Sequence variations of the 1-2-3 conjecture and
  irregularity strength.
\newblock Discrete Mathematics {\&} Theoretical Computer Science
  \textbf{15}(1), 15--28 (2013).
\newblock \urlprefix\url{http://dmtcs.episciences.org/635}

\bibitem{DBLP:journals/ipl/Skowronek-Kaziow08}
Skowronek{-}Kazi{\'{o}}w, J.: 1, 2 conjecture - the multiplicative version.
\newblock Inf. Process. Lett. \textbf{107}(3-4), 93--95 (2008).
\newblock \doi{10.1016/j.ipl.2008.01.006}.
\newblock \urlprefix\url{https://doi.org/10.1016/j.ipl.2008.01.006}

\bibitem{DBLP:journals/ipl/Skowronek-Kaziow12}
Skowronek{-}Kazi{\'{o}}w, J.: Multiplicative vertex-colouring weightings of
  graphs.
\newblock Inf. Process. Lett. \textbf{112}(5), 191--194 (2012).
\newblock \doi{10.1016/j.ipl.2011.11.009}.
\newblock \urlprefix\url{https://doi.org/10.1016/j.ipl.2011.11.009}

\bibitem{DBLP:journals/jco/Skowronek-Kaziow17}
Skowronek{-}Kazi{\'{o}}w, J.: Graphs with multiplicative vertex-coloring
  2-edge-weightings.
\newblock J. Comb. Optim. \textbf{33}(1), 333--338 (2017).
\newblock \doi{10.1007/s10878-015-9966-7}.
\newblock \urlprefix\url{https://doi.org/10.1007/s10878-015-9966-7}

\bibitem{oeis}
Sloane, N.J.A.: {The On-Line Encyclopedia of Integer Sequences, Sequence
  A000670}.
\newblock \url{https://oeis.org/A000670}.
\newblock Accessed 2018-03-23

\bibitem{TAHRAOUI20123011}
Tahraoui, M., Duchêne, E., Kheddouci, H.: Gap vertex-distinguishing edge
  colorings of graphs.
\newblock Discrete Mathematics \textbf{312}(20), 3011 -- 3025 (2012).
\newblock \doi{https://doi.org/10.1016/j.disc.2012.06.019}.
\newblock
  \urlprefix\url{http://www.sciencedirect.com/science/article/pii/S0012365X12002828}

\bibitem{Wan1997}
Wan, P.J.: Near-optimal conflict-free channel set assignments for an optical
  cluster-based hypercube network.
\newblock Journal of Combinatorial Optimization \textbf{1}(2), 179--186 (1997).
\newblock \doi{10.1023/A:1009759916586}.
\newblock \urlprefix\url{https://doi.org/10.1023/A:1009759916586}

\bibitem{Wang2008}
Wang, T., Yu, Q.: On vertex-coloring 13-edge-weighting.
\newblock Frontiers of Mathematics in China \textbf{3}(4), 581--587 (2008).
\newblock \doi{10.1007/s11464-008-0041-x}.
\newblock \urlprefix\url{https://doi.org/10.1007/s11464-008-0041-x}

\bibitem{conf}
Weisstein, E.W.: {Configuration. From MathWorld--A Wolfram Web Resource.}
\newblock \url{http://mathworld.wolfram.com/Configuration.html}.
\newblock Accessed 2018-03-23

\bibitem{DBLP:journals/jgt/WongZ11}
Wong, T., Zhu, X.: Total weight choosability of graphs.
\newblock Journal of Graph Theory \textbf{66}(3), 198--212 (2011).
\newblock \doi{10.1002/jgt.20500}.
\newblock \urlprefix\url{https://doi.org/10.1002/jgt.20500}

\bibitem{Wong2016}
Wong, T.L., Zhu, X.: Every graph is (2,3)-choosable.
\newblock Combinatorica \textbf{36}(1), 121--127 (2016).
\newblock \doi{10.1007/s00493-014-3057-8}.
\newblock \urlprefix\url{https://doi.org/10.1007/s00493-014-3057-8}

\bibitem{zhang2015color}
Zhang, P.: Color-Induced Graph Colorings.
\newblock Springer (2015)

\end{thebibliography}
\end{document}